\documentclass{article}
\usepackage{graphicx} 
\usepackage{multirow}
\usepackage{tabularx}
\usepackage[utf8]{inputenc}
\usepackage{fullpage}
\usepackage{amsmath}
\usepackage{amsthm}
\theoremstyle{plain}
\usepackage{amssymb}
\usepackage[ruled,vlined]{algorithm2e}
\usepackage{xcolor}
\usepackage{hyperref}
\usepackage{mathtools}
\usepackage{mdframed}

\newcommand{\ceil}[1]{\left\lceil #1 \right\rceil}
\newcommand{\ip}[1]{\iprod{#1}}
\newcommand{\iprod}[1]{\langle #1 \rangle}

\newcommand{\set}[1]{\left\{#1\right\}}

\newcommand{\paren}[1]{\left(#1\right)}

\newcommand{\floor}[1]{\left \lfloor#1 \right \rfloor}
\newcommand{\eqdef}{:=}

\newcommand{\N}{{\mathbb{N}}}

\newcommand{\condition}{\;\ifnum\currentgrouptype=16 \middle\fi|\;}

\newcommand{\eps}{\varepsilon}

\newcommand{\la}{\gets}

\makeatletter
\@ifpackageloaded{complexity}{}{%
\usepackage[classfont=roman,langfont=roman,funcfont=roman]{complexity}}
\makeatother


\newcommand{\argmax}{\operatorname*{argmax}}

\newcommand{\MathAlg}[1]{\mathsf{#1}}

\newcommand{\Ensuremath}[1]{\ensuremath{#1}\xspace}

\newcommand{\MathAlgX}[1]{\Ensuremath{\MathAlg{#1}}}








\renewcommand{\Pr}{{\mathrm {Pr}}}

\newcommand{\ppr}[2]{\Pr_{#1}\left[#2\right]}



\newcommand{\Ac}{\MathAlgX{A}}

\newcommand{\size}[1]{\left|#1\right|}







\def\cA{{\cal A}}

\def\cH{{\cal H}}

\def\cX{{\cal X}}
\def\cY{{\cal Y}}


\def\bbN{\N}

\def\bbR{{\mathbb R}}


\newcommand{\Tableofcontents}{
\thispagestyle{empty}
\pagenumbering{gobble}
\clearpage
\tableofcontents
\thispagestyle{empty}
\clearpage
\pagenumbering{arabic}
}








\newcommand{\IPConcave}{\mathsf{IPConcave}}
\newcommand{\IPConcaveHighDim}{\mathsf{IPConcaveHighDim}}

\newcommand{\PrivateIP}{\mathsf{PrivateIP}}

\newcommand{\tx} {\tilde{x}}

\newcommand{\tcX} {\tilde{\cX}}

\newcommand{\Lap}{{\rm Lap}}








\newcommand{\hx}{\hat{x}}



\newtheorem{lemma}{Lemma}
\newtheorem{theorem}{Theorem}
\newtheorem{definition}{Definition}

\newtheorem{fact}{Fact}
\newtheorem{corollary}{Corollary}

\newtheorem{remark}{Remark}

\newcommand{\calX}{{\cal X}}
\newcommand{\calY}{{\cal Y}}
\newcommand{\remove}[1]{}

\title{Differentially Private Quasi-Concave Optimization: Bypassing the Lower Bound and Application to Geometric Problems}


\author{Kobbi Nissim\thanks{Department of Computer Science, Georgetown University. E-mails: {\tt kobbi.nissim@georgetown.edu}, {\tt eliadtsfadia@gmail.com}, {\tt cy399@georgetown.edu}. Work is partially funded by NSF grant No.~2217678 and by a gift to Georgetown University.}
\and
Eliad Tsfadia$^\ast$
\and
Chao Yan$^\ast$}

\begin{document}

\maketitle
\begin{abstract}
    We study the sample complexity of differentially private optimization of quasi-concave functions. For a fixed input domain $\cX$, Cohen et al.~\cite{CohenLNSS22} (STOC 2023) proved that any generic private optimizer for low sensitive quasi-concave functions must have sample complexity $\Omega(2^{\log^*|\cX|})$.
    
    We show that the lower bound can be bypassed for a series of ``natural'' problems. 
    We define a new class of \emph{approximated} quasi-concave functions, and present a generic differentially private optimizer for approximated quasi-concave functions with sample complexity $\tilde{O}\paren{\log^*|\cX|}$. 
    As applications, we use our optimizer to privately select a center point of points in $d$ dimensions and \emph{probably approximately correct} (PAC) learn $d$-dimensional halfspaces. 
    In previous works, Bun et al.~\cite{BNSV15} (FOCS 2015) proved a lower bound of $\Omega(\log^*|\cX|)$ for both problems. 
    Beimel et al.~\cite{beimel2019private} (COLT 2019) and Kaplan et al.~\cite{kaplan2020private} (NeurIPS 2020) gave an upper bound of $\tilde{O}(d^{2.5}\cdot 2^{\log^*|\cX|})$ for the two problems, respectively. We improve the dependency of the upper bounds on the cardinality of the domain by presenting a new upper bound of $\tilde{O}(d^{5.5}\cdot\log^*|\cX|)$ for both problems. 
    To the best of our understanding, this is the first work to reduce the sample complexity dependency on $|\cX|$ for these two problems from exponential in $\log^* |\cX|$ to $\log^* |\cX|$. 
    
\end{abstract}

\Tableofcontents

\section{Introduction}
The training of machine learning models often uses sensitive personal information that requires privacy protection. We explore privacy-preserving techniques in machine learning, a line of research initiated by Kasiviswanathan et al~\cite{KLNRS08}, where a probably approximately correct (PAC) learner is required to preserve differential privacy with respect to its training data. Differential privacy ensures that a privacy attacker would not be able to detect the presence of an input datum.
Formally,
\begin{definition}[Differential Privacy~\cite{DMNS06}]
    Let $\calX$ be a data domain and $\calY$ be an output domain. A (randomized) mechanism $M \colon \cX^n \rightarrow \cY$ is $(\varepsilon,\delta)$-differentially private if for any pair of neighboring datasets $S,S'\in\calX^n$ (i.e., datasets that differ on a single entry), and any event $E\subseteq \calY$, it holds that
    $$
    \Pr[M(S)\in E]\leq e^{\varepsilon}\cdot\Pr[M(S')\in E] +\delta,
    $$
    where the probability is over the randomness of $M$.
\end{definition}

\emph{Sample complexity}, i.e., the required dataset size for a learning task, is one of the main questions in learning theory. 
Without privacy requirements, it is well-known that the sample complexity of PAC learning a concept class $C$ is $\Theta(VC(C))$~\cite{Valiant84}, where $VC(C)\leq \log |C|$ is the Vapnik–Chervonenkis dimension of $C$. 
Kasiviswanathan et al.~\cite{KLNRS08} provide a generic private learner showing that a sample complexity $O(\log|C|)$ suffices for private learning. The characterization of the sample complexity of private learning is still an open problem.

The main motivation of this work is to advance our understanding of the sample complexity of two fundamental geometric problems: privately finding a center point and privately learning halfspaces, both with input points over a finite Euclidean space $\cX^d$. In particular, we focus on how the sample complexity depends on the cardinality of $\cX$. Towards this goal, we revisit a primitive that was introduced in~\cite{BNS13b} and used in prior work on learning halfspaces: optimizing private quasi-concave functions.

\subsection{Existing Results}

Under \emph{pure} differential privacy (i.e., with $\delta=0$), it is known that the sample complexity of privately learning halfspaces with input points over a finite $d$-dimensional domain $\cX^d$ is $\Theta(d^2\log|\cX|)$.\footnote{The upperbound follows from the $O(\log |C|)$ upperbound of~\cite{KLNRS08} (considering that a hyperplne in $d$ dimensions can be represented by $d$ points). The lowerbound follows from~\cite{FX14} who showed that the  Littlestone dimension of halfspaces is $d^2\log|\cX|$.}
Beimel et al.~\cite{BNS13b} showed that the sample complexity of privately learning thresholds (i.e., $1$-dimensional halfspaces) under \emph{approximate} differential privacy (i.e., with $\delta>0$) can be significantly smaller than under pure differential privacy. 
They constructed a general differentially private algorithm $\cA_{RecConcave}$ that, given a quasi-concave low-sensitivity target function $Q \colon \cX^* \times \tcX \rightarrow \bbR$, requires only $\tilde{O}\paren{2^{O(\log^* |\tcX|)}}$ data elements to optimize it. 
I.e., given a dataset $S \in \cX^*$ of that size, $\cA_{RecConcave}$ computes  $\tx \in \tcX$ such that $Q(S,\tx)$ is close to $\max_{x \in \tcX}\set{Q(S,x)}$. The properties that $Q$ should satisfy are:
\begin{enumerate}
    \item \emph{Quasi-Concave}: For any dataset $S \in \cX^*$ and any $x' < x < x''$ in $\tcX$ it holds that $Q(S,x)\geq \min\{Q(S,x'),Q(S,x'')\}$,
    \item \emph{Low-Sensitivity}: For any neighboring $S,S' \in \cX^*$ and any $x\in \tcX$ it holds that $\size{Q(S,x) - Q(S',x)} \leq 1$.
\end{enumerate}

Bun et al.~\cite{BNSV15} and Kaplan et al.~\cite{KaplanLMNS19} reduced learning thresholds to the \emph{interior point} problem, where given a dataset $S \in \cX^n$ for a finite $\cX \subset \bbR$, the goal is to output $x$ such that $\min S \leq x \leq \max S$. Note that the latter problem can be solved privately by optimizing the following quasi-concave function:
\begin{align}\label{def:Q_Thr}
    Q_{IP}(S = \set{x_1,\ldots,x_n}, x) \eqdef \min\set{\size{\set{i \in [n] \colon x_i \leq x}},\size{\set{i \in [n] \colon x_i \geq x}}}.
\end{align}

Beimel et al.~\cite{beimel2019private} and Kaplan et al.~\cite{kaplan2020private} extended this approach to optimize high-dimensional functions. More specifically, suppose that we have a low-sensitivity quasi-concave $d$-dimensional function $Q\colon (\cX^d)^* \times \bbR^d \rightarrow \bbR$ that we would like to optimize.\footnote{A $d$-dimensional $Q$ is quasi-concave if for any dataset $S$, any points $x_1,\ldots,x_k$ and any $x$ in their convex hull, it holds that $Q(S,x) \geq \min_{i}\set{Q(S,x_i)}$.} Then we can apply the $1$-dimensional optimizer $\cA_{RecConcave}$ coordinate-by-coordinate: In step $i \in [d]$, given the results $\tx_1,\ldots,\tx_{i-1}$ of the previous optimizations, we compute $\tx_i$ by applying $\cA_{RecConcave}$ to optimize the $1$-dimensional function
\begin{align}\label{eq:Q_tx}
    Q_{\tx_1,\ldots,\tx_{i-1}}(S,x_i) \eqdef \max_{x_{i+1},\ldots,x_d \in \bbR} Q(S, (\tx_1,\ldots,\tx_{i-1}, x_i,x_{i+1},\ldots,x_d)),
\end{align}
where $x_i$ is chosen from a proper finite domain $\tcX_i$ with $\log^* |\tcX_i| \approx \log^* |\cX|$.

Beimel et al.~\cite{beimel2019private} applied this approach to privately find a center point by privately optimizing the Tukey-Depth function (Definition~\ref{def:TukeyDepth}), and Kaplan et al.~\cite{kaplan2020private} reduced private learning of halfspaces to privately optimizing a quasi-concave function $cdepth$ (Definition~\ref{def:depth-cdepth}). 
The resulting upper bounds for both problems are $\tilde{O}(d^{2.5}\cdot 2^{O(\log^*|\cX|)})$, where the term $2^{O(\log^*|\cX|)}$ in both results is due to the application of the $1$-dimensional quasi-concave optimizer $\cA_{RecConcave}$. 
Towards better understanding the sample complexity of these two problems, the question we ask in this paper is:
\begin{mdframed}
    Are there differentially private algorithms for center points and for PAC learning halfspaces with sample complexity $O(\poly(d)\cdot\log^*|\cX|)$?
\end{mdframed}

A series of works by Bun et al.~\cite{BNSV15}, Beimel et al.~\cite{beimel2019private} and Alon et al.~\cite{AlonLMM19} give a lower bound of $\Omega(d+\log^*|\cX|)$
on the sample complexity of privately selecting a center point and privately learning halfspaces. 
In particular, in terms of the dependency in $\log^*|\cX|$, there is an exponential gap between the upper and lower bounds established prior to this work.

In the $1$-dimensional case, this gap was recently closed. Kaplan et al.~\cite{KaplanLMNS19} presented an upper bound of $\tilde{O}((\log^* |\cX|)^{1.5})$ for learning thresholds, and more recently, Cohen et al.~\cite{CohenLNSS22} improved this upper bound to $\tilde{O}(\log^* |\cX|)$. However, it remained unclear how to extend these upper bounds to the high-dimensional case. The main issue is that these methods are tailored to optimize the specific interior point function $Q_{IP}$ (Equation~\ref{def:Q_Thr}) and do not provide a general method to optimize any quasi-concave function as $\cA_{RecConcave}$ does. Therefore, it was unclear how to apply the ideas there to the more general functions that are induced by optimizing high dimensional tasks coordinate-by-coordinate (Equation~\ref{eq:Q_tx}).

One could hope that a general private quasi-concave optimization can be done using an exponentially smaller sample complexity. Unfortunately, Cohen et al.~\cite{CohenLNSS22} proved that a sample complexity of $\Omega(2^{\log^*|\cX|})$ is necessary, in general. They interpreted this lowerbound as follows:

\medskip
{\em 
``We view this lower bound as having an important conceptual message, because private quasi-concave optimization is the main workhorse (or more precisely, the only known workhorse) for several important tasks, such as privately learning (discrete) halfspaces \cite{beimel2019private,kaplan2020private}. As such, current bounds on the sample complexity of privately learning halfspaces are exponential in $\log^* \size{\cX}$, but it is conceivable that this can be improved to a polynomial or a linear dependency. The lower bound means that either this is not true, or that we need to come up with fundamentally new algorithmic tools in order to make progress w.r.t. halfspace.''~\cite{CohenLNSS22}
}
\medskip


We show how to bypass the lower bound of Cohen et al.~\cite{CohenLNSS22} by only focusing on `natural' quasi-concave functions which include the functions that are induced in Equation~\ref{eq:Q_tx} for the high dimensional optimization tasks of \cite{beimel2019private,kaplan2020private}. As a result, we achieve the first exponential improvement in the dependency on $\log^*|\cX|$ and establish a new upper bound of $\tilde{O}(d^{5.5}\log^*|\cX|)$ for privately learning halfspaces and a selecting center point.\footnote{We are aware that a recent paper by Huang et al.~\cite{ACML-huang25b} claims to privately learn halfspaces using sample complexity of $\tilde{O}(d \cdot \log^* \size{\cX})$. In Appendix~\ref{sec:ACML}, we show that their algorithm is not differentially private.}

\subsection{Our Results}

\subsubsection{Private Optimization for Approximated Quasi-Concave Functions}

Our first contribution is a new method to bypass the lower bound of Cohen et al.~\cite{CohenLNSS22} for a natural class of quasi-concave functions which we call \emph{approximated} quasi-concave functions.

More formally, we say that a function $Q\colon \cX^* \times \tcX \rightarrow \bbR$ can be $(\alpha,\beta,m)$-approximated if a random subset $S'\subseteq S$ of size $|S'|=m$ satisfies $\Pr[\forall x,|Q(S,x)-Q(S',x)|\leq\alpha]\geq 1-\beta$. We consider $\alpha$ as an `acceptable' additive error (in particular, $\alpha \ll \max_{x \in \tcX}\set{Q(S,x)}$), and $\beta$ as a small enough confidence error.

We show how to reduce the task of optimizing such functions $Q$ to the $1$-dimensional interior point problem. The idea is to use the sample and aggregate approach of \cite{NRS07}: We partition the $n$-size input dataset $S$ into random $m$-size subsets, compute (non-privately) the optimal solution with respect to each subset, and then aggregate the $n/m$ solutions using a private interior point algorithm. By the approximation property of $Q$, with probability $1- \beta n /m \approx 1$, all the $n/m$ solutions have high value over $Q(S,\cdot)$ (i.e., at most $\alpha$-far from the optimum), and since the function is quasi-concave, then any interior point also has a high value. By using the $\tilde{O}(\log^* |\tcX|)$ interior-point algorithm of Cohen et al.~\cite{CohenLNSS22}, we obtain the following theorem: 




\begin{theorem}[Informal]\label{def:intro:IPConcave}
    There exists a $(\varepsilon,\delta)$-differentially private algorithm $\IPConcave$ that given a target function $Q \colon \cX^* \times \tcX \rightarrow \bbR$ that is quasi-concave and can be $(\alpha,\beta,m)$-approximated, and given a dataset $S$ of size at least $\tilde{O}_{\eps,\delta}(m\log^*|\tcX|)$, the algorithm outputs $\hat{x}\in \tcX$ such that $$\Pr\left[|Q(S,\hat{x})-\max_{x}Q(S,x)|\leq O(\alpha)\right]\geq 1-\tilde{O}(\beta\log^*|\tcX|).$$
\end{theorem}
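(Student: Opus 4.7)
The plan is to follow the sample-and-aggregate paradigm of~\cite{NRS07} and reduce the task to the private $1$-dimensional interior point problem. Given an input dataset $S$ of size $n = k \cdot m$ for an appropriate parameter $k$, I would first randomly partition $S$ into $k$ disjoint subsets $S_1, \ldots, S_k$ of size $m$. For each subset $S_i$, compute (non-privately) a candidate maximizer $x_i^* \in \argmax_{x \in \tcX} Q(S_i, x)$. Finally, invoke the $\tilde{O}(\log^* |\tcX|)$-sample private interior point algorithm of Cohen et al.~\cite{CohenLNSS22} on the multiset $\{x_1^*, \ldots, x_k^*\}$ to obtain the output $\hat{x}$.

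For utility, fix an (unknown) optimizer $x^* \in \argmax_x Q(S,x)$. Each $S_i$ is marginally a uniformly random $m$-size subset of $S$, so the $(\alpha,\beta,m)$-approximation property implies that with probability at least $1-\beta$ we have $|Q(S,x) - Q(S_i,x)| \le \alpha$ simultaneously for all $x \in \tcX$. On this event, the standard chain $Q(S, x_i^*) \ge Q(S_i, x_i^*) - \alpha \ge Q(S_i, x^*) - \alpha \ge Q(S, x^*) - 2\alpha$ shows that the candidate $x_i^*$ is $2\alpha$-optimal for $Q(S,\cdot)$. A union bound over the $k$ subsets shows that with probability at least $1 - k\beta$ every candidate is $2\alpha$-optimal. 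Since the interior point algorithm outputs $\hat{x} \in [\min_i x_i^*, \max_i x_i^*]$ (up to its own small failure probability), quasi-concavity of $Q(S,\cdot)$ yields $Q(S,\hat{x}) \ge \min\{Q(S, \min_i x_i^*),\, Q(S, \max_i x_i^*)\} \ge \max_x Q(S,x) - 2\alpha$, which is the desired $O(\alpha)$ utility bound.

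For privacy, swapping a single record of $S$ affects at most one subset $S_i$ and therefore at most one entry of the candidate multiset $\{x_1^*, \ldots, x_k^*\}$, since the candidate computation is a deterministic post-processing of each subset. Thus $\hat{x}$ is an $(\varepsilon,\delta)$-differentially private function of the candidate multiset by the guarantee of the interior point algorithm, and hence of $S$. Setting $k = \tilde{O}_{\varepsilon,\delta}(\log^* |\tcX|)$ as required by~\cite{CohenLNSS22} gives $n = k \cdot m = \tilde{O}_{\varepsilon,\delta}(m \log^* |\tcX|)$, and the failure probability becomes $\tilde{O}(\beta \log^* |\tcX|)$ after adding the interior point algorithm's own failure probability, as claimed.

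The main obstacle I anticipate is careful bookkeeping of the error parameters to thread them through cleanly: the approximation additive error $\alpha$, the interior point utility error, the privacy parameters $(\varepsilon,\delta)$, and the $\tilde{O}$ hidden factors (which absorb $\log(1/\delta)$, $1/\varepsilon$, and iterated-log terms from~\cite{CohenLNSS22}) all need to combine so that the final statement matches the claim. A minor subtlety is that the $k$ subsets of a uniform partition are not mutually independent; this is handled by appealing only to the marginal distribution of each $S_i$ (which is a uniformly random $m$-subset of $S$) and taking a union bound, rather than invoking any joint concentration argument.
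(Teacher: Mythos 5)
Your proposal is correct and matches the paper's own construction and proof of Theorem~\ref{thm:IPConcave} essentially step for step: the same random partition into $\tilde{O}(\log^*|\tcX|)$ subsets, the same per-subset non-private maximization, the same $2\alpha$ chain via the approximation property, the same union bound plus quasi-concavity at the interior point, and the same privacy argument that a single record affects only one candidate. No gaps to report.
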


Similar to the work of Beimel et al.~\cite{beimel2019private} and Kaplan et al.~\cite{kaplan2020private}, we extend this method to high-dimensional functions by applying $\IPConcave$ iteratively coordinate-by-coordinate.

\begin{theorem}[Informal]\label{def:intro:IPConcaveHighDim}
    There exists a $(\varepsilon,\delta)$-differentially private algorithm $\IPConcaveHighDim$ that given a high-dimensional target function $Q \colon (\cX^d)^* \times \bbR^d \rightarrow \bbR$ that is quasi-concave and can be $(\alpha,\beta,m)$-approximated, and given a dataset $S$ of size at least $\tilde{O}_{\varepsilon,\delta,\alpha,\beta}(m\log^*|\cX| \sqrt{d})$, the algorithm outputs $\hat{x}\in \bbR^d$ such that $$\Pr\left[|Q(S,\hat{x})-\max_{x \in \bbR^d}Q(S,x)|\leq O(\alpha d)\right]\geq 1-\tilde{O}(\beta(\log^*|\cX| + d)).$$
\end{theorem}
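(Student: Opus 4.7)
The plan is to extend Theorem~\ref{def:intro:IPConcave} to the high-dimensional setting via the coordinate-by-coordinate iteration pioneered in \cite{beimel2019private,kaplan2020private}. At step $i\in[d]$, having fixed the outputs $\tilde{x}_1,\ldots,\tilde{x}_{i-1}$ of the previous steps, I would define the one-dimensional target
\[
Q_i(S,x_i) \eqdef \sup_{x_{i+1},\ldots,x_d\in\bbR}\, Q\paren{S,(\tilde{x}_1,\ldots,\tilde{x}_{i-1},x_i,x_{i+1},\ldots,x_d)},
\]
and invoke $\IPConcave$ on $Q_i$ over a suitable finite discretized domain $\tcX_i$ of size $\mathrm{poly}(|\cX|,n,d)$, so that $\log^*|\tcX_i| = O(\log^*|\cX|)$.

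First I would verify that $Q_i$ inherits the required hypotheses from $Q$. Quasi-concavity of $Q_i(S,\cdot)$ follows because restricting $Q(S,\cdot)$ to the convex slice $\{\tilde{x}_1\}\times\cdots\times\{\tilde{x}_{i-1}\}\times\bbR^{d-i+1}$ preserves quasi-concavity, and quasi-concavity is preserved under partial supremum. For the approximation property, if a uniformly random $m$-subset $S'\subseteq S$ satisfies $|Q(S,x)-Q(S',x)|\le\alpha$ for all $x\in\bbR^d$ with probability at least $1-\beta$, then taking the supremum over the last $d-i$ coordinates preserves the pointwise bound, so $Q_i$ is $(\alpha,\beta,m)$-approximated as well.

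Next, for utility, I would track the potentials $v_i \eqdef \max_{x_i,\ldots,x_d} Q(S,(\tilde{x}_1,\ldots,\tilde{x}_{i-1},x_i,\ldots,x_d))$, noting that $v_1 = \max_{x\in\bbR^d} Q(S,x)$ and $v_{d+1} = Q(S,\tilde{x})$. Applying Theorem~\ref{def:intro:IPConcave} to $Q_i$, with failure probability $\tilde{O}(\beta\log^*|\cX|)$ we obtain $Q_i(S,\tilde{x}_i)\ge \max_{x_i} Q_i(S,x_i) - O(\alpha) = v_i - O(\alpha)$. Since $Q_i(S,\tilde{x}_i)=v_{i+1}$, telescoping over $i=1,\ldots,d$ yields $v_{d+1}\ge v_1 - O(\alpha d)$, and a union bound across the $d$ iterations produces the claimed failure probability $\tilde{O}(\beta(\log^*|\cX|+d))$. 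For privacy, I would apply advanced composition: running $\IPConcave$ with parameters $\eps' = \Theta(\eps/\sqrt{d\log(1/\delta)})$ and $\delta' = \Theta(\delta/d)$ at each step produces overall $(\eps,\delta)$-DP, and substituting into the per-call sample complexity $\tilde{O}_{\eps',\delta'}(m\log^*|\cX|)$ (which depends on $1/\eps'$) yields the stated $\tilde{O}_{\eps,\delta,\alpha,\beta}(m\log^*|\cX|\sqrt{d})$.

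The main obstacle I anticipate is bridging the continuous range of $Q_i$ with the finite-domain requirement of $\IPConcave$: the sample-and-aggregate step inside each call must non-privately compute an argmax of $Q_i(S',\cdot)$ over every $m$-size subsample, and these candidates must lie (or be roundable) in a common discretized $\tcX_i$ small enough that $\log^*|\tcX_i|=O(\log^*|\cX|)$ yet dense enough that the interior-point guarantee translates back to a near-optimal $\tilde{x}_i$ for $Q_i(S,\cdot)$. For the applications of interest (Tukey depth and $cdepth$), optima are witnessed by values derived from coordinates of input points, so a natural choice of $\tcX_i$ suffices; identifying the right abstraction of this combinatorial property for a generic approximable quasi-concave $Q$ is the main technical care point.
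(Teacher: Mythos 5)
Your approach is essentially the same as the paper's: coordinate-by-coordinate application of the one-dimensional optimizer, with the restricted functions $Q_i = \hat{Q}_{\tilde{x}_1,\ldots,\tilde{x}_{i-1}}$ inheriting quasi-concavity and the approximation property from $Q$, telescoping over the potentials $v_i$, and advanced composition with $\eps' = \Theta(\eps/\sqrt{d\log(1/\delta)})$ for privacy. The discretization issue you flag is also what the paper handles, via the ``proper finite domains'' assumption (Definition~\ref{def:proper-finite-domains}), treated as an input to the theorem rather than resolved generically.

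However, there is a genuine gap in your confidence bound. If each of the $d$ calls to $\IPConcave$ draws a fresh random partition (which is what Algorithm~\ref{alg:IPConcave} does as written), then each call independently pays the approximation-failure term $t\beta'$ in addition to the interior-point failure $\beta$, so the union bound over $d$ iterations gives $\tilde{O}(\beta\, d\log^*|\cX|)$, not the claimed $\tilde{O}(\beta(\log^*|\cX|+d))$; your final sentence asserting the latter is not supported by your per-iteration bound of $\tilde{O}(\beta\log^*|\cX|)$. The paper avoids this by drawing the partition $S_1,\ldots,S_t$ \emph{once} at the start of $\IPConcaveHighDim$ and reusing it in every iteration (Step~\ref{step:partition} is overridden in Step~\ref{step:IPConcave} of Algorithm~\ref{alg:IPConcaveHighDim}); because $\alpha$-approximation of $S$ with respect to the full $Q$ implies $\alpha$-approximation with respect to every restricted $\hat{Q}_{\hat{x}_1,\ldots,\hat{x}_{i-1}}$ regardless of the prefix, the event ``all $S_j$ are $\alpha$-approximations'' can be established once with probability $1 - t\beta'$ and then conditioned on for all $d$ iterations, leaving only the $d\beta$ interior-point failures to union-bound. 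This is exactly what the paper's remark after Algorithm~\ref{alg:IPConcaveHighDim} explains. To repair your argument, fix the partition before the loop and decouple the two failure sources; otherwise you prove a weaker confidence guarantee than stated.
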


Note that the approximation requirement here is a very strong guarantee. It requires that a random subset has a value similar to this of the original dataset for `every' input $x$. Fortunately, VC theory can provide such a guarantee for the functions that were used by \cite{beimel2019private,kaplan2020private}. As applications, we close the exponential gap of sample complexity for private center point and learning halfspaces, as described next.

\subsubsection{Applications}

\paragraph{Private Center Point}
For privately approximating the center point, we follow the approach of Beimel et al.~\cite{beimel2019private} to optimize the Tukey depth function, but now with our new $\IPConcaveHighDim$ method. By VC theory, the Tukey depth function (normalized by the dataset size) can be $(\alpha,\beta,m)$-approximated with $m = \tilde{O}((d + \log(1/\beta))/\alpha^2)$ (Lemma~\ref{lem:VC theory 1}). Furthermore, the Tukey center of every $n$-size dataset $S$ has Tukey depth at least $\frac{n}{d+1}$. Hence, by substituting $\alpha$ with $\alpha/d^2$ in Theorem~\ref{def:intro:IPConcaveHighDim}, we obtain the following result.

\begin{theorem}[Privately approximating the center point]\label{thm:intro:TD}
Let $\cX \subset \bbR$ be a finite domain.
There exists an $(\varepsilon,\delta)$-differentially private algorithm $\Ac \colon (\cX^d)^n \rightarrow \bbR^d$ that given an $n$-size dataset $S \in (\cX^d)^n$, for $n \geq \tilde{\Theta}_{\alpha,\beta,\varepsilon,\delta}\left(d^{5.5}\cdot\log^*|\calX|\right)$, $\Ac(S)$ outputs, with probability $1-\beta$, a point with Tukey depth in $S$ of at least $\frac{1 - \alpha}{d+1} \cdot n$.
\end{theorem}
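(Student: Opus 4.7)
The plan is to instantiate the high-dimensional approximated quasi-concave optimizer $\IPConcaveHighDim$ from Theorem~\ref{def:intro:IPConcaveHighDim} on the normalized Tukey depth function. Define $Q(S, x) \eqdef TD(S, x)/|S|$, where $TD(S,x)$ denotes the Tukey depth of $x$ in the multiset $S$ (Definition~\ref{def:TukeyDepth}). This normalization is essential so that the function takes values on a common scale across subsets of different sizes, which is what the approximation property of Theorem~\ref{def:intro:IPConcaveHighDim} requires. The function $Q$ is quasi-concave as a function of $x$ for every fixed $S$ (a standard property of Tukey depth, following from its definition as a minimum over halfspaces through $x$), and Lemma~\ref{lem:VC theory 1} (VC-based uniform convergence over the halfspace set system, which has VC dimension $\Theta(d)$) shows that $Q$ is $(\alpha', \beta', m)$-approximated for $m = \tilde{O}((d + \log(1/\beta'))/{\alpha'}^2)$.

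Next, I would calibrate the error parameter. The centerpoint theorem guarantees $\max_{x \in \bbR^d} Q(S, x) \geq 1/(d+1)$, so to certify a point whose (unnormalized) depth is at least $(1-\alpha)n/(d+1)$, the additive error of $\IPConcaveHighDim$ on $Q$ must be at most $\alpha/(d+1)$. Since Theorem~\ref{def:intro:IPConcaveHighDim} delivers additive error $O(\alpha' d)$, I would set $\alpha' = \Theta(\alpha/d^2)$. Substituting into the expression for $m$ yields $m = \tilde{O}(d^5/\alpha^2)$ (up to $\log(1/\beta')$ factors), and the sample complexity required by Theorem~\ref{def:intro:IPConcaveHighDim} becomes
\begin{equation*}
    n \;\geq\; \tilde{O}\bigl(m \cdot \log^*|\cX| \cdot \sqrt{d}\bigr) \;=\; \tilde{O}\bigl(d^{5.5} \log^*|\cX|\bigr),
\end{equation*}
matching the statement. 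For the confidence parameter, I would invoke the theorem with $\beta' = \beta / \tilde{O}(\log^*|\cX| + d)$ so that its output failure probability $\tilde{O}(\beta'(\log^*|\cX| + d))$ is at most $\beta$; this only affects the hidden polylogarithmic factors absorbed into $\tilde{\Theta}_{\alpha,\beta,\varepsilon,\delta}$.

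The algorithm $\Ac$ then simply runs $\IPConcaveHighDim$ with these parameters on input $S$ and outputs the resulting point $\hat{x}$. Differential privacy is inherited directly from Theorem~\ref{def:intro:IPConcaveHighDim}. On the $(1-\beta)$-probability success event, we have $Q(S, \hat{x}) \geq 1/(d+1) - O(\alpha' d) \geq (1-\alpha)/(d+1)$, i.e., $TD(S, \hat{x}) \geq (1-\alpha) n/(d+1)$, as required.

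The main obstacle I anticipate is the careful verification that the normalized Tukey depth satisfies the $(\alpha', \beta', m)$-approximation property \emph{uniformly} over all $x \in \bbR^d$ (and not merely over the finite discretization $\tcX^d$ on which $\IPConcaveHighDim$ actually searches). While uniform convergence for halfspaces is classical, one must ensure that the VC bound is invoked for the correct dual set system so that the guarantee transfers from the random subsample back to the full dataset for every candidate center simultaneously, and that the discretization chosen internally by $\IPConcaveHighDim$ is compatible with this guarantee. Once this is in place, plugging the numbers through as above gives the claimed bound.
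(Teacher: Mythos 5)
Your proposal follows the paper's proof essentially step for step: define $Q_{TD}(S,x) = TD_S(x)/|S|$, obtain the $(\alpha',\beta',m)$-approximation property from the VC bound of Lemma~\ref{lem:VC theory 1}, get quasi-concavity from Fact~\ref{fact:tukey depth of point in convex hull}, use the centerpoint guarantee $\max_x Q_{TD}(S,x) \geq 1/(d+1)$ (Fact~\ref{fact:Tukey depth of median}) to calibrate $\alpha' = \Theta(\alpha/d^2)$, and plug into $\IPConcaveHighDim$ to get $m = \tilde{O}(d^5/\alpha^2)$ and $n = \tilde{O}(d^{5.5}\log^*|\cX|)$. The confidence and privacy rescalings you describe match the paper's substitutions in passing from Theorem~\ref{thm:TD:1} to Theorem~\ref{thm:TD}.

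The one item you flag as an unresolved ``obstacle'' is in fact a needed ingredient that the paper supplies explicitly, and you should not leave it open. The formal version of $\IPConcaveHighDim$ (Theorem~\ref{thm:IPConcaveHighDim}) requires, as part of the input, a family of \emph{proper finite domains} $\tcX_1,\dots,\tcX_d$ satisfying Definition~\ref{def:proper-finite-domains}; these are not chosen internally by the algorithm, and without them the coordinate-by-coordinate optimization has no finite grid over which to run $\PrivateIP$. The paper obtains them from the domain-extension result of Beimel et al.\ (Lemma~\ref{lem:domainExtension}), which yields $|\tcX_i| \leq (d|\cX|^{d^2(d+1)})^{2^d}$ and hence $\log^* X = O(\log^*(|\cX| + d))$; this is exactly what keeps the sample complexity proportional to $\log^*|\cX|$ rather than something larger, and it is what makes the discretization compatible with the VC bound. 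By contrast, your other worry is already handled: Lemma~\ref{lem:VC theory 1} is stated uniformly over all $p \in \bbR^d$, so no separate argument is needed to transfer the approximation guarantee from the subsample to the full dataset simultaneously for every candidate center.
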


\paragraph{Private Halfspace Learning}

For privately learning halfspaces, we follow the approach of Kaplan et al.~\cite{kaplan2020private} to reduce the task to private feasibility problem and solve this problem by optimizing a quasi-concave function  $cdepth$ (Definition~\ref{def:depth-cdepth}), but now using our $\IPConcaveHighDim$ method. Similarly to the Tukey depth function, the $cdepth$ function (normalized by the dataset size) can also be $(\alpha,\beta,m)$-approximated for $m = \tilde{O}((d + \log(1/\beta))/\alpha^2)$ (Corollary~\ref{corollary:vc for cdepth}). 
An $\alpha/d$ error in the $cdepth$ function is translated to an $\alpha$ error for learning halfspace and the linear feasibility problem. Therefore, we achieve our learner by substituting $\alpha$ with $\alpha/d^2$ in Theorem~\ref{def:intro:IPConcaveHighDim}.

\begin{theorem}[Privately learning halfspaces]\label{thm:intro:halfspaces}
    Let $\cX \subset \bbR$ be a finite domain. 
    There exists an $(\eps,\delta)$-differentially private $(\alpha,\beta)$-PAC learner for halfspaces over examples from $\cX^d$ with sample complexity $n = \tilde{O}_{\alpha,\beta,\varepsilon,\delta}\left(d^{5.5}\cdot\log^*|\cX|\right)$.
\end{theorem}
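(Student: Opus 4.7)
The plan is to reduce PAC-learning halfspaces over $\cX^d$ to a private linear feasibility problem exactly as in Kaplan et al.~\cite{kaplan2020private}, and then to solve the feasibility problem by privately maximizing the quasi-concave function $cdepth$ (Definition~\ref{def:depth-cdepth}). The only change from~\cite{kaplan2020private} is that the underlying high-dimensional quasi-concave optimizer is replaced by our $\IPConcaveHighDim$ (Theorem~\ref{def:intro:IPConcaveHighDim}), which handles approximated quasi-concave functions with only a $\log^*|\cX|$ dependence instead of $2^{\log^*|\cX|}$. Privacy follows from the privacy guarantee of $\IPConcaveHighDim$ together with post-processing, since the labeled training sample is mapped to an instance of $cdepth$ by a dataset-local transformation that preserves the neighbor relation.

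The two things that must be verified for this substitution to go through are (i) quasi-concavity of $cdepth$, which is established and used in~\cite{kaplan2020private}, and (ii) that the normalized function $cdepth/n$ can be $(\alpha,\beta,m)$-approximated with $m = \tilde{O}((d+\log(1/\beta))/\alpha^2)$. Item (ii) is the content of Corollary~\ref{corollary:vc for cdepth}: since halfspaces in $\bbR^d$ have VC dimension $O(d)$ and $cdepth$ is built from halfspace counts, a uniformly random subsample of that size preserves the value of $cdepth/n$ up to additive error $\alpha$ uniformly over all query points with probability at least $1-\beta$.

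Having these in hand, I would invoke $\IPConcaveHighDim$ on $cdepth/n$ with accuracy parameter $\alpha' \eqdef \alpha/d^2$. By Theorem~\ref{def:intro:IPConcaveHighDim} the output $\hat x \in \bbR^d$ satisfies $|cdepth(S,\hat x)/n - \max_{x} cdepth(S,x)/n| \leq O(\alpha' d) = O(\alpha/d)$ with probability $1 - \tilde O(\beta(\log^*|\cX|+d))$. An $O(\alpha/d)$ additive error in the normalized $cdepth$ translates, via the~\cite{kaplan2020private} reduction, to at most an $\alpha$ generalization error for the resulting halfspace, yielding the desired $(\alpha,\beta)$-PAC guarantee after rescaling $\beta$ by polynomial factors. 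The sample complexity required by Theorem~\ref{def:intro:IPConcaveHighDim} at accuracy $\alpha'$ is $\tilde{O}_{\eps,\delta,\alpha,\beta}(m \cdot \log^*|\cX| \cdot \sqrt{d})$ with $m = \tilde{O}(d/\alpha'^2) = \tilde{O}(d^5/\alpha^2)$, which multiplies out to the claimed $\tilde{O}_{\alpha,\beta,\eps,\delta}(d^{5.5}\cdot\log^*|\cX|)$.

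The main obstacle I anticipate is bookkeeping rather than conceptual: one must carefully track how error in $cdepth$ propagates back through the feasibility-to-learning reduction of~\cite{kaplan2020private} so that the final error is $\alpha$ in the PAC sense and not merely in the $cdepth$ value, and one must ensure that the grid discretization used by the reduction is compatible with the finite coordinate domain $\tcX$ required by $\IPConcaveHighDim$, so that $\log^*|\tcX|$ is still $O(\log^*|\cX|)$ and no hidden $d$-factors creep into the final exponent. Once these are checked, the theorem is assembled directly from Theorem~\ref{def:intro:IPConcaveHighDim}, Corollary~\ref{corollary:vc for cdepth}, and the reduction of~\cite{kaplan2020private}.
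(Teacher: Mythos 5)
Your proposal follows essentially the same route as the paper: reduce PAC learning of halfspaces to linear feasibility via the Kaplan et al.\ transformation, optimize the normalized $cdepth$ function with $\IPConcaveHighDim$ using the VC-based approximation bound of Corollary~\ref{corollary:vc for cdepth} and quasi-concavity of $cdepth$ (Fact~\ref{fact:convex of cdepth}), set the per-coordinate accuracy to $\alpha/\Theta(d^2)$ so that Fact~\ref{fact:cdepth to depth} recovers a $(1-\alpha)n$ lower bound on $depth$, and close with the generalization bound of Theorem~\ref{thm:learn vc}. The two ``bookkeeping'' worries you flag are exactly what the paper handles explicitly: the domain extension of Definition~\ref{def:domain for linear feasibility} and Theorem~\ref{thm:domain for linear feasibility} supplies proper finite domains with $\log^* X \in O(\log^*(|\cX|+d))$, and the $\alpha \mapsto \alpha/(4d^2)$ plus $\eps \mapsto \eps/\sqrt{2d\ln(2/\delta)}$ substitutions in Theorem~\ref{thm:privateLinearFeasibility} produce the $d^{5.5}$ factor you computed.
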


\paragraph{Learning over concept class of VC dimension $1$.}
We remark that since every concept class with VC dimension 1 can be embedded in 3-dimensional halfspaces class \cite{alon2017sign,AlonHW87}, our results imply that every concept class with VC dimension 1 can be privately learned with sample size $\tilde{O}(\log^*|\cX|)$.

\begin{corollary}
    For any concept class $\mathcal{C}$ with VC dimension 1 and input domain $\cX$, there exists an $(\varepsilon,\delta)$-differentially private algorithm that can $(\alpha,\beta)$-PAC learn it with sample size $\tilde{O}_{\varepsilon,\delta,\alpha,\beta}(\log^*|\cX|)$.
\end{corollary}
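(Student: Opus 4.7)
The plan is to apply Theorem~\ref{thm:intro:halfspaces} with $d=3$ after embedding the given VC-$1$ concept class into the class of halfspaces in $\bbR^3$. Concretely, by the results of Alon, Moran and Yehudayoff~\cite{alon2017sign} (building on~\cite{AlonHW87}), for any concept class $\cC$ of VC dimension $1$ over a domain $\cX$ there exists an embedding $\phi \colon \cX \rightarrow \bbR^3$ and, for every $c \in \cC$, a halfspace $h_c$ on $\bbR^3$ such that $h_c(\phi(x)) = c(x)$ for every $x \in \cX$. Since $|\phi(\cX)| \leq |\cX|$, the image lies in a finite set, and after a standard discretization we may view $\phi(\cX)$ as a subset of $\cX'^3$ for some finite $\cX' \subset \bbR$ with $|\cX'| \leq \poly(|\cX|)$; in particular $\log^*|\cX'| = \log^*|\cX| + O(1)$.

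The learner proceeds as follows. Given a labeled training sample $S = ((x_1,y_1),\ldots,(x_n,y_n)) \in (\cX \times \zo)^n$, first apply the embedding entrywise to obtain $\tilde{S} = ((\phi(x_1),y_1),\ldots,(\phi(x_n),y_n)) \in (\cX'^3 \times \zo)^n$; then invoke the private halfspace learner of Theorem~\ref{thm:intro:halfspaces} with $d = 3$ on $\tilde{S}$ to obtain a hypothesis halfspace $h$; finally output the classifier $\hat{c}(x) = h(\phi(x))$. Privacy follows because $\phi$ is applied entrywise and is oblivious to the data, so neighboring datasets map to neighboring datasets and post-processing preserves $(\eps,\delta)$-differential privacy. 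Utility follows because the embedding is label-preserving: the realizable (or agnostic) PAC guarantee of the halfspace learner on $\tilde{S}$ translates verbatim into an $(\alpha,\beta)$-PAC guarantee for $\hat{c}$ on the original distribution.

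Plugging $d = 3$ into the sample complexity bound of Theorem~\ref{thm:intro:halfspaces} gives
\[
n = \tilde{O}_{\alpha,\beta,\varepsilon,\delta}\!\left(3^{5.5} \cdot \log^*|\cX'|\right) = \tilde{O}_{\alpha,\beta,\varepsilon,\delta}\!\left(\log^*|\cX|\right),
\]
which is the claimed bound. The only non-trivial step is verifying that the discretization underlying the embedding preserves both the combinatorial identity $h_c \circ \phi = c$ and the cardinality bound $\log^*|\cX'| = O(\log^*|\cX|)$; this is where I anticipate the only real work, since the Alon--Moran--Yehudayoff construction is combinatorial and one must exhibit a representation whose coordinates come from a set of size polynomial (or at worst singly-exponential) in $|\cX|$, either directly or via a standard perturbation/rounding argument that preserves the sign pattern of the halfspaces. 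Once this is in place, the remainder of the proof is an immediate reduction to Theorem~\ref{thm:intro:halfspaces}.
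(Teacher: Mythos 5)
Your proof follows exactly the same route the paper takes (and sketches only in a one-line remark preceding the corollary): embed the VC-dimension-$1$ class into $3$-dimensional halfspaces via \cite{alon2017sign,AlonHW87}, and then invoke the private halfspace learner of Theorem~\ref{thm:intro:halfspaces} with $d=3$, with privacy preserved because the embedding is data-oblivious entrywise preprocessing. You also correctly flag the one point the paper leaves implicit — that the embedding must be realized over a finite coordinate domain $\cX'$ with $\log^*|\cX'| = O(\log^*|\cX|)$ — which indeed holds since the image has at most $|\cX|$ points and their coordinates can be chosen from a set of size $\poly(|\cX|)$.
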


\subsection{Open Questions}

In this work, we present a general private quasi-concave optimization method that bypasses the lower bound of Cohen et al.~\cite{CohenLNSS22} and enables to achieve the first $d^{O(1)} \cdot\log^* |\cX|$ sample complexities for fundamental problems such as privately approximating the center point and privately learning halfspaces over a finite euclidean input domain $\cX^d$. But while our result bridge the exponential gap in $\log^* |\cX|$, the dependency on the dimension $d$ has increased from $d^{2.5}$ (\cite{beimel2019private,kaplan2020private}) to $d^{5.5}$ (Theorems~\ref{thm:intro:TD} and \ref{thm:intro:halfspaces}). It remains open to understand if it is possible to reduce the dependency on $d$ while avoiding an exponential blow-up in $\log^* |\cX|$.


\section{Preliminaries}

\subsection{Notations}

We denote a subset $S$ of $X$ by  $S \subseteq X$, and a multi-set $S$ of elements in $X$ by $S \in X^*$ (or $S \in X^n$ if $S$ is of size $n$). For a multi-set $S \in X^*$ and a set $T \subseteq X$, we let $S \cap T$ be the multi-set of all the elements in $S$ (including repetitions) that belong to $T$. A subset $S'$ of a multi-set $S$ (denoted by $S' \subseteq S$) refers to a sub-multiset, i.e., if $S = \set{x_1,\ldots,x_n}$ then there exists a set of indices $I = \set{i_1,\ldots,i_m} \subseteq [n]$ such that $S' = \set{x_{i_1},\ldots,x_{i_m}}$. A random $m$-size subset $S'$ of a multi-set $S$ is specified by a random $m$-size set of indices $I \subseteq [n]$.

Throughout this paper, a dataset refers to a multi-set.

\subsection{Learning Theory}
\begin{definition}[Vapnik-Chervonenkis dimension~\cite{VC,Haussler1986EpsilonnetsAS}]
    Let $X$ be a set and $R$ be a set of subsets of $X$. Let $S\subseteq X$ be a subset of $X$. Define
    $
    \Pi_R(S)=\{S\cap r \mid r\in R\}.
    $
    If $|\Pi_R(S)|=2^{|S|}$, then we say $S$ is shattered by $R$. The Vapnik-Chervonenkis dimension of $(X,R)$ is the largest integer $d$ such that there exists a subset $S$ of $X$ with size $d$ that is shattered by $R$.
\end{definition}


\begin{definition}[$\alpha$-approximation\footnote{Commonly called $\varepsilon$-approximation. We use $\alpha$-approximation as $\varepsilon$ is used as a parameter of differential privacy.}~\cite{VC,Haussler1986EpsilonnetsAS}]\label{def:alpha-approx}
    Let $X$ be a set and $R$ be a set of subsets of $X$. 
    Let $S \in X^*$ be a finite multi-set of elements in $X$.
    For any $0\leq\alpha\leq 1$ and $S'\subseteq S$, $S'$ is an $\alpha$-approximation of $S$ for $R$ if for all $r\in R$, it holds that
    $
    \left|\frac{|S\cap r|}{|S|}-\frac{|S'\cap r|}{|S'|}\right|\leq \alpha.
    $
\end{definition}
    
\begin{theorem}[\cite{VC,Haussler1986EpsilonnetsAS}]\label{thm:alphaApprox}
    Let $(X,R)$ have VC dimension $d$. Let $S\subseteq X$ be a subset of $X$. Let $0<\alpha,\beta\leq 1$. Let $S'\subseteq S$ be a random subset of $S$ with size at least 
    $ O\left(\frac{d\cdot\log\frac{d}{\alpha}+\log\frac{1}{\beta}}{\alpha^2}\right).
    $
    Then with probability at least $1-\beta$, $S'$ is an $\alpha$-approximation of $S$ for $R$.\footnote{Although the theorem is stated for sets $S \subseteq X$, it also holds for multisets $S \in X^*$. To see it, fix an $n$-size multiset $S \in X^n$ and transform the domain set $X=\{x_1,\dots,x_k\}$ to an extended domain set $X'=\{x_{1,1},\dots,x_{1,n},\dots,x_{k,1},\dots,x_{k,n}\}$. For every $r\in R$, we can transform it to $r'=\{x_{i,1},\dots, x_{i,n}: x_i\in r\}$, and let $R'=\{r':r\in R\}$. We have $VC(X',R')=VC(X,R)$. Now each element $x_i$ that appears $t$ times in $S$ can be replaced by $x_{i,1},x_{i,2},\dots, x_{i,t}$. This reduction transforms all multiset setting to a corresponding set setting.}
 \end{theorem}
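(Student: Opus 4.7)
The plan is to prove this via the classical symmetrization argument (``ghost sample'' trick) combined with the Sauer--Shelah lemma and a Hoeffding-type concentration bound. As a preliminary simplification, I would invoke the reduction described in the footnote to assume $S$ is a set of distinguishable elements (rather than a multi-set), so that drawing a random sub-multiset amounts to drawing a uniformly random subset. Write $N = |S|$ and let $m$ denote the target subset size, to be pinned down at the end.

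The first key step is the ghost sample reduction. Let $S'$ be the random $m$-subset of interest and let $S''$ be an independent fresh $m$-subset drawn from $S$. For each $r \in R$, define the empirical deviations $\Delta_r(S') \eqdef |S \cap r|/|S| - |S' \cap r|/|S'|$, and analogously for $S''$. A standard Chebyshev/second-moment argument shows that whenever $m \geq C/\alpha^2$ for a suitable constant $C$, the event $\set{\exists r: |\Delta_r(S')| > \alpha}$ implies, with probability at least $1/2$ over $S''$, the event $\set{\exists r: |\Delta_r(S') - \Delta_r(S'')| > \alpha/2}$. Hence it suffices to bound the probability of the latter two-sample event by $\beta/2$.

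The second key step is a permutation / Sauer--Shelah bound on the two-sample deviation. Condition on the union $T = S' \cup S''$; then conditionally $S'$ is a uniformly random half of $T$. For every fixed $r \in R$, the deviation $|\Delta_r(S') - \Delta_r(S'')|$ equals the normalized discrepancy of a random balanced split of the $2m$ elements of $T$ with respect to the indicator of $r$, which by Hoeffding's inequality for sampling without replacement exceeds $\alpha/2$ with probability at most $2\exp(-c \alpha^2 m)$. Although $R$ may be infinite, only the restrictions $r \cap T$ matter, and by the Sauer--Shelah lemma the number of distinct such restrictions is at most $(2em/d)^d$. A union bound therefore gives
\[
\pr{\exists r \in R \colon |\Delta_r(S') - \Delta_r(S'')| > \alpha/2 \mid T} \leq 2\paren{\frac{2em}{d}}^d \exp\paren{-c \alpha^2 m}.
\]

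Setting the right-hand side to at most $\beta/2$ and solving for $m$ yields the claimed sample complexity $m = O\paren{(d\log(d/\alpha) + \log(1/\beta))/\alpha^2}$. The step I expect to be the main obstacle is the symmetrization reduction: the theorem samples from a finite universe \emph{without replacement}, so one must verify that the variance of $\Delta_r(S')$ is still $O(1/m)$ in this regime (it is, in fact, slightly smaller than the i.i.d.\ variance, so the Chebyshev argument goes through) and that the ghost sample $S''$ can be coupled with $S'$ in a way that preserves the union bound over $r$ after conditioning on $T$. Handling the multi-set case via the footnote's blow-up also requires noting that the effective domain size grows but both the VC dimension and the final bound are unaffected, which is why the statement is equivalent to the set-based version.
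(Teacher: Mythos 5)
The paper does not actually prove this statement: it is quoted as a known result from the cited references (Vapnik--Chervonenkis 1971, Haussler--Welzl 1986), and the only content original to the paper is the footnote observing the multiset-to-set reduction. So there is no ``paper's own proof'' to compare against. Your proposal is the classical symmetrization (ghost sample) $+$ Sauer--Shelah $+$ Hoeffding-without-replacement argument, which is indeed the standard route to this theorem in the cited literature and is the correct approach. You also correctly invoke the footnote's reduction to handle the multiset version.

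One technical issue in your write-up that would need repair in a full proof: you introduce $S''$ as an \emph{independent} random $m$-subset of $S$ and then condition on $T = S' \cup S''$ and assert that ``conditionally $S'$ is a uniformly random half of $T$.'' This is only true if $S'$ and $S''$ are disjoint, which independent $m$-subsets of a finite $S$ typically are not. The clean fix is to set up the coupling the other way around: first draw a random $2m$-subset $T \subseteq S$, then split $T$ uniformly at random into disjoint halves $S'$ and $S''$. This gives $S'$ exactly the right marginal distribution (a uniform random $m$-subset of $S$), makes the ``random balanced split'' step literally true as stated, and the Chebyshev symmetrization step still goes through since $\Var[\Delta_r(S'')] = O(1/m)$ holds under this coupling as well. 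With that adjustment the remaining steps (bounding the number of restrictions $\{r\cap T : r\in R\}$ by Sauer--Shelah, Hoeffding for a random balanced split, and solving for $m$) are exactly right, and you correctly identify the without-replacement variance as the point to double-check.
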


\begin{definition}[Generalization and empirical error]
    Let $\mathcal{D}$ be a distribution, $c$ be a concept and $h$ be a hypothesis. The error of $h$ w.r.t.~$c$ over $\mathcal{D}$ is defined as 
    $$
    error_{\mathcal{D}}(c,h)=\Pr_{x\sim\mathcal{D}}[c(x)\neq h(x)].
    $$
    For a finite dataset $S$, the error of $h$ w.r.t.~$c$ the over $S$ is defined as 
    $$
    error_S(c,h)=\frac{\left|\set{x\in S \mid c(x)\neq h(x)}\right|}{|S|}.
    $$
\end{definition}

\begin{theorem}[\cite{BlumerEhHaWa89,kaplan2020private}]\label{thm:learn vc}
    Let $\mathcal{C}$ be a concept class and let $\mathcal{D}$ be a distribution. Let $\alpha,\beta>0$, and $m\geq \frac{48}{\alpha}\left(10VC(\mathcal{C})\log(\frac{48e}{\alpha})+\log(\frac{5}{\beta})\right)$. Let $S$ be a sample of $m$ points drawn i.i.d.\ from $\mathcal{D}$. 
    Then
    $$\Pr[\exists c,h\in\mathcal{C}~\mbox{s.t.}~error_{S}(c,h)\leq\alpha/10 ~\mbox{and}~error_{\mathcal{D}}(c,h)\geq\alpha]\leq \beta.$$
\end{theorem}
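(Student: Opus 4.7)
The plan is to reduce the statement to a uniform convergence statement for a single set system (the symmetric difference class) and then invoke the standard relative VC inequality. First I would define $\mathcal{R} \eqdef \set{c \triangle h \colon c,h \in \mathcal{C}}$, where $c \triangle h = \set{x \colon c(x) \neq h(x)}$. With this notation, $error_{\mathcal{D}}(c,h) = \Pr_{x \sim \mathcal{D}}[x \in c \triangle h]$ and $error_S(c,h) = |S \cap (c \triangle h)|/|S|$, so the event in the theorem is exactly
\[
\exists\, r \in \mathcal{R} \suchthat \frac{|S \cap r|}{|S|} \leq \frac{\alpha}{10} \;\text{ and }\; \Pr_{\mathcal{D}}[r] \geq \alpha.
\]
This rephrases the quantification over pairs $(c,h)$ as a one-sided uniform convergence statement for the single class $\mathcal{R}$.

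Next I would bound $VC(\mathcal{R})$ in terms of $VC(\mathcal{C})$. By Sauer--Shelah, for any $m$-point set $T$, $|\Pi_{\mathcal{C}}(T)| \leq (em/d)^d$ where $d = VC(\mathcal{C})$; since every element of $\Pi_{\mathcal{R}}(T)$ is a symmetric difference of two elements of $\Pi_{\mathcal{C}}(T)$, we get $|\Pi_{\mathcal{R}}(T)| \leq (em/d)^{2d}$. Comparing with $2^m$ for shattering yields $VC(\mathcal{R}) \leq c \cdot VC(\mathcal{C})$ for an absolute constant $c$; the choice $c = 10$ matches the $10 \cdot VC(\mathcal{C})$ that appears in the hypothesis on $m$.

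Then I would invoke the standard ``relative Chernoff-type'' uniform convergence bound (Vapnik--Chervonenkis; see also Anthony--Bartlett): if $\mathcal{R}$ has VC dimension $d'$ and $m \geq \frac{C}{\alpha}\paren{d' \log(1/\alpha) + \log(1/\beta)}$ for an appropriate absolute constant $C$, then with probability at least $1 - \beta$ over an i.i.d.\ sample $S$ of size $m$ from $\mathcal{D}$, every $r \in \mathcal{R}$ with $\Pr_{\mathcal{D}}[r] \geq \alpha$ satisfies $|S \cap r|/m \geq \alpha/10$ (equivalently, no $r$ can have true mass $\geq \alpha$ yet empirical mass $\leq \alpha/10$). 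Plugging in $d' \leq 10 \cdot VC(\mathcal{C})$ and matching constants to the hypothesis $m \geq \tfrac{48}{\alpha}\paren{10 VC(\mathcal{C}) \log(48e/\alpha) + \log(5/\beta)}$ gives the claim. This relative bound is proved by the classical ghost-sample symmetrization trick: replace the probability of the bad event by twice the probability of the two-sample event ``$|S \cap r|/m \leq \alpha/10$ and $|S' \cap r|/m \geq \alpha/2$,'' then condition on $S \cup S'$ (whose projection onto $\mathcal{R}$ has at most $(2em/d')^{d'}$ distinct dichotomies by Sauer--Shelah), and finally bound the conditional probability for each fixed $r$ by a Chernoff/hypergeometric tail on the random split of $S \cup S'$ into $S$ and $S'$, followed by a union bound.

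The conceptually easy but technically fiddly part is choreographing the constants so that the Sauer--Shelah bound on $VC(\mathcal{R})$, the ghost-sample inflation factor, and the hypergeometric tail combine to yield exactly the numeric form of the sample-complexity hypothesis. The only genuinely nontrivial step is the symmetrization + Chernoff argument in its \emph{relative} (multiplicative) form rather than the additive form; using the additive form would require $m = \Omega(1/\alpha^2)$, and bypassing that via the relative bound is what lets us land at $m = \tilde O(d/\alpha)$ as the theorem asserts.
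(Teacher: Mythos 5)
The paper does not prove this theorem; it is imported as a black-box citation to Blumer et al.~\cite{BlumerEhHaWa89} and Kaplan et al.~\cite{kaplan2020private}, so there is no in-paper argument for you to be compared against. Your sketch follows the textbook route --- reduce to the symmetric-difference class $\mathcal{R} = \{c \triangle h : c,h \in \mathcal{C}\}$, bound its growth function via Sauer--Shelah to get $VC(\mathcal{R}) = O(VC(\mathcal{C}))$, then apply a one-sided relative uniform-convergence bound via ghost-sample symmetrization and a tail bound over the random split --- and this structure is sound. Your remark that the additive Hoeffding form would cost $m = \Omega(1/\alpha^2)$ and that the relative (multiplicative) form is what recovers $m = \tilde{O}(d/\alpha)$ correctly identifies the one genuinely nontrivial ingredient. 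One cosmetic caution: your numerics on $VC(\mathcal{R})$ are right (the self-referential inequality $m \leq 2d\log_2(em/d)$ does resolve to $m < 10d$), but reading the $10$ in the theorem's hypothesis as literally this symmetric-difference blow-up constant is a guess. Without Kaplan et al.'s derivation in hand you cannot confirm that their coefficient of $VC(\mathcal{C})$ arose that way rather than from a different normalization; if you write this up, treat the constants as something to be re-derived end-to-end rather than reverse-engineered from the statement.
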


\subsection{Differential Privacy}

\begin{definition}[Differential Privacy~\cite{DMNS06}]
    Let $\calX$ be a data domain and $\calY$ be an output domain. A (randomized) mechanism $M$ mapping $\calX^n$ to $\calY$ is $(\varepsilon,\delta)$-differentially private if for any pair of inputs $S,S'\in\calX^n$  where $S$ and $S'$ differ on a single entry, and any event $E\subseteq \calY$, it holds that
    $$
    \Pr[M(S)\in E]\leq e^{\varepsilon}\cdot\Pr[M(S')\in E] +\delta,
    $$
    where the probability is over the randomness of $M$.
\end{definition}

\begin{theorem}[Advanced composition~\cite{DRV10}]\label{thm:advancedcomposition}
    Let $M_1,\dots,M_k:\calX^n\rightarrow \calY$ be $(\varepsilon,\delta)$-differentially private mechanisms. Then the algorithm that on input $S\in \calX^n$ outputs $(M_1(S),\dots,M_k(S))$ is $(\varepsilon',k\delta+\delta')$-differentially private, where $\varepsilon'=\sqrt{2k\ln(1/\delta')}\cdot \varepsilon$ for every $\delta'>0$.
\end{theorem}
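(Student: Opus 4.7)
The plan is to analyze the composition through the \emph{privacy loss random variable} (PLRV) framework. Fix neighboring datasets $S \sim S'$, let $Y_i \sim M_i(S)$, and define the privacy loss $L_i = \ln \frac{\Pr[M_i(S) = Y_i]}{\Pr[M_i(S') = Y_i]}$ and the history $H_i = (Y_1,\ldots,Y_{i-1})$. The standard reduction (from~\cite{DRV10}) says that it suffices to show $\Pr[\sum_{i=1}^k L_i > \varepsilon'] \leq \delta' + k\delta$: a tail bound on the cumulative privacy loss translates directly into an $(\varepsilon',k\delta + \delta')$-DP guarantee for the joint output.

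First, I would invoke the usual coupling characterization of approximate DP: for each $i$, the $(\varepsilon,\delta)$-DP property of $M_i$ implies that there is an event $F_i$ with $\Pr[F_i]\leq \delta$ such that, conditioned on $\overline{F_i}$ and on $H_i$, the loss $L_i$ lies in $[-\varepsilon,\varepsilon]$. A union bound over $i\in[k]$ charges $k\delta$ to the overall failure budget, and on the complement of $\bigcup_i F_i$ every $L_i$ is deterministically in $[-\varepsilon,\varepsilon]$. I would then center the losses and set $\mu_i \eqdef \Ex[L_i \mid H_i]$. The key inequality, which follows from a standard convexity calculation (KL is bounded by max-divergence), is $\mu_i \leq \varepsilon(e^{\varepsilon}-1)$, so the conditional drift per step is $O(\varepsilon^2)$.

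The differences $L_i - \mu_i$ form a martingale-difference sequence with respect to the filtration generated by $H_i$, and each difference lies in an interval of length at most $2\varepsilon$. Applying the Azuma--Hoeffding inequality yields
\[
\Pr\Bigl[\sum_{i=1}^k(L_i - \mu_i) > \varepsilon\sqrt{2k\ln(1/\delta')}\Bigr] \leq \delta'.
\]
Combining this with the drift bound $\sum_i \mu_i \leq k\varepsilon(e^{\varepsilon}-1)$, the cumulative loss satisfies $\sum_i L_i \leq k\varepsilon(e^{\varepsilon}-1) + \varepsilon\sqrt{2k\ln(1/\delta')}$ with probability at least $1 - k\delta - \delta'$.

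In the regime of the theorem, where the relevant scaling has $\varepsilon$ small enough that the drift term $k\varepsilon(e^\varepsilon-1) = O(k\varepsilon^2)$ is absorbed into the leading Azuma term, this collapses to the stated $\varepsilon' = \varepsilon\sqrt{2k\ln(1/\delta')}$. The main obstacle, in my view, is the careful handling of the $\delta$-``bad'' event when bounding the conditional KL drift $\mu_i$ under approximate (rather than pure) DP: the cleanest route is to couple the conditional distribution of $Y_i$ to an $\varepsilon$-pure-DP distribution off of $F_i$, bound the KL of the coupled distribution via convexity, and charge the coupling error to the $k\delta$ failure budget rather than letting it inflate the martingale drift; everything else is bookkeeping with Azuma--Hoeffding.
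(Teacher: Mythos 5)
The paper does not prove this theorem; it is a preliminary stated as a black-box citation to Dwork--Rothblum--Vadhan, so there is no internal proof to compare against. Your proposal follows the canonical DRV route — privacy-loss random variable, conditional KL-drift bound, Azuma--Hoeffding on the centered losses, and a coupling to handle $\delta > 0$ — and this is the correct architecture. The Azuma arithmetic checks out: with each centered loss $L_i - \mu_i$ in an interval of length $2\varepsilon$, the Hoeffding--Azuma bound $\exp(-t^2/(2k\varepsilon^2))$ indeed gives $t = \varepsilon\sqrt{2k\ln(1/\delta')}$ at confidence $\delta'$.

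The genuine gap is at the very end. As you note, the argument you sketch proves $\varepsilon' = k\varepsilon(e^\varepsilon-1) + \varepsilon\sqrt{2k\ln(1/\delta')}$, whereas the statement in the paper is just $\varepsilon' = \varepsilon\sqrt{2k\ln(1/\delta')}$. The ``absorbed in the relevant regime'' remark is not a proof of the stated constant: the drift $\mu_i = \Ex[L_i \mid H_i]$ is a conditional KL divergence and therefore nonnegative, so $\sum_i L_i \geq \sum_i (L_i - \mu_i)$ pointwise and the drift term cannot be dropped for free. The paper's statement is a common informal simplification of the DRV bound (harmless in the paper's use, since they rescale $\varepsilon$ by $1/\sqrt{d\log(1/\delta)}$ before invoking composition), but a proof that claims to establish it as written would need to either include the $k\varepsilon(e^\varepsilon-1)$ term or explicitly restrict $\varepsilon$ so that it is dominated. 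Separately, your treatment of the $\delta$-bad events is pointing at the right idea but is still a sketch: the clean route is the decomposition lemma (every pair of $(\varepsilon,\delta)$-close distributions is a $\delta$-weight mixture away from a pure-$(\varepsilon,0)$ pair), with the martingale argument run entirely on the pure components and the $\delta$ remainders charged to $k\delta$ by a union bound applied uniformly over all histories; stating $\Pr[F_i] \leq \delta$ and conditioning on $\overline{F_i}$ after the fact does not automatically interact correctly with adaptivity.
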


We use differentially private algorithms for the Interior Point problem:

\begin{definition}[Interior Point~\cite{BNSV15}] Let $\calX$ be a (finite) ordered domain.
    We say that $p\in \calX$ is an interior point of a dataset $S=\{x_1,\dots,x_n\} \in \cX^n$ if $\min_{i\in[n]} \set{x_i}\leq p\leq \max_{i\in[n]} \set{x_i}$.
\end{definition}

Bun et.~al.~\cite{BNSV15} proved that the sample complexity for solving the interior point problem with  differential privacy must grow proportionally to $\log^*|\calX|$~\cite{BNSV15}. 
Cohen et.~al.\ provide a nearly optimal algorithm solving the interior point privately on a finite domain~\cite{CohenLNSS22}.
\begin{theorem}[\cite{CohenLNSS22}]\label{thm:1-d interior point}
    Let $\calX$ be a finite ordered domain. 
    There exists an $(\varepsilon,\delta)$-differentially private algorithm $\PrivateIP$ that on input $S\in \cX^n$ outputs an interior point with probability $1-\beta$ provided that $n > n_{IP}(|\cX|,\beta,\eps,\delta)$ for $n_{IP}(|\cX|,\beta,\eps,\delta) \in O\left(\frac{\log^*|\calX|\cdot\log^2(\frac{\log^*|\calX|}{\beta\delta})}{\varepsilon}\right)
    $.
\end{theorem}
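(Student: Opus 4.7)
The plan is to construct $\PrivateIP$ as a recursive algorithm whose depth is $O(\log^* |\cX|)$, exploiting the rapidly shrinking sequence $|\cX|, \log|\cX|, \loglog|\cX|, \ldots$ that terminates in a constant after $\log^* |\cX|$ iterations. At each level, the algorithm privately reduces the effective ordered domain from size $N$ to size $O(\log N)$ while guaranteeing that the reduced domain still contains an interior point of $S$, and the final constant-size base case can be resolved with, e.g., a careful application of the exponential mechanism.

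The core step to design is a single-level \emph{domain-reduction} procedure $\mathsf{Reduce}$: given $S$ over an ordered domain of size $N$, partition $\cX$ into consecutive buckets of width roughly $\log N$, and privately select one bucket $B$ that is guaranteed to contain an interior point of $S$. The right score for each bucket $B_i$ is $q(S,B_i) \eqdef \min\set{\abs{\set{x \in S : x \leq \max B_i}}, \abs{\set{x \in S : x \geq \min B_i}}}$, which is $1$-sensitive and is maximized precisely by buckets that genuinely split the mass of $S$; any bucket with $q(S,B_i) \geq 1$ contains or is adjacent to an interior point. Under approximate differential privacy one can privately isolate a bucket attaining near-maximum $q$-score using a stability-based selection mechanism (a stable histogram or a choosing mechanism), at per-call sample cost $O(\log^2(1/(\beta'\delta'))/\eps')$ that is independent of $N$.

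After $\mathsf{Reduce}$ succeeds, the algorithm recurses on the dataset restricted to the selected bucket, which sits in an ordered domain of size $O(\log N)$. Iterating the $\log^*$ shrinkage bounds the recursion depth by $O(\log^* |\cX|)$. To combine the $\log^* |\cX|$ calls into a single $(\eps,\delta)$-private algorithm, the plan is to apply advanced composition (Theorem~\ref{thm:advancedcomposition}) with per-level parameters $\eps' = \Theta\paren{\eps/\sqrt{\log^*|\cX| \cdot \log(1/\delta)}}$, $\delta' = \Theta(\delta/\log^*|\cX|)$, and to split $\beta$ by a union bound as $\beta' = \Theta(\beta/\log^*|\cX|)$. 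Substituting these into the per-level cost yields the claimed total sample complexity $O\paren{\log^*|\cX| \cdot \log^2(\log^*|\cX|/(\beta\delta))/\eps}$.

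The main obstacle is the correctness invariant of the recursion rather than the privacy accounting. At level $k$, we need that the dataset restricted to the currently selected sub-interval still contains strictly more points on \emph{both} sides of the eventual output than the selection noise at the next level could possibly distort, so that $\mathsf{Reduce}$ at level $k+1$ again succeeds with probability $1-\beta'$. Making this invariant quantitative — tracking the ``gap'' between the minimal interior-point score and the noise magnitude across all levels, and showing that $\Theta(\log^2(1/(\beta'\delta'))/\eps')$ samples per level are genuinely enough to preserve it — is the most delicate part of the argument, and is where the stability-based choosing mechanism's concentration is essential.
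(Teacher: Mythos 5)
This theorem is imported from Cohen et al.~\cite{CohenLNSS22}; the paper cites it without giving a proof, so there is no in-paper argument to compare against. Judged on its own, your proposal contains a genuine gap: the recursive domain-reduction scheme you sketch is essentially the $\cA_{RecConcave}$ paradigm of \cite{BNS13b,BNSV15}, which the present paper explicitly says yields sample complexity $\tilde{O}\paren{2^{O(\log^*|\cX|)}}$, not $\tilde{O}(\log^*|\cX|)$. The reason is exactly the ``invariant'' you flag at the end: the slack the interior point must carry past level $k$ has to absorb the selection noise of every subsequent level, and this accumulates. In the additive version of your argument the initial gap ($\approx n/2$) must exceed $\log^*|\cX|$ times the per-level noise $\Theta\paren{\log(1/(\beta'\delta'))/\eps'}$, so you need $n \gtrsim \log^*|\cX| \cdot \log(1/(\beta'\delta'))/\eps'$; once you substitute the advanced-composition setting $\eps' = \Theta\paren{\eps/\sqrt{\log^*|\cX|\cdot\log(1/\delta)}}$ you pick up an \emph{additional} $\sqrt{\log^*|\cX|}$ factor, landing at $\tilde{O}\paren{(\log^*|\cX|)^{1.5}}$. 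That is the Kaplan et al.~\cite{KaplanLMNS19} bound, not the Cohen et al.\ bound, and your final arithmetic (``Substituting these into the per-level cost yields the claimed total sample complexity'') silently drops this $\sqrt{\log^*}$ factor.

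The improvement from $(\log^*)^{1.5}$ to $\log^*$ is precisely the technical contribution of \cite{CohenLNSS22}, and it does not come from tuning the budget split inside your recursion --- basic composition makes things strictly worse, and any sequential-composition accounting leaves you with at least one residual $\poly(\log^*)$ overhead beyond linear. As the present paper itself remarks, the $(\log^*)^{1.5}$ and $\log^*$ algorithms are ``tailored to optimize the specific interior point function $Q_{IP}$'' rather than following the generic bucketed-recursion template; achieving linear dependence on $\log^*|\cX|$ requires a structurally different argument that avoids paying advanced-composition costs across the recursion depth. So the overall shape of your plan (shrink the domain by $N \mapsto O(\log N)$, recurse $\log^*$ times, aggregate privacy) is the right starting intuition, but the bucket-selection-plus-advanced-composition route cannot close the last $\sqrt{\log^*}$ gap, and the claimed bound of this theorem does not follow from what you wrote.
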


\subsection{Halfspaces}
\begin{definition}[Halfspaces and hyperplanes]
    Let $\cX\subset \mathbb{R}^d$. For $a_1,\dots,a_d,w\in\mathbb{R}$, the halfspace  predicate $h_{a_1,\dots,a_d,w}:\cX\rightarrow\{\pm1\}$ is defined as $h_{a_1,\dots,a_d,w}(x_1,\dots,x_d)=1$ if and only if $\sum_{i=1}^d a_ix_i\geq w$. Define the concept class $\mbox{HALFSPACE}_d(\cX)=\{h_{a_1,\dots,a_d,w}\}_{a_1,\dots,a_d,w\in\mathbb{R}}$. 
\end{definition}

\subsection{Notation}

We use the notation $(x_1,\ldots,x_i)\times \calX^{d-i}$ for the space of points with a fixed prefix of $i$ coordinates:
$$
(x_1,\ldots,x_i)\times \calX^{d-i}=
\left\{y\in\calX^d~|~y_j=x_j~\mbox{for }~j\in[i]\right\}.
$$

\section{Our Private Quasi-Concave Optimization Scheme}

In this section, we present our main algorithm $\IPConcave$ that for privately optimizing approximated quasi-concave functions.

We first give the definition of the quasi-concave function and the sensitivity of functions.

\begin{definition}[Quasi-Concave]
    Let $\cX$ be an ordered domain.
    A function $f\colon \cX \rightarrow \bbR$ is quasi-concave if $f(\ell)\geq \min(\{f(i),f(j)\}$ for every $i<\ell<j$.
\end{definition}

\begin{definition}[Sensitivity]
    The sensitivity of a function $f \colon \cX^* \rightarrow \bbR$ is the smallest $k$ such that for every pair of neighboring datasets $S,S' \in \cX^*$ (i.e., differ in exactly one entry), we have $\size{f(S)-f(S')} \leq k$. A function $Q \colon \cX^* \times \tcX \rightarrow \bbR$ is called a sensitivity-$k$ function if for every $x \in \tcX$, the function $Q(\cdot, x)$ has sensitivity $\leq k$.
\end{definition}

\subsection{One-Dimensional Case}

Beimel et al.~\cite{BNS13b} provide an algorithm $\mathcal{A}_{RecConcave}$ that given as inputs a dataset $S \in \cX^*$ and a sensitivity-$1$ function $Q \colon \cX^* \times \tcX \rightarrow \bbR$ such that $Q(S,\cdot)$ is quasi-concave, the algorithm privately finds a point $\hat{x} \in \tcX$ such that $|Q(S,\hat{x})-\max_{x\in \tcX}\{Q(S,x)\}|\leq\tilde{O}\left(2^{O(\log^*|\tcX|)}\right)$. In fact, the exponential dependency in $2^{O(\log^*|\tcX|)}$ is necessary in general since Cohen et al.~\cite{CohenLNSS22} proved a matching lower bound.

In this work, we bypass the lower bound of \cite{CohenLNSS22} by showing that if the quasi-concave function $Q \colon \cX^* \times \tcX \rightarrow \bbR$ has the property that given a dataset $S \in \cX^*$, the function $Q(S,\cdot)$ can be well approximated by $Q(S',\cdot)$ for an $m$-size random subset $S' \subset S$, then we can privately optimize it using sample complexity $\tilde{O}(m\cdot \log^* |\tcX|)$.

Here we define $(\alpha,\beta,m)$-approximation with respect to $Q$.

\begin{definition}[Approximation with respect to $Q$]~\label{def:approximation-wrt-Q}
    For any dataset $S\in \cX^*$ and function $Q \colon \cX^* \times \tcX \rightarrow \bbR$, we say that $S'\subseteq S$ is an $\alpha$-approximation of $S$ with respect to $Q$ if for any $x\in \tcX$, we have $|Q(S,x)-Q(S',x)|\leq \alpha$.
    We say $(S,Q)$ can be $(\alpha,\beta, m)$-approximated, if by randomly selecting a subset $S'\subseteq S$ of size at least $m$, then with probability $1-\beta$, $S'$ is an $\alpha$-approximation of $S$ with respect to $Q$.
\end{definition}

Note that $\alpha$-approximation with respect to a function $Q$ (Definition~\ref{def:approximation-wrt-Q}) is similar to $\alpha$-approximation with respect to a set of binary value functions $R$ (Definition~\ref{def:alpha-approx}). Indeed, in Sections~\ref{sec:Tukey} and \ref{sec:halfspaces} we exploit this similarity and apply Theorem~\ref{thm:alphaApprox} for upper bounding the subset size of approximating specific functions.

In Appendix~\ref{sec:approx-is-low-sen}, we show that approximated function are, in particular, low-sensitivity functions.

In the following, we present our new private optimization algorithm for quasi-concave functions that can be approximated by a random subset.

\begin{algorithm}\label{alg:IPConcave}
    \caption{$\IPConcave$}

    \textbf{Parameter:} Confidence parameter $\beta > 0$, privacy parameter $\eps,\delta > 0$, and number of subsets $t$. 

    \textbf{Inputs:} A dataset $S \in \cX^n$, for $n \geq t$, and a function $Q \colon \cX^* \times \tcX \rightarrow \bbR$ where $\tcX \subseteq \bbR$ and is finite.

    \textbf{Operation:}~

    \begin{enumerate}
        \item Randomly partition $S$ into $S_1,\dots,S_{t}$, each with size at least $\floor{n/t}$.\label{step:partition}

        \item For $i\in [t]$: Compute $y_i = \argmax_{x\in \tcX} \{Q(S_i,x)\}$.

        \item Compute and output $\hat{x} \sim \PrivateIP_{\tcX,\beta,\eps,\delta}(y_1,\dots,y_t)$ (the algorithm from Theorem~\ref{thm:1-d interior point} that solves the 1-dimensional interior point problem over the domain $\tcX$ with confidence parameter $\beta$ and privacy parameters $\eps,\delta$).        

    \end{enumerate}    
\end{algorithm}

\begin{theorem}[Restatement of Theorem~\ref{def:intro:IPConcave}]\label{thm:IPConcave}
    Let $\eps > 0$, $\delta, \alpha,\beta \in (0,1)$, $n,t \in \bbN$ with $n\geq t$, let $\cX$ be a domain of data elements, let $\tcX \subseteq \bbR$ be a finite domain, and let $Q \colon \cX^* \times \tcX \rightarrow \bbR$. Then the following holds:

    \begin{enumerate}
        \item \textbf{Privacy}: Algorithm $\IPConcave_{\alpha,\beta,\eps,\delta,t}(\cdot, Q)\colon \cX^n \rightarrow \tcX$ is $(\eps,\delta)$-differentially private.
        \item \textbf{Accuracy}: Let $S \in \cX^n$ and assume that $Q(S,\cdot)$ is quasi-concave and that $(S,Q)$ can be $(\alpha,\beta',\floor{n/t})$-approximated (Definition~\ref{def:approximation-wrt-Q}). If $t \geq n_{IP}(|\tcX|,\beta,\eps,\delta) \in \tilde{O}_{\beta,\eps,\delta}(\log^* |\tcX|)$ (the sample complexity of $\PrivateIP$ from Theorem~\ref{thm:1-d interior point}), then
        \begin{align*}
            \Pr_{\hat{x} \sim \IPConcave_{\alpha,\beta,\eps,\delta,t}(S,Q)}\left[|Q(S,\hat{x})-\max_{x\in \tcX}\{Q(S,x)\}|\leq 2\alpha\right]\geq 1-t\cdot\beta'-\beta
        \end{align*}
    \end{enumerate}
\end{theorem}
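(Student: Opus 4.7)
The plan is to prove privacy and accuracy as two essentially independent arguments. For privacy, the key observation is that Step~\ref{step:partition} uses random bits independent of $S$, so for any fixed partition, neighboring datasets $S,S'$ produce partitions in which exactly one subset $S_i$ differs in one entry while the rest are identical. Since each $y_i$ is a deterministic function of $S_i$, the vector $(y_1,\ldots,y_t)$ differs in at most one coordinate between $S$ and $S'$; composing with the $(\eps,\delta)$-DP $\PrivateIP$ (Theorem~\ref{thm:1-d interior point}) and averaging over the partition randomness then yields $(\eps,\delta)$-DP for $\IPConcave$ by post-processing and convexity of $(\eps,\delta)$-indistinguishability.

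For accuracy, I would condition on two high-probability events. First, by the $(\alpha,\beta',\lfloor n/t\rfloor)$-approximation hypothesis and a union bound over $i\in[t]$, with probability at least $1-t\beta'$ every $S_i$ is an $\alpha$-approximation of $S$ with respect to $Q$. Call this event $\mathcal{E}_1$. On $\mathcal{E}_1$, letting $x^\ast = \argmax_{x\in\tcX}Q(S,x)$, we have for every $i\in[t]$,
\begin{align*}
Q(S,y_i) \;\geq\; Q(S_i,y_i) - \alpha \;\geq\; Q(S_i,x^\ast) - \alpha \;\geq\; Q(S,x^\ast) - 2\alpha,
\end{align*}
using $\mathcal{E}_1$ for the first and third inequalities and the maximality of $y_i$ on $S_i$ for the middle one. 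Thus all $t$ candidate solutions already achieve $2\alpha$-suboptimality on $S$.

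Second, conditioned on the success event $\mathcal{E}_2$ of $\PrivateIP$ (which has probability at least $1-\beta$ whenever $t\geq n_{IP}(|\tcX|,\beta,\eps,\delta)$), the output $\hat{x}$ satisfies $\min_i y_i\leq \hat{x}\leq \max_i y_i$ in the ordering of $\tcX$. By quasi-concavity of $Q(S,\cdot)$,
\begin{align*}
Q(S,\hat{x}) \;\geq\; \min\bigl\{Q(S,\min_i y_i),\,Q(S,\max_i y_i)\bigr\} \;\geq\; \min_i Q(S,y_i) \;\geq\; Q(S,x^\ast) - 2\alpha,
\end{align*}
and combined with $Q(S,\hat{x})\leq Q(S,x^\ast)$ this yields the claimed bound on $\mathcal{E}_1\cap\mathcal{E}_2$. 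A final union bound gives total failure probability at most $t\beta'+\beta$. The only subtle point in the whole argument is verifying that the partition is genuinely data-independent so that the reduction of privacy to post-processing from $\PrivateIP$ is clean; everything else is a direct chaining of the $\alpha$-approximation and quasi-concavity hypotheses.
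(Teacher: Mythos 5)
Your proposal is correct and takes essentially the same approach as the paper's proof: for privacy, the reduction to $\PrivateIP$ via the observation that one changed input entry perturbs exactly one $y_i$; for accuracy, the union bound over the $t$ approximation events plus the $\PrivateIP$ success event, followed by the same two-sided chaining of the $\alpha$-approximation inequality and quasi-concavity. The only substantive difference is that you are slightly more explicit about conditioning on the partition randomness and invoking convexity of $(\eps,\delta)$-indistinguishability, which the paper compresses into a single sentence.
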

\begin{proof}
    \textbf{Privacy}: A change of one input point affect one of the points $y_i$. So $\PrivateIP$ guarantees that the output is $(\eps,\delta)$-differentially private.

    \textbf{Accuracy}: The $(\alpha,\beta',\ceil{n/t})$-approximation guarantees that for any subset $S_i$, with probability $1-\beta'$, $S_i$ is an $\alpha$-approximation of $S$ with respect to $Q$ (Definition~\ref{def:approximation-wrt-Q}). Let $x_{opt}=\argmax_{x\in \tcX} \{Q(S,x)\}$. Under the condition of $\alpha$-approximation with respect to $Q$, we have $Q(S_i,x_{opt})\geq Q(S,x_{opt})-\alpha=\max_{x\in\mathcal{X}} \{Q(S,x)\}-\alpha$. Then we have $Q(S_i,y_i)\geq Q(S_i,x_{opt})\geq\max_{x\in \tcX} \{Q(S,x)\}-\alpha$. Using the $\alpha$-approximation w.r.t $Q$ again, we have $Q(S,y_i)\geq Q(S_i,y_i)-\alpha\geq \max_{x\in \tcX} \{Q(S,x)\}-2\alpha$. Since $\PrivateIP$ succeeds with probability $1-\beta$, then by the union bound, with probability $1 - t\cdot \beta' - \beta$, for all $i \in [t]$ we have $Q(S,y_i)\geq \max_{x\in \tcX} \{Q(S,x)\}-2\alpha$ and that $\hat{x}$ is an interior point of $\set{y_1,\ldots,y_t}$. Since $Q$ it quasi-concave, the above implies that $Q(S,\hat{x}) \geq \min_{i \in [t]}\set{Q(S,y_i)} \geq \max_{x\in \tcX} \{Q(S,x)\}-2\alpha$, as required.
\end{proof}

\subsection{Extending to Higher Dimension}

Beimel et al.~\cite{beimel2019private} and Kaplan et al.~\cite{kaplan2020private} use $\mathcal{A}_{RecConcave}$ to optimize high dimension functions. The method is to iteratively select good coordinates using $\mathcal{A}_{RecConcave}$. Thus the sample size of their result must depend on $2^{\log^*|\calX|}$. This work shows that our new optimizer can also be extended to higher dimensions if the target function $Q$ is (high-dimensional) quasi-concave and has proper finite-size domains. Notice that the 1-dimensional concavity property is equivalent to quasi-concave.

\begin{definition}[(High-Dimensional) Quasi-Concave]~\label{def:concave-property}
    We say a function $f:\mathbb{R}^d\rightarrow \mathbb{R}$ is \emph{quasi-concave} if for any points $p_1,\dots,p_k$ and any point $p$ in the convex hull of $\{p_i\}_{i\in[k]}$, it holds that $f(p)\geq \min(\{f(p_1),\dots,f(p_k)\})$.
\end{definition}

Since the interior point can only be privately found in a finite domain, we need to construct such domain that contains a point with a high value of $Q$.

\begin{definition}[Proper Finite Domains]~\label{def:proper-finite-domains}
    We say that $\tcX_1,\dots,\tcX_d$ are proper finite domains for $Q \colon\cX^* \times \bbR^d \rightarrow \bbR$ if for any $i\in[d]$, $S \in \cX^*$ and $\hat{x}_1\in\tilde{\calX}_1,\dots,\hat{x}_{i-1}\in\tilde{\calX}_{i-1}$, there exists $\hat{x}_i\in\tilde{\calX}_i$, such that
    $$
    \max_{x_i,\dots,x_d\in\mathbb{R}} Q(S,(\hat{x}_1,\dots,\hat{x}_{i-1},x_i,\dots,x_d))=\max_{x_{i+1},\dots,x_d\in\mathbb{R}}Q(S,(\hat{x}_1,\dots,\hat{x}_{i},x_{i+1},\dots,x_d)).
    $$
    The construction of $\tilde{\cX}_i$ may depend on $\hat{x}_1,\dots,\hat{x}_{i-1}$, so $\tcX_2,\ldots,\tcX_d$ are treated as functions.
\end{definition}

We next describe our algorithm $\IPConcaveHighDim$ that extends $\IPConcave$ for high-dimensional quasi-concave functions. 

\begin{algorithm}~\label{alg:IPConcaveHighDim}

    \textbf{Parameter:} Utility parameters $\alpha,\beta > 0$, privacy parameter $\eps,\delta > 0$, and number of subsets $t$. 

    \textbf{Inputs:} A dataset $S \in \cX^n$, for $n \geq t$, and a function $Q \colon \cX^* \times \bbR^d \rightarrow \bbR$ with proper finite domains $\tcX_1, \tcX_2 \ldots, \tcX_d$ (Definition~\ref{def:proper-finite-domains}).
    
    \caption{$\IPConcaveHighDim$}

    \textbf{Operation:}~
    \begin{enumerate}
    
        \item Randomly partition $S$ into $S_{1},\dots,S_{t}$, each with size at least $\floor{n/t}$.
    
        \item For $i=1,\dots,d$:
        \begin{enumerate}
             
            \item Let $\tilde{\calX}_i = \tilde{\calX}_i(\hat{x}_1,\dots,\hat{x}_{i-1})$, and define the function $\hat{Q}_{\hat{x}_1,\dots,\hat{x}_{i-1}} \colon \cX^n \times \tcX_i \rightarrow \bbR$ as \label{step:hatQ}
            \begin{align*}
                \hat{Q}_{\hat{x}_1,\dots,\hat{x}_{i-1}}(S,x) \eqdef \max_{x_{i+1}\ldots,x_d\in \mathbb{R}} Q(S,(\hat{x}_1,\dots,\hat{x}_{i-1}, x, x_{i+1}\ldots,x_d)).
            \end{align*}

            \item Compute $\hat{x}_i \sim \IPConcave_{\alpha,\beta,\eps,\delta,t}(S,\hat{Q}_{\hat{x}_1,\dots,\hat{x}_{i-1}})$, where we fix the partition in Step~\ref{step:partition} of $\IPConcave$ (Algorithm~\ref{alg:IPConcave}) to $S_{1},\dots,S_{t}$.\label{step:IPConcave} 

        \end{enumerate}

        \item Output $(\hat{x}_1,\dots,\hat{x}_d)$.
    \end{enumerate}
        
\end{algorithm}

\begin{remark}
    Note that in Algorithm~\ref{alg:IPConcaveHighDim}, we chose to use the same partition $S_1,\ldots,S_t$ for all iterations rather than letting $\IPConcave$ to sample a fresh partition in each invocation. The main reason we chose to do that is to increased the confidence guarantee, since once $S_1,\ldots,S_t$ are all a good approximation of $S$ w.r.t. $Q$, they are good for all the iterations and therefore, there is no need to re-sample. 
\end{remark}

\begin{theorem}[Restatement of Theorem~\ref{def:intro:IPConcaveHighDim}]\label{thm:IPConcaveHighDim}
    Let $\eps > 0$, $\delta, \alpha,\beta \in (0,1)$, $n,t,d \in \bbN$ with $n \geq t$, let $\cX$ be a domain of data elements, let $Q \colon \cX^* \times \bbR^d \rightarrow \bbR$ be a function with finite domains $\tcX_1, \tcX_2, \ldots, \tcX_d$, let $X = \max_{i \in [n]}\set{|\tcX_i|}$. Then the following holds:

    \begin{enumerate}
        \item \textbf{Privacy}: Algorithm $\IPConcaveHighDim_{\alpha,\beta,\eps,\delta,t}(\cdot, Q) \colon \cX^n \rightarrow \bbR^d$ is $(\eps \cdot\sqrt{2d\ln(1/\delta')},d\delta+\delta')$-differentially private for any choice of $\delta'>0$.
        \item \textbf{Accuracy}: 
        Let $S \in \cX^n$ and assume that: (1) $Q(S,\cdot)$ is quasi-concave (Definition~\ref{def:concave-property}), (2) $(S,Q)$ can be $(\alpha,\beta',\floor{n/t})$-approximated (Definition~\ref{def:approximation-wrt-Q}), and (3) $\tcX_1, \ldots \tcX_d$ are proper for $Q$ (Definition~\ref{def:proper-finite-domains}). Then for $t = n_{IP}(X,\beta,\eps,\delta) \in \tilde{O}_{\alpha,\beta,\eps,\delta}(\log^* X)$ (the sample complexity of $\PrivateIP$ from Theorem~\ref{thm:1-d interior point} over domain of size $X$), it holds that
        \begin{align*}
            \Pr_{\hat{x} \sim \IPConcaveHighDim_{\alpha,\beta,\eps,\delta,t}(S,Q)}\left[|Q(S,\hat{x})-\max_{x\in \bbR^d}\{Q(S,x)\}|\leq 2\alpha d\right]\geq 1-t\cdot\beta'-d \cdot \beta
        \end{align*}
    \end{enumerate}
\end{theorem}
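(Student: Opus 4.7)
The proof has two parts. For privacy, each of the $d$ invocations of $\IPConcave_{\alpha,\beta,\eps,\delta,t}$ in Step~\ref{step:IPConcave} is $(\eps,\delta)$-differentially private by Theorem~\ref{thm:IPConcave}. The replacement of $\IPConcave$'s internal random partition by the fixed partition $S_1,\ldots,S_t$ drawn once at the start does not affect this guarantee, since the privacy argument only uses that changing one entry of $S$ alters exactly one $y_i$ fed to $\PrivateIP$ (which is DP against a single change in its input) regardless of whether the partition was random or fixed. Applying advanced composition (Theorem~\ref{thm:advancedcomposition}) over the $d$ calls then yields the claimed $(\eps\sqrt{2d\ln(1/\delta')},\,d\delta+\delta')$-DP bound for any $\delta'>0$.

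For accuracy, the plan is to condition on a single ``good partition'' event that is paid only once because the partition is sampled only once. By the $(\alpha,\beta',\floor{n/t})$-approximation assumption and a union bound, with probability at least $1-t\beta'$ every $S_j$ is an $\alpha$-approximation of $S$ with respect to $Q$ in the sense of Definition~\ref{def:approximation-wrt-Q}. I then need to transfer this property and quasi-concavity to the one-dimensional functions $\hat{Q}_{\hat{x}_1,\ldots,\hat{x}_{i-1}}$ used in Step~\ref{step:hatQ}. Approximation survives partial maximization: since $|Q(S,y)-Q(S_j,y)|\leq\alpha$ for every $y$, the same bound holds after taking $\max_{x_{i+1},\ldots,x_d}$ on both sides, so each $S_j$ is also an $\alpha$-approximation of $S$ with respect to $\hat{Q}_{\hat{x}_1,\ldots,\hat{x}_{i-1}}$. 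Quasi-concavity also descends: given $x'<x<x''$ in $\bbR$, write $x=\lambda x'+(1-\lambda)x''$, lift near-optimal suffix choices for $x'$ and $x''$ to the same convex combination of $d$-tuples, and invoke the multi-dimensional quasi-concavity of $Q(S,\cdot)$ (Definition~\ref{def:concave-property}) at the resulting point.

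With these two structural claims in hand, Theorem~\ref{thm:IPConcave} applies at each iteration $i$: under the conditioning, the approximation premise holds deterministically, so the only remaining randomness is $\PrivateIP$'s, which succeeds with probability at least $1-\beta$. Setting $V_i := \hat{Q}_{\hat{x}_1,\ldots,\hat{x}_{i-1}}(S,\hat{x}_i)$, Theorem~\ref{thm:IPConcave} gives $V_i \geq \max_{x\in\tcX_i} \hat{Q}_{\hat{x}_1,\ldots,\hat{x}_{i-1}}(S,x) - 2\alpha$. The proper finite domain property (Definition~\ref{def:proper-finite-domains}) identifies this maximum with $\max_{x_i,\ldots,x_d\in\bbR} Q(S,(\hat{x}_1,\ldots,\hat{x}_{i-1},x_i,\ldots,x_d))$, which equals $V_{i-1}$ for $i\geq 2$ (by unfolding the definition of $\hat{Q}_{\hat{x}_1,\ldots,\hat{x}_{i-2}}$ at $\hat{x}_{i-1}$) and equals the global maximum $\max_{x\in\bbR^d} Q(S,x)$ for $i=1$. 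Telescoping gives $V_d \geq \max_{x\in\bbR^d} Q(S,x) - 2\alpha d$, and since $\hat{Q}_{\hat{x}_1,\ldots,\hat{x}_{d-1}}(S,\hat{x}_d)=Q(S,\hat{x})$ leaves no free coordinates, this is the required bound. A union bound over the $d$ interior-point failures adds $d\beta$, yielding the stated $t\beta'+d\beta$. The main technical obstacle I expect is carefully verifying the two transfer claims (in particular, handling possible non-attainment of the inner $\max$ in the quasi-concavity argument via near-optima) and then threading the proper-finite-domain identity through the telescope so that the per-iteration loss stays at exactly $2\alpha$ rather than compounding, with a single shared ``good partition'' event sufficing across all $d$ iterations.
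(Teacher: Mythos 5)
Your proposal is correct and follows essentially the same approach as the paper: condition once on the good-partition event (cost $t\beta'$), transfer approximation and quasi-concavity to the one-dimensional projections $\hat{Q}_{\hat{x}_1,\ldots,\hat{x}_{i-1}}$, invoke Theorem~\ref{thm:IPConcave} per coordinate, and telescope via the proper-finite-domain identity with a union bound of $d\beta$ over the interior-point failures. You spell out two transfer claims (approximation survives partial maximization; one-dimensional quasi-concavity descends from the high-dimensional one via lifting near-optimal suffixes to a convex combination) and the fixed-versus-fresh-partition point in the privacy analysis, all of which the paper's proof uses but states more tersely.
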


\begin{proof}
    \noindent {\bf Privacy of Algorithm~\ref{alg:IPConcaveHighDim}:} 
    Each call to $\IPConcave_{\alpha,\beta,\eps,\delta,t}$ (Step~\ref{step:IPConcave}) is $(\epsilon,\delta)$-differentially private (Theorem~\ref{thm:IPConcave}). Using advanced composition (Theorem~\ref{thm:advancedcomposition}), we get that Algorithm~\ref{alg:IPConcaveHighDim} is $(\varepsilon\cdot\sqrt{2d\ln(1/\delta')},d\delta+\delta')$-differentially private for any choice of $\delta'>0$. 
    
    \smallskip
    
    \noindent {\bf Accuracy of Algorithm~\ref{alg:IPConcaveHighDim}}
    By Definition~\ref{def:approximation-wrt-Q} and the union bound, with probability $1-t\cdot \beta'$, all $S_1,\dots,S_{t}$ are $\alpha$-approximation of $S$ with respect to $Q$, and in the following we assume that this event occurs. This means $|Q(S,x)-Q(S_j,x)|\leq \alpha$ for all $x\in\mathbb{R}^d$ and $j\in[t]$. In particular, this implies that for every $i \in [d]$, in the $i$'th iteration of $\IPConcaveHighDim$, the subset $S_j$, for every $j \in [t]$, is an $\alpha$-approximation of $S$ with respect to the (one-dimensional) function $\hat{Q}_{\hat{x}_1,\ldots,\hat{x}_{i-1}}$ (defined in Step~\ref{step:hatQ}). Furthermore, since $Q(S,\cdot)$ is quasi-concave (Definition~\ref{def:concave-property}), then each function $\hat{Q}_{\hat{x}_1,\ldots,\hat{x}_{i-1}}(S,\cdot)$ is also quasi-concave.   
    Thus, by the accuracy guarantee of $\IPConcave$ (Theorem~\ref{thm:IPConcave}), in the $i$'th iteration of $\IPConcaveHighDim$, with probability $1-\beta$, the algorithm computes $\hat{x}_i$ such that
    \begin{align}\label{eq:hatQ}
        \hat{Q}_{\hx_1,\ldots,\hx_{i-1}}(S,\hat{x}_i)
        &\geq \max_{x_i \in \cX_i, x_{i+1}, \ldots, x_d \in \bbR} Q(S,(\hx_1,\ldots,\hx_{i-1},x_i,\ldots,x_d))-2\alpha\\
        &= \max_{x_i, \ldots, x_d \in \bbR} Q(S,(\hx_1,\ldots,\hx_{i-1},x_i,\ldots,x_d))-2\alpha,\nonumber
    \end{align}
    where the equality holds since $\cX_1,\ldots,\cX_d$ are proper finite domains for $Q$ by assumption. 
    Thus, with probability $1 - d\cdot \beta$, Equation~\ref{eq:hatQ} holds for any iteration $i$ during the execution of $\IPConcaveHighDim$, which means that 
    $\hat{Q}(\hx_1) \geq \max_{x \in \bbR^d} Q(S,x)-2\alpha$, and for every $i \in \set{2,\ldots,d}$: 
    \begin{align*}
        \hat{Q}_{\hx_1,\ldots,\hx_{i-1}}(S,\hx_i) \geq \max_{x_i, \ldots, x_d \in \bbR} Q(S,(\hx_1,\ldots,\hx_{i-1},x_i,\ldots,x_d))-2\alpha
        &= \hat{Q}_{\hx_1,\ldots,\hx_{i-2}}(S,\hx_{i-1}) - 2\alpha,
    \end{align*}
    and in the following, we assume that this event occurs.
    We conclude that the output $\hx = (\hx_1,\ldots,\hx_d)$ satisfies $Q(S,\hx) \geq \max_{x \in \bbR^d} Q(S,x) - 2\alpha d$, as required.
    
\end{proof}

\section{Differentially Private Tukey Median Approximation}\label{sec:Tukey}

In this section, we show how to use $\IPConcaveHighDim$ (Algorithm~\ref{alg:IPConcaveHighDim}) to privately approximating the Tukey median. We give additional preliminaries in Section~\ref{sec:pre-tukey-median}, and describe our Tukey median approximation algorithm in Section~\ref{sec:general-tukey-median}.

\subsection{Additional Preliminaries for Tukey Median}~\label{sec:pre-tukey-median}

\begin{definition}[Tukey depth, Tukey median~\cite{Tukey1975MathematicsAT}]\label{def:TukeyDepth}
    Let $S\in (\mathbb{R}^d)^n$ be a dataset of $n$ points and let $p\in\mathbb{R}^d$ be a point (not necessarily in $S$). The {\em Tukey depth} of $p$ in $S$ is the minimum over all hyperplanes $h$ going through $p$ of the number of points in $S$ on one side of $h$. A point is a {\em Tukey median} of $S$ if it has the maximum Tukey depth in $S$.
\end{definition}

We use $TD_S(p)$ for the Tukey depth of point $p$ in point set $S$. 

\begin{fact}~\label{fact:tukey depth of point in convex hull} 
    Let $S\in (\mathbb{R}^d)^n$ and let $p_1,\dots,p_k$ be points in $\mathbb{R}^d$ with Tukey depth at least $\gamma n$ in $S$. Then any point in the convex hull of $p_1,\dots,p_k$ has Tukey depth at least $\gamma n$ in $S$.
\end{fact}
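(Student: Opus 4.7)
The plan is to establish quasi-concavity of the function $p \mapsto TD_S(p)$ directly from the definition, by showing that for every hyperplane through a convex combination $p$ of the $p_i$'s, each of the two closed halfspaces it bounds contains at least $\gamma n$ points of $S$. I expect no real obstacle here: once the right $p_i$ is identified on each side of the hyperplane, the bound follows from an inclusion of halfspaces and the definition of Tukey depth applied at that $p_i$.

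Concretely, write $p = \sum_{i=1}^k \lambda_i p_i$ with $\lambda_i \geq 0$ and $\sum_i \lambda_i = 1$. Fix an arbitrary hyperplane $h$ through $p$ with normal vector $v$, so $h = \{x \in \mathbb{R}^d : \langle v, x \rangle = \langle v, p \rangle\}$, and let $h^+ = \{x : \langle v, x \rangle \geq \langle v, p \rangle\}$ and $h^- = \{x : \langle v, x \rangle \leq \langle v, p \rangle\}$ denote the two closed halfspaces bounded by $h$. It suffices to prove that $|S \cap h^+| \geq \gamma n$ and $|S \cap h^-| \geq \gamma n$.

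Since $\langle v, p \rangle = \sum_i \lambda_i \langle v, p_i \rangle$ is a convex combination of the values $\langle v, p_i \rangle$, there must exist indices $i^+$ and $i^-$ with $\langle v, p_{i^+} \rangle \geq \langle v, p \rangle$ and $\langle v, p_{i^-} \rangle \leq \langle v, p \rangle$. Consider the parallel hyperplane $h_{i^+}$ through $p_{i^+}$ with the same normal $v$, and the closed halfspace $h_{i^+}^+ = \{x : \langle v, x \rangle \geq \langle v, p_{i^+} \rangle\}$. By the choice of $i^+$, we have $h_{i^+}^+ \subseteq h^+$. By the definition of Tukey depth applied to $p_{i^+}$, the halfspace $h_{i^+}^+$ (whose boundary passes through $p_{i^+}$) contains at least $TD_S(p_{i^+}) \geq \gamma n$ points of $S$, and therefore so does $h^+$. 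A symmetric argument using $p_{i^-}$ and the parallel hyperplane through it gives $|S \cap h^-| \geq \gamma n$.

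Since $h$ was an arbitrary hyperplane through $p$, the definition of Tukey depth gives $TD_S(p) \geq \gamma n$, which is the claim. The only subtlety worth flagging is the use of closed halfspaces so that the boundary hyperplanes of $h_{i^+}$ and $h_{i^-}$ are genuinely covered by $h^+$ and $h^-$ respectively; this matches the convention that the Tukey depth counts points on either side of a hyperplane through the query point inclusively.
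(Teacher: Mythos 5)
Your proof is correct and uses essentially the same key idea as the paper's: pass a parallel hyperplane through an appropriate vertex $p_i$ lying on the relevant side, then invoke the Tukey depth of that vertex. The paper phrases it as a proof by contradiction (assume $p$ has low depth, find a witnessing hyperplane, slide it to a $p_i$ on the deficient side), whereas you argue directly by showing each closed halfspace bounded by an arbitrary hyperplane through $p$ already contains $\gamma n$ points; this is a cosmetic rather than substantive difference. If anything, your version is slightly more careful: you explicitly use the convex-combination identity $\langle v,p\rangle = \sum_i \lambda_i \langle v,p_i\rangle$ to locate $p_{i^+}$ and $p_{i^-}$, and you treat both halfspaces symmetrically, whereas the paper's argument locates a $p_i$ on the deficient side somewhat informally (``it must be that at least one of the points $p_1,\dots,p_k$ is inside $S_1$''). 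Both are fine.
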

\begin{proof}
    Let $p'$ be a point inside the convex hull of $p_1,\dots,p_k$ and assume $p'$ has Tukey depth $T<\gamma n$. Then there exists a hyperplane $h$ that goes through $p'$ with less than $\gamma n$ points on one of its sides. Denote by $S_1$ and $S_2$ be two half-spaces on the two sides of $h$. Without loss of generality, $S_1$ contains $T$ points of $S$. Note that since  $p'$ is in the convex hull of $p_1,\dots,p_k$, it must be that at least one of the points $p_1,\dots,p_k$ is inside $S_1$. Without loss of generality, assume $p_1\in S_1$.

    Let $h'$ be a hyperplane that goes through $p_1$ and is parallel to $h$. Then there are less than $T$ points on one side of $h'$, and hence $p_1$ has Tukey depth $T < \gamma n$, a contradiction.
\end{proof}

Helly's theorem~\cite{Danzer1963HellysTA} implies that a point with a high Tukey depth always exists.
\begin{fact}[\cite{Danzer1963HellysTA}]~\label{fact:Tukey depth of median}
    For every $S \in (\bbR^d)^n$, the center point (and hence also the Tukey median) of $S$ has Tukey depth at least $\frac{n}{d+1}$ in $S$.
\end{fact}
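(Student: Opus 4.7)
The plan is to apply Helly's theorem in $\mathbb{R}^d$ to a carefully chosen family of closed halfspaces determined by $S$. Set $k = \lceil n/(d+1) \rceil$, and let $\mathcal{F}$ be the collection of closed halfspaces $H \subseteq \mathbb{R}^d$ with $|S \cap H| \geq n - k + 1$. Each member is convex; and although $\mathcal{F}$ is a priori infinite, only finitely many halfspaces up to the pattern $S \cap H$ matter (each tight halfspace can be pushed inward until $\partial H$ meets $d$ affinely independent points of $S$), so Helly's theorem applies to a finite sub-collection with the same intersection.

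First I would verify the Helly hypothesis. For any $d+1$ members $H_1, \ldots, H_{d+1} \in \mathcal{F}$, the open complement of each contains at most $k - 1$ points of $S$, hence
\[
    \left|S \cap \bigcup_{i=1}^{d+1} H_i^c\right| \;\le\; (d+1)(k-1) \;\le\; n - 1 \;<\; n,
\]
leaving at least one point of $S$ inside $\bigcap_i H_i$. Helly then yields a common point $p \in \bigcap_{H \in \mathcal{F}} H$, which will be our center point candidate.

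It remains to show that $p$ has Tukey depth at least $k \geq n/(d+1)$. Suppose toward contradiction that some closed halfspace $H_0 = \{x : \langle a, x \rangle \le c\}$ contains $p$ and satisfies $|S \cap H_0| \le k - 1$. The naive move of passing to the opposite closed halfspace $\{x : \langle a, x \rangle \ge c\}$ fails because it only pins $p$ to the bounding hyperplane $\partial H_0$, which is compatible with $p \in \bigcap \mathcal{F}$. The fix is an infinitesimal outward translation: choose $\delta > 0$ smaller than the gap $\min\{\langle a, x \rangle - c : x \in S,\; \langle a, x \rangle > c\}$, which is positive because $S$ is finite, and set $K = \{x : \langle a, x \rangle \ge c + \delta\}$. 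Then $|S \cap K|$ equals the number of points with $\langle a, x \rangle > c$, namely $n - |S \cap H_0| \ge n - k + 1$, so $K \in \mathcal{F}$; but $\langle a, p \rangle \le c < c + \delta$ forces $p \notin K$, contradicting $p \in \bigcap \mathcal{F}$. The boundary case is the main obstacle: without the translation trick, Helly alone does not rule out $p$ lying on a hyperplane that separates off fewer than $k$ points of $S$, and one has to exploit finiteness of $S$ to push the offending hyperplane strictly past $p$ while preserving the point-count on the opposite side.
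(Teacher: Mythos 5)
The paper does not prove this fact; it cites Danzer, Gr\"unbaum, and Klee, so there is no in-paper argument to compare against. Your proof is the classical Helly-theorem derivation of the center point theorem, and the two substantive steps --- the $(d+1)$-wise intersection count (using $(d+1)(k-1) \le n-1$ for $k = \lceil n/(d+1) \rceil$) and the $\delta$-perturbation that upgrades the tight opposite halfspace to a member of $\mathcal{F}$ excluding $p$ --- are both correct. The one soft spot is the reduction from the infinite family $\mathcal{F}$ of closed halfspaces to a finite subfamily. Your heuristic ``push $H$ inward until $\partial H$ meets $d$ affinely independent points of $S$'' is not quite a proof: a pure inward translation brings $\partial H$ to touch only one point, and the subsequent rotation needed to reach $d$ affinely independent boundary points must be carried out without dropping $|S \cap H|$ below $n-k+1$, which is not automatic in degenerate configurations. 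Two standard ways to make this airtight, either of which leaves the rest of your argument unchanged: (1) replace $\mathcal{F}$ by the finite family of compact convex sets $\set{\mathrm{conv}(T) \colon T \subseteq S,\ |T| = n-k+1}$; every $H \in \mathcal{F}$ contains some such $\mathrm{conv}(T)$, so a common point of the finite family lies in every $H \in \mathcal{F}$, and the same counting gives the $(d+1)$-wise intersection property; or (2) intersect each $H \in \mathcal{F}$ with a fixed ball $B \supseteq S$ to restore compactness, apply finite Helly to every finite subfamily, and conclude by the finite intersection property of compact sets. This is a matter of polish rather than a genuine gap.
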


In the following lemma, we apply Theorem~\ref{thm:alphaApprox} (the $\alpha$-approximation theory of Vapnik and Chervonenkis~\cite{VC}) to determine an upper bound on the subset size $m$ such that a given set of points $S$ can be $(\alpha,\beta,m)$-approximated with respect to the Tukey depth function (Definition~\ref{def:approximation-wrt-Q}).\footnote{Similar statements are also provided in some works about approximating center points (e.g.~\cite{CEMST93}).}

\begin{lemma}\label{lem:VC theory 1}

Let $S \in (\bbR^d)^n$ and let $S'\subseteq S$ be a random subset of $S$ with cardinality $m=|S'| > O\left(\frac{d\cdot\log(d/\alpha)+\log(1/\beta)}{\alpha^2}\right)$. 
Then, with probability at least $1-\beta$, for all $p\in \mathbb{R}^d$, if $TD_{S'}(p)=\gamma' m$ and $TD_{S}(p)=\gamma n$ then $|\gamma-\gamma'|\leq\alpha$.
\end{lemma}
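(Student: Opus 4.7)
The plan is to reduce the claim to the standard VC-dimension $\alpha$-approximation theorem (Theorem~\ref{thm:alphaApprox}) applied to the set system of halfspaces in $\bbR^d$. I would first recall that the family $\mathcal{H}$ of (closed) halfspaces in $\bbR^d$ has VC dimension $d+1$. Viewing $S$ as a multiset in the domain $\bbR^d$ (so that the multiset version of Theorem~\ref{thm:alphaApprox} mentioned in its footnote applies), I would invoke Theorem~\ref{thm:alphaApprox} with this $\mathcal{H}$, the parameter $\alpha$, and the parameter $\beta$. The theorem then yields that with probability at least $1-\beta$, the random subset $S' \subseteq S$ of size $m \in O\!\left(\frac{d\log(d/\alpha)+\log(1/\beta)}{\alpha^2}\right)$ is an $\alpha$-approximation of $S$ for $\mathcal{H}$, i.e., for every halfspace $h \in \mathcal{H}$,
\begin{align*}
    \left| \tfrac{|S \cap h|}{n} - \tfrac{|S' \cap h|}{m} \right| \leq \alpha.
\end{align*}

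Next, I would translate this uniform bound into the required bound on Tukey depths. For a fixed point $p \in \bbR^d$, let $\mathcal{H}_p \subseteq \mathcal{H}$ denote the subfamily of halfspaces whose bounding hyperplane passes through $p$, so that by Definition~\ref{def:TukeyDepth} we have $TD_S(p) = \min_{h \in \mathcal{H}_p} |S \cap h|$ and $TD_{S'}(p) = \min_{h \in \mathcal{H}_p} |S' \cap h|$. Conditioning on the $\alpha$-approximation event above, I would argue in both directions: picking $h^{\ast} \in \mathcal{H}_p$ that minimizes $|S \cap h|$ gives
\begin{align*}
    \gamma' \;=\; \tfrac{TD_{S'}(p)}{m} \;\leq\; \tfrac{|S' \cap h^{\ast}|}{m} \;\leq\; \tfrac{|S \cap h^{\ast}|}{n} + \alpha \;=\; \gamma + \alpha,
\end{align*}
and symmetrically picking $h' \in \mathcal{H}_p$ that minimizes $|S' \cap h|$ gives $\gamma \leq \gamma' + \alpha$, so $|\gamma - \gamma'| \leq \alpha$. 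Since $p$ was arbitrary, the bound holds simultaneously for all $p \in \bbR^d$ on the same high-probability event, which is exactly the claim.

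There is essentially no hard step here: the only subtlety worth flagging is the open-versus-closed halfspace convention in the definition of Tukey depth. If ``one side of $h$'' is interpreted as an open halfspace rather than a closed one, the argument goes through unchanged using the family of open halfspaces, which has the same VC dimension $d+1$; alternatively one could pass to complements, since $|S \cap h^c|/n = 1 - |S \cap h|/n$ and the $\alpha$-approximation bound is preserved under complementation. Either way, the bottleneck of the proof is simply the VC bound for halfspaces, and the resulting sample-size expression matches the one stated in the lemma.
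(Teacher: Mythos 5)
Your proof is correct and follows essentially the same route as the paper: invoke the VC $\alpha$-approximation theorem (Theorem~\ref{thm:alphaApprox}) for the set system of halfspaces, then transfer the uniform guarantee to Tukey depths by restricting attention to halfspaces whose boundary passes through $p$. The only differences are cosmetic --- you argue both inequalities $\gamma' \leq \gamma + \alpha$ and $\gamma \leq \gamma' + \alpha$ directly rather than by contradiction as the paper does, and you state the VC dimension of general halfspaces in $\bbR^d$ as $d+1$ where the paper writes $d$ (your figure is the standard one, and in any case the discrepancy is absorbed by the $O(\cdot)$).
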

\begin{proof}
    This proof uses the well-known result that $d$ dimensional half-space has VC-dimension $d$. More exactly, for $X = \bbR^d$ and the set of all halfspaces $R$, the VC dimension $(X, R)$ is $d$. So by Theorem~\ref{thm:alphaApprox}, with probability $1-\beta$, an $m$-size random subset $S'$ is an $\alpha$-approximation of $S$ with respect to $R$ (Definition~\ref{def:alpha-approx}), which implies that for any hyper-plane, if there are $\gamma_1 n$ points of $S$ and $\gamma_2 m$ points of $S'$ on one side of $h$, then it holds that $|\gamma_1-\gamma_2|\leq\alpha$. In the following, we assume that this $1-\beta$ probability event occurs (denote by $E$).

   Fix a point $p$ with $TD_{S}(p)=\gamma n$, and assume towards a contradiction that $TD_{S'}(p) = \gamma' m$ for $\gamma' < \gamma-\alpha$ (the direction $\gamma' > \gamma + \alpha$ follows similarly). This implies that there exists a hyperplane that goes through $p$ such that there are at most $\gamma' m$ points of $S'$ on one side of it. Since event $E$ occurs, the same side of the hyperplane contains at most $(\gamma' + \alpha) n$ points of $S$. The latter implies that $TD_{S}(p)\leq (\gamma' + \alpha) n < \gamma n$, contradiction. We thus conclude that $|\gamma-\gamma'|\leq\alpha$.

\end{proof}
\subsubsection{Domain Extension}

In our algorithm, we find an approximate Tukey median one coordinate at a time, and it is possible that a point would fall outside the input domain $\calX^d$. We use the domain extension technique by~\cite{beimel2019private}.
\begin{lemma}[domain extension of $\calX$~\cite{beimel2019private}]\label{lem:domainExtension}
    Given a finite $\calX\subset \mathbb{R}$, there exist sets $\tilde{\calX}_1,\dots,\tilde{\calX}_d$, where  $|\tilde{\calX}_i|\leq (d|\calX|^{d^2(d+1)})^{2^d}$, such that for all $1\leq i\leq d$, for all $S\subset \calX^d$, and for all $(x^*_1,\dots,x^*_{i-1})\in \tilde{\calX}_1\times\dots\times\tilde{\calX}_{i-1}$, there exist $(\tilde{x}_i,\dots,\tilde{x}_d)\in\tilde{\calX}_i\times\dots\times\tilde{\calX}_d$ satisfying
    $$
    \max_{x_i,\dots,x_d\in\mathbb{R}} TD_S(x^*_1,\dots,x^*_{i-1},x_i,\dots,x_d)=TD_S(x_1^*,\dots,x^*_{i-1},\tilde{x}_i,\dots,\tilde{x}_d).
    $$
\end{lemma}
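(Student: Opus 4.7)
The plan is to construct $\tilde{\calX}_1,\ldots,\tilde{\calX}_d$ by a combinatorial recipe that depends only on $\calX$ and $d$, and then to verify the required property by backward induction on $i$, exploiting that $TD_S$ is integer-valued and piecewise constant on $\mathbb{R}^d$, with cell boundaries given by hyperplanes spanned by $d$-tuples of points of $S$. Let $\mathcal{H}$ denote the set of hyperplanes in $\mathbb{R}^d$ that are the affine hull of some $d$ points of $\calX^d$; then $|\mathcal{H}|\leq |\calX|^{d^2}$, and for every $S\subseteq \calX^d$ the function $TD_S$ is constant on each open cell of the arrangement $\mathcal{A}(\mathcal{H})$.

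First I would let $\tilde{\calX}_1$ be the set of first coordinates of all vertices of $\mathcal{A}(\mathcal{H})$ (intersections of $d$ hyperplanes from $\mathcal{H}$ that meet in a single point). For $i>1$, having already built $\tilde{\calX}_1,\ldots,\tilde{\calX}_{i-1}$, I would define $\tilde{\calX}_i$ as the union, over all tuples $(x^*_1,\ldots,x^*_{i-1})\in \tilde{\calX}_1\times\cdots\times\tilde{\calX}_{i-1}$, of the $x_i$-coordinates of the vertices of the restricted arrangement $\mathcal{A}(\mathcal{H})\cap L(x^*_1,\ldots,x^*_{i-1})$, where $L(x^*_1,\ldots,x^*_{i-1}):=\{x^*_1\}\times\cdots\times\{x^*_{i-1}\}\times \mathbb{R}^{d-i+1}$; each such vertex arises by intersecting $L(x^*_1,\ldots,x^*_{i-1})$ with $d-i+1$ hyperplanes from $\mathcal{H}$. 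Crucially, the construction is independent of $S$.

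I would then prove, by backward induction on $i$, that for every $S\subseteq \calX^d$ and every $(x^*_1,\ldots,x^*_{i-1})\in \tilde{\calX}_1\times\cdots\times\tilde{\calX}_{i-1}$, the maximum of $TD_S$ over $L(x^*_1,\ldots,x^*_{i-1})$ is attained at some tuple $(\tilde{x}_i,\ldots,\tilde{x}_d)\in \tilde{\calX}_i\times\cdots\times\tilde{\calX}_d$. For the base case $i=d$, the restriction of $TD_S$ to the line $L(x^*_1,\ldots,x^*_{d-1})$ is piecewise constant with breakpoints at intersections with hyperplanes of $\mathcal{H}$, and the maximum is attained at one of them, which lies in $\tilde{\calX}_d$ by construction. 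For the inductive step, the restriction of $TD_S$ to $L(x^*_1,\ldots,x^*_{i-1})$ is constant on cells of the restricted arrangement; by quasi-concavity on convex subsets (Fact~\ref{fact:tukey depth of point in convex hull}) one can move from any interior point of a max-depth cell to a vertex of the restricted arrangement without leaving the max-depth set. That vertex has $i$-th coordinate in $\tilde{\calX}_i$ by construction, and the induction hypothesis at index $i+1$ applied with prefix $(x^*_1,\ldots,x^*_{i-1},\tilde{x}_i)\in \tilde{\calX}_1\times\cdots\times\tilde{\calX}_i$ supplies the remaining coordinates.

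The main obstacle will be two-fold: the vertex-descent argument (one must verify that a max-depth point of the restricted arrangement can always be chosen at a vertex, where enough hyperplanes of $\mathcal{H}$ simultaneously pass through it) and controlling the cardinalities. Because $\tilde{\calX}_i$ is indexed by the full Cartesian product of the earlier extensions, one obtains a recurrence roughly of the form $\log_{|\calX|}|\tilde{\calX}_i| \leq \sum_{j<i}\log_{|\calX|}|\tilde{\calX}_j| + O(d^2(d-i+1))$; unrolling it shows that $\log_{|\calX|}|\tilde{\calX}_i|$ roughly doubles at each step, yielding the claimed doubly exponential bound $|\tilde{\calX}_i|\leq (d|\calX|^{d^2(d+1)})^{2^d}$.
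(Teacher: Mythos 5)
The paper cites this lemma from Beimel et al.~\cite{beimel2019private} without reproducing a proof, so there is no in-paper argument to compare against; I evaluate your proposal on its own. Your construction---take $\mathcal{H}$ to be the hyperplanes spanned by $d$-tuples of $\calX^d$, and build $\tilde{\calX}_i$ from the $i$-th coordinates of vertices of $\mathcal{A}(\mathcal{H})$ restricted to the affine slices determined by prefixes from $\tilde{\calX}_1\times\cdots\times\tilde{\calX}_{i-1}$---is the natural one for this result, and your doubling recursion on $\log_{|\calX|}|\tilde{\calX}_i|$ does land within the stated $(d|\calX|^{d^2(d+1)})^{2^d}$ bound. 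Two steps, however, need more than a gesture.

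First, the vertex-descent is not a consequence of quasi-concavity (Fact~\ref{fact:tukey depth of point in convex hull}) alone. Quasi-concavity gives that the max-depth set on $L(x^*_1,\dots,x^*_{i-1})$ is convex, but it does not by itself place its extreme points at arrangement vertices. What you actually need is the polyhedral structure: that $TD_S$ is constant on each open cell of $\mathcal{A}(\mathcal{H})$ (and upper semi-continuous across faces), so that super-level sets are polytopes whose facets lie in hyperplanes of $\mathcal{H}$, hence whose vertices (after intersecting with $L$) are vertices of the restricted arrangement. That piecewise-constancy is the crux and should be argued explicitly---e.g., by showing that for any $p$ a Tukey-optimal closed halfspace can be chosen with boundary through $d-1$ points of $S\subseteq\calX^d$, so the count changes only when $p$ crosses a hyperplane of $\mathcal{H}$. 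Second, the descent presumes the restricted max-depth set is bounded and that the restricted arrangement has vertices. Boundedness holds when the restricted maximum is positive (the set then sits inside $\mathrm{conv}(S)$), but if the restricted maximum is $0$ the set is all of $L$, and if every hyperplane of $\mathcal{H}$ is parallel to $L$ the restricted arrangement has no vertex at all; you need an explicit fallback (e.g., seed each $\tilde{\calX}_i$ with an arbitrary default value) to cover these degenerate cases. With those two repairs the argument is sound.
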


\subsection{Privately Estimating the Tukey Median}~\label{sec:general-tukey-median}
We use $\IPConcaveHighDim$ (Algorithm~\ref{alg:IPConcaveHighDim}) to find a point with high Tukey depth privately. Define the function 
\begin{align*}
    Q_{TD}(S,x)=\frac{TD_S(x)}{|S|}.
\end{align*}
Here we verify that $Q_{TD}(S,x)$ satisfies the accuracy requirements in Theorem~\ref{thm:IPConcaveHighDim}.

\begin{enumerate}
    \item \textbf{Approximation by a random subset:} By Lemma~\ref{lem:VC theory 1}, a random subset $S'\subseteq S$ of size $$m(\alpha,\beta) \in O\left(\frac{d\cdot\log(d/\alpha)+\log(1/\beta)}{\alpha^2}\right)$$ is an $\alpha$-approximation of $S$ with respect to $Q_{TD}$ with probability $1-\beta$.

    \item \textbf{Concavity:} It is guaranteed by Fact~\ref{fact:tukey depth of point in convex hull}.

    \item \textbf{Proper finite domains:} The domain extension $\tcX_1,\ldots,\tcX_d$ in Lemma~\ref{lem:domainExtension} provide proper finite domains for $Q_{TD}$ (Definition~\ref{def:proper-finite-domains}) with maximal domain size $X$ where $\log^* X \in O(\log^* (\size{\cX} + d))$.
\end{enumerate}

Thus, the following theorem is an immediate corollary of Theorem~\ref{thm:IPConcaveHighDim}.

\begin{theorem}\label{thm:TD:1}
    Let $\eps > 0$, $\delta, \alpha,\beta \in (0,1)$, $n,d,t \in \bbN$ with $n \geq t$ and let $\cX \subseteq \bbR$ be a finite domain. Let $\Ac \colon (\cX^d)^n \rightarrow \bbR^d$ be the algorithm that on input $S$, computes $\IPConcaveHighDim_{\alpha,\beta,\eps,\delta,t}(S, Q_{TD})$ with the finite domains $\tcX_1, \tcX_2, \ldots, \tcX_d$ from Lemma~\ref{lem:domainExtension} and outputs its output, and let $X = \max_{i \in [d]}\set{|\tcX_i|}$. Then
    \begin{enumerate}
        \item \textbf{Privacy:} $\Ac$ is $(\eps \cdot\sqrt{2d\ln(1/\delta')},d\delta+\delta')$-differentially private for any choice of $\delta'>0$.

        \item \textbf{Accuracy:} Let $\lambda_{opt} \size{S}$ be the Tukey depth of the Tukey median in $S \in \cX^n$. Assuming that $t = n_{IP}(X,\beta,\eps,\delta)$ (the sample complexity of $\PrivateIP$ in Theorem~\ref{thm:1-d interior point}) and that
        \begin{align*}
            n \geq m(\alpha,\beta)\cdot n_{IP}(X,\beta,\eps,\delta)
            &\in O\paren{\frac{d \log(d/\alpha) + \log(1/\beta)}{\alpha^2} \cdot \frac{\log^* (\size{\cX} + d) \cdot \log^2\paren{\frac{\log^* (\size{\cX} + d)}{\beta \delta}}}{\eps}}\\
            &= \tilde{O}\paren{\frac{\log^* \size{\cX}\cdot (d + \log(1/\beta)) \cdot \log^2(1/\delta)}{\eps \alpha^2}},
        \end{align*}
        then with probability $1 - (t + d) \beta$, $\Ac(S)$ outputs a point $\hx \in \bbR^d$ with $TD_S(\hx) \geq (\lambda_{opt} - 2d\alpha) n$.
    \end{enumerate}
\end{theorem}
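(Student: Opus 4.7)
The plan is to derive this theorem as an essentially immediate corollary of Theorem~\ref{thm:IPConcaveHighDim}, applied to the specific function $Q_{TD}$ together with the domain extension from Lemma~\ref{lem:domainExtension}. The preliminary work has already been done in the bulleted discussion right before the statement: Fact~\ref{fact:tukey depth of point in convex hull} gives high-dimensional quasi-concavity of $Q_{TD}(S,\cdot)$; Lemma~\ref{lem:VC theory 1} gives that $(S,Q_{TD})$ is $(\alpha,\beta,m(\alpha,\beta))$-approximated for $m(\alpha,\beta) = O\paren{(d\log(d/\alpha)+\log(1/\beta))/\alpha^2}$; and Lemma~\ref{lem:domainExtension} supplies proper finite domains $\tcX_1,\ldots,\tcX_d$. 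So the first step of the proof is simply to record these three facts and note that they are precisely the hypotheses of Theorem~\ref{thm:IPConcaveHighDim}.

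Next, I would invoke Theorem~\ref{thm:IPConcaveHighDim} with $t = n_{IP}(X,\beta,\eps,\delta)$ and the confidence parameter $\beta' = \beta$. The privacy clause transfers verbatim, yielding the $(\eps\sqrt{2d\ln(1/\delta')}, d\delta+\delta')$ guarantee. For the accuracy clause, the theorem gives that with probability at least $1-t\beta-d\beta=1-(t+d)\beta$ the output $\hx$ satisfies $|Q_{TD}(S,\hx)-\max_{x}Q_{TD}(S,x)|\leq 2\alpha d$. Multiplying through by $n=|S|$ and using $Q_{TD}(S,\cdot) = TD_S(\cdot)/n$ converts this into $TD_S(\hx)\geq (\lambda_{opt}-2\alpha d) n$, which is exactly the claimed accuracy bound.

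The third step is to convert the abstract requirement $\floor{n/t}\geq m(\alpha,\beta)$ into the stated concrete sample complexity. Since $t=n_{IP}(X,\beta,\eps,\delta)$, it suffices to take $n \geq m(\alpha,\beta)\cdot n_{IP}(X,\beta,\eps,\delta)$. Plugging in the bound of Theorem~\ref{thm:1-d interior point} for $n_{IP}$ and the bound on $m(\alpha,\beta)$ gives the first displayed inequality. The simplification to $\tilde{O}(\log^*|\cX| \cdot (d+\log(1/\beta))\cdot\log^2(1/\delta)/(\eps\alpha^2))$ then follows from the observation that $\log X \leq 2^d(\log d + d^2(d+1)\log|\cX|)$ by Lemma~\ref{lem:domainExtension}, so after iterating logarithms a constant number of times we get $\log^* X \in O(\log^*|\cX| + \log^* d) \subseteq O(\log^*(|\cX|+d))$, which is absorbed into the $\tilde{O}$.

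The only non-mechanical step is the $\log^*$ bound on the extended domain; the nested iterated powers in Lemma~\ref{lem:domainExtension} look alarming, but because $\log^*$ is insensitive to at most finitely many exponentials, the $2^d$ outer exponent only contributes an additive $O(\log^* d)$. Everything else is bookkeeping: verifying the three hypotheses, reading off the privacy/accuracy guarantees of Theorem~\ref{thm:IPConcaveHighDim}, and rescaling by $|S|$ to translate the normalized approximation error into a bound on $TD_S(\hx)$.
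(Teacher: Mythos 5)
Your proposal is correct and matches the paper's approach exactly: the paper likewise treats Theorem~\ref{thm:TD:1} as an immediate corollary of Theorem~\ref{thm:IPConcaveHighDim}, after first verifying the same three hypotheses (quasi-concavity via Fact~\ref{fact:tukey depth of point in convex hull}, $(\alpha,\beta,m)$-approximation via Lemma~\ref{lem:VC theory 1}, and proper finite domains with $\log^* X \in O(\log^*(|\cX|+d))$ via Lemma~\ref{lem:domainExtension}), and then reads off the privacy and accuracy guarantees with $\beta' = \beta$ and $t = n_{IP}(X,\beta,\eps,\delta)$. The only difference is presentational: you spell out the rescaling $Q_{TD} = TD_S/n$ and the $\log^*$ collapse in slightly more detail, while the paper leaves these as asserted bullet points.
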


Recall that by Fact~\ref{fact:Tukey depth of median}, the Tukey median of any $S \in \cX^d$ has Tukey depth at least $\lambda_{opt} n$ for $\lambda_{opt} = \frac1{d+1}$. Thus, by substituting $\varepsilon$ by $\frac{\varepsilon}{\sqrt{2d\ln(2/\delta)}}$, $\delta$ by $\frac{\delta}{2d}$, $\delta'$ by $\delta/2$, $\alpha$ by $\frac{\alpha}{2d(d+1)}$, and $\beta$ by $\frac{\beta}{t + d}$, we obtain our main theorem for estimating the Tukey median.

\begin{theorem}[Restatement of Theorem~\ref{thm:intro:TD}]\label{thm:TD}
    Let $\eps > 0$, $\delta, \alpha,\beta \in (0,1)$, $n,d \in \bbN$ and let $\cX \subseteq \bbR$ be a finite domain. There exists an $(\eps,\delta)$-differentially private algorithm $\Ac \colon (\cX^d)^n \rightarrow \bbR^d$ and a value
    \begin{align*}
        n_{min} \in 
        \tilde{O}\paren{\frac{d^{4.5}\cdot \log^* \size{\cX} \cdot (d + \log(1/\beta))\cdot \log^{2.5}(1/\delta))}{\eps \alpha^2}}
        =\tilde{O}_{\alpha,\beta,\varepsilon,\delta}\left(d^{5.5}\cdot\log^*|\calX|\right),
    \end{align*}
    such that if $n \geq n_{min}$, then for any $S \in (\cX^d)^n$ it holds that
    \begin{align*}
        \ppr{\hx \sim \Ac(S)}{TD_S(\hx) \geq \frac{1 - \alpha}{d+1}\cdot n} \geq 1 - \beta.
    \end{align*}
\end{theorem}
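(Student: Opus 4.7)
}
The plan is to derive Theorem~\ref{thm:TD} as a direct corollary of Theorem~\ref{thm:TD:1} by instantiating the latter with a careful choice of parameters, together with the Helly-type bound on the Tukey depth of the Tukey median. The algorithm $\Ac$ will simply be the algorithm from Theorem~\ref{thm:TD:1} run with parameters $(\alpha',\beta',\eps',\delta',t)$ chosen so that the privacy guarantee collapses to $(\eps,\delta)$, the error guarantee collapses to the desired $\frac{1-\alpha}{d+1}\cdot n$ Tukey depth bound, and the failure probability collapses to $\beta$.

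First, for privacy, Theorem~\ref{thm:TD:1} gives $(\eps'\cdot\sqrt{2d\ln(1/\delta'')},d\delta'+\delta'')$-differential privacy. To obtain $(\eps,\delta)$-DP, I set $\delta''=\delta/2$, $\delta'=\delta/(2d)$, and $\eps'=\eps/\sqrt{2d\ln(2/\delta)}$. Second, for accuracy, Fact~\ref{fact:Tukey depth of median} guarantees that the Tukey median of any $S\in(\cX^d)^n$ has Tukey depth at least $\lambda_{opt}\cdot n$ with $\lambda_{opt}\geq 1/(d+1)$, and Theorem~\ref{thm:TD:1} guarantees Tukey depth at least $(\lambda_{opt}-2d\alpha')\cdot n$. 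To get the required bound $\tfrac{1-\alpha}{d+1}\cdot n$, it suffices that $2d\alpha'\leq \alpha/(d+1)$, so I choose $\alpha'=\alpha/(2d(d+1))$. Third, the failure probability in Theorem~\ref{thm:TD:1} is $(t+d)\cdot\beta'$, so I set $\beta'=\beta/(t+d)$.

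Finally, substituting into the sample-complexity expression
\[
    n \;\in\; \tilde{O}\!\paren{\frac{\log^*|\cX|\cdot (d+\log(1/\beta'))\cdot\log^2(1/\delta')}{\eps'\,(\alpha')^2}},
\]
I plug in $1/\eps'=\sqrt{2d\ln(2/\delta)}/\eps$ (contributing a factor of $\sqrt{d}$ times $\mathrm{polylog}(1/\delta)$), $(\alpha')^{-2}=O(d^4/\alpha^2)$, $\log(1/\delta')=O(\log(d/\delta))$, and $\log(1/\beta')=O(\log((t+d)/\beta))$ (polylogarithmic). Combining these gives
\[
    n \;\in\; \tilde{O}\!\paren{\frac{d^{4.5}\cdot\log^*|\cX|\cdot(d+\log(1/\beta))\cdot\log^{2.5}(1/\delta)}{\eps\,\alpha^2}} \;=\; \tilde{O}_{\alpha,\beta,\eps,\delta}\!\paren{d^{5.5}\cdot\log^*|\cX|},
\]
matching the claimed $n_{min}$.

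The only substantive step is this parameter bookkeeping; there is no new conceptual obstacle, since all three ingredients (the generic optimizer, the VC-based approximation bound for Tukey depth, and the domain-extension construction) have already been assembled in Theorem~\ref{thm:TD:1}. The mild care needed is in (i) ensuring that the advanced-composition budget $\eps'\sqrt{2d\ln(1/\delta'')}$ indeed sums to $\eps$ after the substitution and (ii) verifying that the slack factor $1/(2d(d+1))$ in $\alpha'$ accounts for both the dimension-factor $2d$ in the error and the multiplicative scaling by $1/(d+1)$ coming from $\lambda_{opt}$. Both are routine.
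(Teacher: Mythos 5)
Your proposal is correct and follows exactly the paper's own derivation: Theorem~\ref{thm:TD} is obtained from Theorem~\ref{thm:TD:1} together with Fact~\ref{fact:Tukey depth of median} by the very same substitutions ($\eps\to\eps/\sqrt{2d\ln(2/\delta)}$, $\delta\to\delta/(2d)$, $\delta'\to\delta/2$, $\alpha\to\alpha/(2d(d+1))$, $\beta\to\beta/(t+d)$), and your bookkeeping of the resulting $d^{4.5}$ and $\log^{2.5}(1/\delta)$ factors matches the paper's stated $n_{min}$.
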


\section{Privately Learning Halfspaces via Privately Solving Linear Feasibility Problems}\label{sec:halfspaces}

In this section, we show how to use $\IPConcaveHighDim$ (Algorithm~\ref{alg:IPConcaveHighDim}) to solve linear feasibility problems privately and thus imply a private half-spaces learner. We give additional preliminaries in Section~\ref{sec:pre-linear-feasbility}, and describe our algorithm for solving linear feasibility problems in Section~\ref{sec:alg-linear-feasibility}.

\subsection{Additional Preliminaries for Linear Feasibility Problems}~\label{sec:pre-linear-feasbility}

\paragraph{Notation.} For $a\in\mathbb{Z}^+$ we use $[[a]]$ to denote the set $\{-a,-a+1,\dots,a\}$. For two sets of integers $A,B$ and a scalar $c$, we define $A/B=\{a/b: a\in A \wedge b\in B\}$ and $c\cdot A = \{ca: a\in A\}$. For $x = (x_1,\ldots,x_d)$ and $y = (y_1,\ldots,y_d)$, we denote by $\ip{x,y} = \sum_{i=1}^d x_i y_i$ the inner-product of $x$ and $y$.

Recall that for $a\in\mathbb{R}^d$ and $w\in\mathbb{R}$, we define the predicate $h_{a,w}(x) = \left(\langle a,x\rangle \geq w\right)$. Let $H_{a,w}$ be the halfspace $\{x\in\mathbb{R}^d \colon h_{a,w}(x) = 1\}$.

\paragraph{Linear Feasibility Problem~\cite{kaplan2020private}.}
Let $X\in\mathbb{N}$ be a parameter and $\mathcal{X}=[[X]]$. 
In a linear feasibility problem, we are given a feasible collection $S\in(\mathcal{X}^{d+1})^n$ of $n$ linear constraints over $d$ variables $x_1,\dots,x_d$. The target is to find a solution in $\mathbb{R}^d$ that satisfies all (or most) constraints. Each constraint has the form $h_{a,w}(x) = 1$ for some $a\in\mathcal{X}^d$ and $w\in \mathcal{X}$. I.e., a linear feasibility problem is an LP problem with integer coefficients between $-X$ and $X$ and without an objective function.

\begin{definition}[$(\alpha,\beta)$-solving $(X,d,n)$-linear feasibility~\cite{kaplan2020private}]
    We say that an algorithm $(\alpha,\beta)$-solves $(X,d,n)$-linear feasibility if for every feasible collection of $n$ linear constraints over $d$ variables with coefficients in $\mathcal{X}$, with probability $1-\beta$ the algorithm finds a solution $(x_1,\dots,x_d)$ that satisfies at least $(1-\alpha)n$ constraints.
\end{definition}

Notice that there exists a reduction from PAC learning of halfspaces in $d$ dimensions when the domain of labeled examples is $\calX^d$ to solving linear feasibility. 
Each input labeled point $((x_1,\dots,x_d),y)\in\calX^d\times\{-1,1\}$ is transformed to a linear constraint $h_{y\cdot(x_1,\dots,x_d,-1),0}$.
By Theorem~\ref{thm:learn vc}, if we can $(\alpha,\beta)$-solve $(X,d+1,n)$ linear feasibility problems with $n\geq \tilde{O}_{\alpha,\beta}(d)$, then the reduction results in an $(O(\alpha),O(\beta))$-PAC learner for $d$ dimensional halfspaces.\footnote{We remark that the reduction only works smoothly when the points are assumed to be in general position, but this assumption can be eliminated (see Section 5.1 in \cite{kaplan2020private}).}

For any point, it is natural to consider how many linear constraints it satisfies. We define it as the $depth$ of this point. However, as mentioned by \cite{kaplan2020private}, the $depth$ function is not quasi-concave, i.e., for some points $p_1,\dots,p_k$, the $depth$ of the point inside the convex hull of $p_1,\dots,p_k$ cannot guarantee to be as high as $p_1,\dots,p_k$ (see the left figure of Figure~\ref{fig:depth-cdepth}, $A$ and 
$B$ have $depth$ 1, but $C$ only has $depth$ 0). Kaplan et. al.~\cite{kaplan2020private} consider the convexification of the $depth$, which is called $cdepth$. The property of $cdepth$ guarantees that the $cdepth$ of any point inside the convex hull of $p_1,\dots,p_k$ is as high as that of $p_1,\dots,p_k$ (see the right figure of Figure~\ref{fig:depth-cdepth}, $A$ and 
$B$ have $cdepth$ 1, $C$ also has $cdepth$ 1).
\begin{figure}[ht]
\includegraphics[scale=.5]{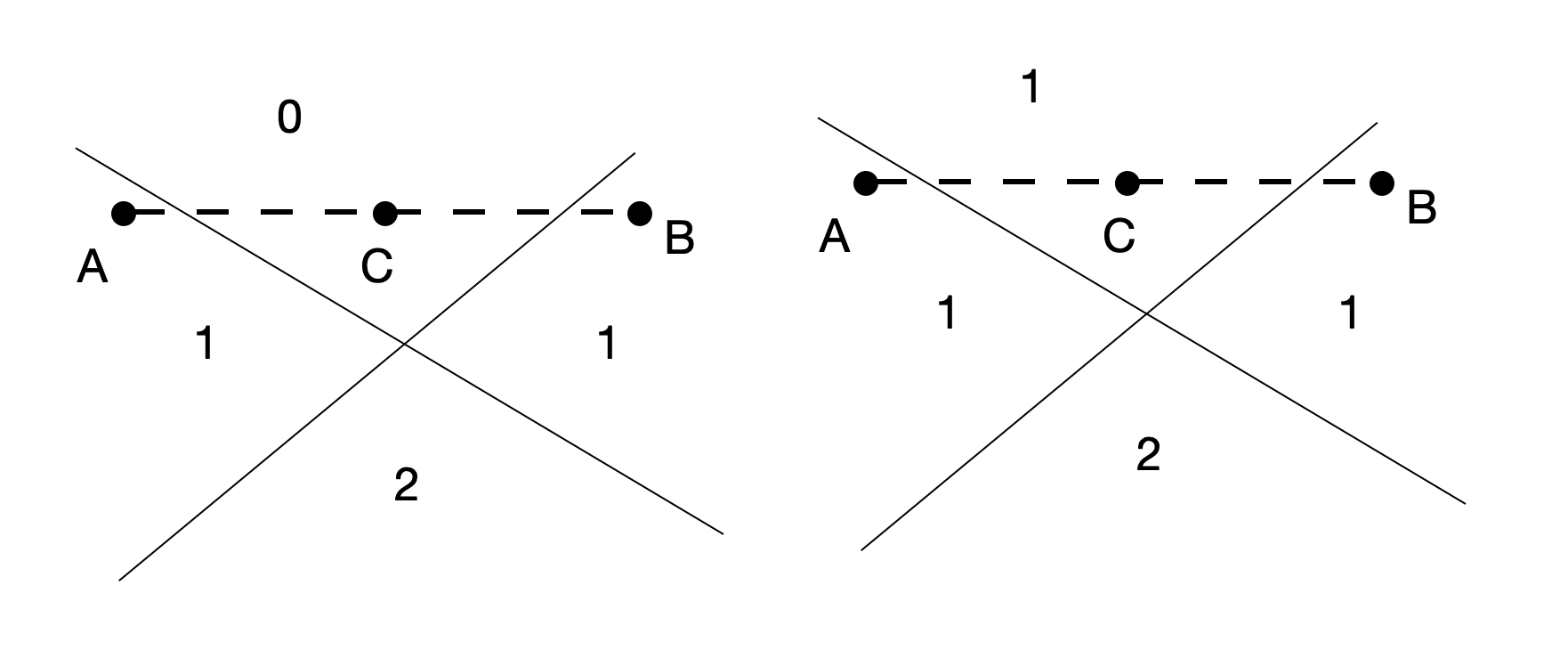}
\caption{Depth and cdepth \label{fig:depth-cdepth}}
\end{figure}

\begin{definition}[Convexification of a function $f:(\mathcal{X}^{d+1})^*\times\mathbb{R}^d\rightarrow \mathbb{R}$~\cite{kaplan2020private}]
    For $S\in(\mathcal{X}^{d+1})^*$ and $y\in\mathbb{R}$, let $\mathcal{D}_{S}(y)=\{z\in\mathbb{R}^d:f(S,z)\geq y\}$. The convexification of $f$ is the function $f_{Conv}:\mathcal{X}^*\times\mathbb{R}^d\rightarrow \mathbb{R}$ defined by $f_{Conv}(S,x)=\max\{y\in\mathbb{R}:x\in ConvexHull(\mathcal{D}_{S(y)})\}$.
\end{definition}
I.e., if $x$ is in the convex hall of points $Z\subset \mathbb{R}^d$\remove{such that $f(S,z)\geq f(S,x)$ for all $z\in Z$} then $f_{Conv}(S,x)$ is at least $\min_{z\in Z}(f(S,z))$. 

\begin{definition}[$depth$ and $cdepth$~\cite{kaplan2020private}]\label{def:depth-cdepth}
Let $S$ be a collection of predicates. Define $depth_{S}(x)$ to be the number of predicates $h_{a,w}$ in $S$ such that $h_{a,w}(x)=1$.
Let $cdepth_{S}(x)=f_{Conv}(\mathcal{S},x)$ for the function $f(S,x)=depth_{\mathcal{S}}(x)$.
\end{definition}

Similarly to Fact~\ref{fact:tukey depth of point in convex hull}, we have
\begin{fact}~\label{fact:convex of cdepth}
    Let $S$ be a set of $n$ linear constraints and $p_1,\dots,p_k$ be points with $cdepth_S(p_i)\geq \lambda n$ for all $i\in[k]$. Then any point $p$ in the convex hull of $\{p_i\}_{i\in[k]}$ satisfies $cdepth_S(p)\geq \lambda n$.
\end{fact}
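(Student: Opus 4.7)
The plan is to unfold the definition of $cdepth$ and reduce the claim to the trivial fact that a convex hull is itself convex. Specifically, recall that $cdepth_S(x) = f_{Conv}(S,x) = \max\{y \in \mathbb{R} : x \in ConvexHull(\mathcal{D}_S(y))\}$, where $\mathcal{D}_S(y) = \{z \in \mathbb{R}^d : depth_S(z) \geq y\}$. The first step is to observe that the family $\{\mathcal{D}_S(y)\}_y$ is monotone decreasing in $y$ (if $y' \leq y$ then $\mathcal{D}_S(y) \subseteq \mathcal{D}_S(y')$), and consequently so is the family of convex hulls. Because $depth_S$ takes only finitely many values, the max in the definition of $cdepth_S$ is well-defined, and the condition $cdepth_S(p) \geq \lambda n$ is equivalent to $p \in ConvexHull(\mathcal{D}_S(\lambda n))$.

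Using this equivalence, I would translate the hypothesis $cdepth_S(p_i) \geq \lambda n$ for every $i \in [k]$ into the statement $p_i \in ConvexHull(\mathcal{D}_S(\lambda n))$ for every $i \in [k]$. Now $ConvexHull(\mathcal{D}_S(\lambda n))$ is a convex set by definition, so any convex combination of points lying inside it also lies inside it. In particular, every $p \in ConvexHull(\{p_1,\ldots,p_k\})$, being a convex combination of $p_1,\ldots,p_k$, satisfies $p \in ConvexHull(\mathcal{D}_S(\lambda n))$. Applying the equivalence in the other direction yields $cdepth_S(p) \geq \lambda n$, which is exactly the claim.

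There is no real obstacle here beyond making sure the equivalence $cdepth_S(p) \geq \lambda n \iff p \in ConvexHull(\mathcal{D}_S(\lambda n))$ is justified cleanly; the monotonicity of $\mathcal{D}_S$ and the integrality of $depth_S$ take care of the fact that the supremum in the definition of $f_{Conv}$ is attained, so no limiting argument is needed. The proof is therefore essentially a one-line consequence of the definition of convexification, which is also exactly the property that motivated introducing $cdepth$ as a proxy for $depth$ in the first place.
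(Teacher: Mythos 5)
Your proof is correct and follows the same underlying idea as the paper's: both reduce the claim to the observation that $cdepth_S(p_i)\geq\lambda n$ is equivalent to $p_i\in\mathrm{ConvexHull}(\mathcal{D}_S(\lambda n))$, and then use convexity of that set. The paper writes the argument out via explicit convex combinations of the deep points $y_{i,j}$, while you argue more abstractly using the convexity of the hull directly; you are also a bit more careful in noting that finiteness of the range of $depth_S$ and monotonicity of $y\mapsto\mathcal{D}_S(y)$ are what make the supremum in the definition of $cdepth$ attained, a point the paper glosses over.
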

\begin{proof}
    By the assumption, each of the points $p_i$ can be written as a convex combination $p_i=\sum_{j\in[k_i]} \eta_{i,j}y_{i,j}$ of  $k_i$ points $y_{i,1},\dots, y_{i,k_i}$ with $depth_S(y_{i,j})\geq\lambda n$. Since $p$ is in the convex hall of $p_1,\ldots,p_k$, it is also in the convex hull of $\{y_{i,j}\}$.
\end{proof}

\begin{fact}[\cite{kaplan2020private}]\label{fact:cdepth to depth}
    For any $S\in(\mathbb{R}^d\times \mathbb{R})^*$ and any $x\in\mathbb{R}^d$, it holds that 
    $$
    depth_{S}(x)\geq (d+1)\cdot cdepth_{S}(x)-d|S|.
    $$
\end{fact}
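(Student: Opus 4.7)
The plan is to reduce this to a short Carathéodory-plus-union-bound argument, exploiting the fact that each predicate in $S$ is a halfspace and is therefore preserved by convex combinations in a one-sided way.

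First, I would unwind the definition of $cdepth$. Write $y^\ast = cdepth_S(x)$. By definition of the convexification, $x$ lies in the convex hull of $\mathcal{D}_S(y^\ast) = \{z \in \mathbb{R}^d : depth_S(z) \ge y^\ast\}$. Since $\mathcal{D}_S(y^\ast) \subseteq \mathbb{R}^d$, Carathéodory's theorem gives points $z_1,\dots,z_{d+1} \in \mathcal{D}_S(y^\ast)$ and weights $\lambda_1,\dots,\lambda_{d+1} \ge 0$ summing to $1$ such that $x = \sum_{i=1}^{d+1} \lambda_i z_i$ and $depth_S(z_i) \ge y^\ast$ for every $i$. (If $x$ lies in the convex hull of fewer points, pad with copies; this only helps the bound.)

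The next observation is the key one: if a predicate $h_{a,w} \in S$ is not satisfied by $x$, then it must fail on at least one $z_i$. Indeed, by linearity of the inner product,
\begin{align*}
    \langle a, x\rangle \;=\; \sum_{i=1}^{d+1} \lambda_i \, \langle a, z_i\rangle,
\end{align*}
so $\langle a, x\rangle < w$ forces at least one index $i$ with $\langle a, z_i\rangle < w$, i.e.\ with $h_{a,w}(z_i) = 0$. Let $F \subseteq S$ be the set of predicates failed by $x$ and $F_i \subseteq S$ be the set of predicates failed by $z_i$. The previous sentence says $F \subseteq \bigcup_{i=1}^{d+1} F_i$, so by a union bound,
\begin{align*}
    |F| \;\le\; \sum_{i=1}^{d+1} |F_i| \;=\; \sum_{i=1}^{d+1} \bigl(|S| - depth_S(z_i)\bigr) \;\le\; (d+1)\bigl(|S| - y^\ast\bigr).
\end{align*}

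Finally I would rearrange: $depth_S(x) = |S| - |F| \ge |S| - (d+1)(|S| - y^\ast) = (d+1) y^\ast - d|S|$, which is exactly the claimed inequality. I do not anticipate a real obstacle here; the only subtle point is being slightly careful with the definition of $cdepth$ as a supremum (using that $depth$ is integer-valued so the supremum is attained, or otherwise taking a limit of witnesses $z_1,\dots,z_{d+1}$), but this is routine.
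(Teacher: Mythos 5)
The paper states this fact without proof, citing \cite{kaplan2020private}, so there is no in-paper argument to compare against. Your proof is correct and is essentially the standard argument from that reference: unwind $cdepth$ via the convexification definition, apply Carath\'eodory to write $x$ as a convex combination of at most $d+1$ points of depth $\ge y^\ast$, observe by linearity that any halfspace constraint violated by $x$ must be violated by at least one of the $d+1$ witnesses, and union-bound the failures. The only delicate step --- that the $\max$ in the definition of $f_{Conv}$ is attained --- you flag correctly, and it goes through because $depth_S$ is integer-valued so $\mathcal{D}_S(y)$ is a right-continuous step function of $y$, making the achievable set of $y$ a closed half-line.
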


By the above fact, if we can find a point with $cdepth_{S}\geq (1-\alpha)|S|$ where $\alpha \ll 1/(d+1)$, then this point has $depth_S\approx |S|$.

Analogously to Lemma~\ref{lem:domainExtension}, we define the domain extension for linear feasibility problems. Unlike in Lemma~\ref{lem:domainExtension}, where the extension did not depend on the input for the problem of approximating the Tukey median, the extension for linear feasibility depends on the input. 
Given a collection $S$ of predicates, the extension is computed iteratively, coordinate by coordinate, where the input for the $i^{th}$ iteration is a prefix $(x^*_1,\ldots,x^*_{i-1})$ in the extension obtained in iterations $1$ to $i-1$. 
\begin{definition}[domain extension for linear feasibility~\cite{kaplan2020private}]\label{def:domain for linear feasibility}
    We define the $d$ domains $\{\tilde{\calX}_i\}^d_{i=1}$ iteratively. For $i=1$, let $\tilde{\calX}_1=\tilde{\calX}_1'/\tilde{\calX}_1''$ where $\tilde{\calX}_1':=[[(d\cdot d!)\cdot X^d]]$ and $\tilde{\calX}_1'':=([[d!\cdot X^d]])\backslash\{0\}$. 
    For $i>1$ and given $x^*_{i-1}=s_{i-1}/t_{i-1} \in \tilde{\calX}_{i-1}$ where $s_{i-1}\in \tilde{\calX}_{i-1}'$ and $t_{i-1}\in\tilde{\calX}_{i-1}''$, let $\tilde{\calX}_i=\tilde{\calX}_i'/\tilde{\calX}_i''$ where $\tilde{\calX}_i'=[[(d\cdot d!)^i\cdot X^{di}]]$ and $\tilde{\calX}_i''=([[ d!\cdot t_{i-1} \cdot X^d]])\backslash\{0\}$.
\end{definition}

\begin{theorem}[\cite{kaplan2020private}]\label{thm:domain for linear feasibility}
    Let $X\in\mathbb{N}$, $\calX\in[[\pm X]]$, $S\in(\calX^d\times \calX)^*$, $i\in[d]$.
    Assume that $\tilde{\calX}_j$ is according to Definition~\ref{def:domain for linear feasibility} for all  $j\in[i]$ where for $j<i$ the coordinates $x^*_j$ are picked from $\tilde{\calX}_j$. 
    Then there exists $x^*_i\in\tilde{\calX}_i$ such that
    $$   \max_{\tilde{x}_{i+1},\dots,\tilde{x}_d\in\mathbb{R}}cdepth_{S}(x_1^*,\dots,x_{i-1}^*,x_i^*,\tilde{x}_{i+1},\dots,\tilde{x}_d)=\max_{\tilde{x}_{i},\dots,\tilde{x}_d\in\mathbb{R}}cdepth_{S}(x_1^*,\dots,x_{i-1}^*,\tilde{x}_i,\tilde{x}_{i+1},\dots,\tilde{x}_d)
    $$
\end{theorem}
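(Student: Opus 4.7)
The plan is to locate a rational maximizer of $cdepth_S$ on the affine slice $A = \{y \in \bbR^d : y_j = x_j^*\text{ for all }j < i\}$ using convex geometry, and then bound denominators via Cramer's rule. Let $k^* = \max_{y \in A} cdepth_S(y)$, which is well defined since $cdepth_S$ takes only finitely many values (at most $n+1$). By the definition of $cdepth$, the set of maximizers on $A$ equals the convex polytope $M = A \cap ConvexHull(\mathcal{D}_S(k^*))$, whose projection $\pi_i(M)$ onto coordinate $i$ is a non-empty interval in $\bbR$. My target is to select $x_i^* = \pi_i(y^*)$ for some extremal vertex $y^*$ of $M$ whose $i$-th coordinate fits the form prescribed by Definition~\ref{def:domain for linear feasibility}.

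To analyze $y^*$, I would apply Carath\'eodory's theorem: since $y^* \in ConvexHull(\mathcal{D}_S(k^*))$, it is a convex combination of at most $d+1$ vertices $v^{(1)}, \ldots, v^{(d+1)}$ of the underlying hyperplane arrangement. Each $v^{(r)}$ is determined by $d$ tight linear constraints of the form $\langle a, v\rangle = w$ with $a \in \calX^d$ and $w \in \calX$, so by Cramer's rule its coordinates are rationals $p_{r,j}/q_r$ with $|p_{r,j}|, q_r \leq d! X^d$. Furthermore, since $y^*$ is a \emph{vertex} of $M$ — which is cut out inside $ConvexHull(\mathcal{D}_S(k^*))$ by the $i-1$ affine equations $y_j = x_j^*$ — a basic-feasible-solution argument shows we may assume at most $i$ of the weights $\lambda_r$ are nonzero. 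The weights thus satisfy an $i\times i$ linear system whose rows are $\sum_r \lambda_r = 1$ and $\sum_r \lambda_r\,(p_{r,j}/q_r) = s_j/t_j$ for $j < i$, and $x_i^* = \sum_r \lambda_r\,(p_{r,i}/q_r)$ is determined by a companion Cramer evaluation on this same system.

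The main obstacle will be the tight denominator bookkeeping needed to match the recursive bound $t_i \leq d! \cdot t_{i-1} \cdot X^d$ from Definition~\ref{def:domain for linear feasibility}. A naive clearing of all denominators $t_1, \ldots, t_{i-1}$ and $q_1, \ldots, q_i$ yields a product bound that blows up as $(d! X^d)^i \prod_{j<i} t_j$, whereas the statement permits only one fresh factor of $d! X^d$ per step. To obtain the sharp bound, I would argue inductively that each previous choice $x_j^* = s_j/t_j$ was itself produced by an analogous Cramer computation rooted in the previous denominator $t_{j-1}$; this lets the $i\times i$ determinant defining $x_i^*$ be row-reduced so that the entire chain of prefix denominators collapses into a single $t_{i-1}$, leaving exactly one additional factor of $d! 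X^d$. The matching bound on the numerator $|s_i| \leq (d\cdot d!)^i X^{di}$ would then follow from Hadamard-type estimates on the Cramer numerator, with the extra $d^i$ factor arising from the $i$-fold growth of the relevant sub-determinants when the affine constraints are combined with the vertex-defining equations.
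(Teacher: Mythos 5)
The paper does not prove this theorem; it is cited from \cite{kaplan2020private} and used as a black box, so there is no ``paper's own proof'' to compare against. Evaluating your attempt on its merits: the high-level structure is sensible (locate a vertex $y^*$ of the sliced polytope $M=A\cap P$ where $P=\mathrm{ConvexHull}(\mathcal{D}_S(k^*))$, observe $y^*$ lies on a face of $P$ of dimension $\le i-1$, decompose via Carath\'eodory into $\le i$ vertices of the hyperplane arrangement, apply Cramer), but the denominator accounting---which you yourself flag as ``the main obstacle''---has a genuine gap that your proposed fix does not close.

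Concretely: after eliminating the $\lambda_r$'s from your $i\times i$ system, one finds that the denominators $q_r$ of the Carath\'eodory vertices cancel (they enter every column uniformly), so the residual denominator of $x_i^*$ has the form $D\cdot\mathrm{lcm}(t_1,\dots,t_{i-1})$, where $D$ is an $i\times i$ integer minor whose entries are the $q_r$'s and $p_{r,j}$'s, each bounded by $d!X^d$. Your inductive ``collapse'' argument plausibly controls the $\mathrm{lcm}$ factor by $t_{i-1}$ (assuming you can establish $t_{j-1}\mid t_j$ from the construction, which is itself not free), but it does nothing about $D$: the best generic bound is $|D|\le i!\,(d!X^d)^{\,i}$, which overshoots the target $d!\cdot t_{i-1}\cdot X^d$ by roughly $(d!X^d)^{\,i-1}$. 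Already at $i=2$ the relevant minor is $q_1p_{2,1}-q_2p_{1,1}=q_1q_2(v^{(2)}_1-v^{(1)}_1)$, which is $\Theta((d!X^d)^2)$ in the worst case, not $O(d!X^d)$. The only way to make such a minor small is to exploit that the Carath\'eodory vertices share $d-1$ (or more) tight \emph{arrangement} constraints, so that $v^{(2)}-v^{(1)}$ is parallel to a cofactor direction vector with entries bounded by $(d-1)!X^{d-1}$. But this is exactly what you cannot assume: the face $F\subseteq P$ containing $y^*$ is a face of a \emph{convex hull of a union of cells}, not a face of the arrangement, so its vertices generically share no common tight hyperplanes from $S$ and its affine hull has no a-priori-bounded rational description. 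This structural point is where the real work of the theorem lies, and ``row-reducing so the chain of prefix denominators collapses'' does not engage with it. If you want to present this result, either cite \cite{kaplan2020private} directly (as the paper does), or reconstruct the argument there, which does not route through a Carath\'eodory decomposition of $ConvexHull(\mathcal{D}_S(k^*))$ in the way you describe.
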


The following lemma gives the VC dimension of the linear feasibility problem so that we can apply the VC theory on it. The lemma is closely related to the fact that the VC dimension of $d$-dimensional halfspaces is $d$ \cite{VC,matousek2013lectures,Sauer,shelah1972combinatorial}, and we give the full proof details in Appendix~\ref{sec:vc linear feasibility}.

\def\VCLinearFeasLemma{
    Let $X_{Halfspace}=\{H_{a,w} \mid (a,w)\in \mathbb{R}^{d+1}\}$.
    For a point $p \in \bbR^d$, let $r_p = \set{H_{a,w} \mid p \in H_{a,w}}$, and let $R_{Points} = \set{r_p \mid p \in \bbR^d}$.
    The VC dimension of $(X_{Halfspace},R_{Points})$ is $d$.
}

\begin{lemma}\label{lem:vc linear feasibility}[VC dimension of linear feasibility]
    \VCLinearFeasLemma
\end{lemma}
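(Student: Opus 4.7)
The plan is to establish matching lower and upper bounds of $d$ on the VC dimension. For the lower bound, I would exhibit $d$ halfspaces that can be shattered: take $H_i = H_{e_i, 0}$ for $i = 1, \ldots, d$, where $e_i \in \bbR^d$ is the $i$-th standard basis vector, so that $p \in H_i$ iff $p_i \geq 0$. For any target subset $T \subseteq [d]$, the point $p_T \in \bbR^d$ defined by $(p_T)_i = 1$ if $i \in T$ and $(p_T)_i = -1$ otherwise satisfies $r_{p_T} \cap \set{H_1,\ldots,H_d} = \set{H_i \mid i \in T}$, so the shattering is achieved.

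For the upper bound, I would show that no $d+1$ halfspaces can be shattered. Given $H_{a_i, w_i}$ for $i = 1, \ldots, d+1$, consider the affine map $\phi \colon \bbR^d \to \bbR^{d+1}$ defined by $\phi(p)_i = \ip{a_i, p} - w_i$, whose image $V = \phi(\bbR^d)$ is an affine subspace of $\bbR^{d+1}$ of dimension at most $d$. The realizable sign patterns in $\set{+, -}^{d+1}$ (with $+$ meaning $\geq 0$) correspond exactly to the sign patterns attained by points of $V$ relative to the $d+1$ coordinate hyperplanes $C_i = \set{x \in \bbR^{d+1} \mid x_i = 0}$. The intersections $C_i \cap V$ form an arrangement of at most $d+1$ hyperplanes within the $\leq d$-dimensional space $V$, and by the classical upper bound on the number of regions of a hyperplane arrangement, this arrangement has at most $\sum_{k=0}^{d} \binom{d+1}{k} = 2^{d+1} - 1$ open cells, each carrying a constant strict sign pattern.

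The main obstacle is justifying that the closed-halfspace convention (allowing $\ip{a_i, p} = w_i$) does not produce sign patterns beyond those coming from open cells. The plan is to observe that a point on a boundary $C_i \cap V$ has $q_i = 0$ promoted to $s_i = +$ under our convention, so its sign pattern coincides with that of the adjacent cell on the $q_i > 0$ side (the other coordinates agreeing across the face). Hence the realizable sign patterns are in bijection with open cells, giving at most $2^{d+1} - 1 < 2^{d+1}$ patterns, so the $d+1$ halfspaces cannot be shattered. Degenerate cases ($\dim V < d$, coincident halfspaces, or $a_i = 0$) only further constrain the set of realizable patterns and hence preserve the bound, yielding $\mathrm{VC}(X_{Halfspace}, R_{Points}) = d$.
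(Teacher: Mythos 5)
Your proof matches the paper's own: the same $d$ coordinate halfspaces through the origin for the lower bound, and the same hyperplane-arrangement cell count for the upper bound, presented via the reparametrization $\phi \colon \bbR^d \to V \subseteq \bbR^{d+1}$ rather than directly with the $d+1$ hyperplanes $\{x : \ip{a_i,x}=w_i\}$ in $\bbR^d$ as the paper does. Your attention to the closed-versus-open-halfspace convention is a welcome detail that the paper elides, though the claimed bijection between $\geq$-sign patterns and open cells is not literally true in degenerate configurations (e.g., two coincident boundary hyperplanes with opposite normals realize the pattern $(+,+)$ at their common boundary, which no open cell attains); the clean fix is a small positive perturbation of each constraint, which shows the number of $\geq$-patterns is still at most the generic cell count $\sum_{k=0}^{d}\binom{d+1}{k} < 2^{d+1}$, so your conclusion stands.
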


Note that for a dataset $S = \set{(a_1,w_1),\ldots,(a_n,w_n)} \in (\bbR^{d+1})^n$ and a point $p \in \bbR^d$, the depth of $p$ in $S$ is $\size{\set{H_{a_1,w_1},\ldots,H_{a_n,w_n}} \cap r_p}$. Hence, the following statement is an immediate corollary of Theorem~\ref{thm:alphaApprox} and Lemma~\ref{lem:vc linear feasibility}.

\begin{corollary}
\label{cor:VC theory-linear feasibility}

Let $S\subseteq (\mathbb{R}^d)^n$ and let $S'\subseteq S$ be a random subset of $S$ with cardinality $m=|S'|\geq O\left(\frac{d\cdot\log(\frac{d}{\alpha})+\log\frac{1}{\beta}}{\alpha^2}\right)$. 
Then, with probability at least $1-\beta$, for all $p\in \mathbb{R}^d$, if $depth_{S'}(p)=\gamma' m$ and $depth_{S}(p)=\gamma n$ then $|\gamma-\gamma'|\leq\alpha$.
\end{corollary}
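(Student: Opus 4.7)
The plan is to reduce the statement directly to the generic $\alpha$-approximation theorem (Theorem~\ref{thm:alphaApprox}) by identifying linear constraints with halfspaces and points with the incidence sets defined just before the corollary. Concretely, given the dataset $S = \set{(a_1,w_1),\ldots,(a_n,w_n)} \in (\bbR^{d+1})^n$, I would treat $S$ as a multi-set of halfspaces $\set{H_{a_1,w_1},\ldots,H_{a_n,w_n}}$ drawn from the ground set $X_{Halfspace}$. For a query point $p \in \bbR^d$, the set $r_p = \set{H_{a,w} : p \in H_{a,w}} \in R_{Points}$ collects exactly those halfspaces satisfied by $p$, so by the observation preceding the corollary, $depth_S(p) = |S \cap r_p|$ (with the multi-set intersection), and similarly $depth_{S'}(p) = |S' \cap r_p|$.

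Next I would invoke Lemma~\ref{lem:vc linear feasibility}, which gives $\VC(X_{Halfspace}, R_{Points}) = d$. Applying Theorem~\ref{thm:alphaApprox} to the range space $(X_{Halfspace}, R_{Points})$ with the multi-set $S$, I obtain that a random subset $S' \subseteq S$ of size $m \geq O\!\left(\tfrac{d \log(d/\alpha)+\log(1/\beta)}{\alpha^2}\right)$ is an $\alpha$-approximation of $S$ with respect to $R_{Points}$ with probability at least $1-\beta$. By Definition~\ref{def:alpha-approx}, this means that for every $r_p \in R_{Points}$,
\begin{equation*}
\left|\frac{|S \cap r_p|}{|S|} - \frac{|S' \cap r_p|}{|S'|}\right| \leq \alpha.
\end{equation*}

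Translating back via the depth identification, the inequality becomes $\left|\tfrac{depth_S(p)}{n} - \tfrac{depth_{S'}(p)}{m}\right| \leq \alpha$, which is precisely the statement $|\gamma - \gamma'| \leq \alpha$ whenever $depth_S(p) = \gamma n$ and $depth_{S'}(p) = \gamma' m$. The statement holds simultaneously for every $p \in \bbR^d$ because the $\alpha$-approximation guarantee is uniform over the range family $R_{Points}$.

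I do not anticipate any real obstacle: the bulk of the work sits in Lemma~\ref{lem:vc linear feasibility} (and in the footnote of Theorem~\ref{thm:alphaApprox} which extends the generic VC sampling bound from sets to multi-sets, needed here since $S$ may contain repeated constraints). The only minor care point is making sure that the depth function is indeed re-expressed as a multi-set intersection with a range from $R_{Points}$, which is exactly what the paragraph preceding the corollary already provides. Hence the corollary follows as an immediate corollary of Theorem~\ref{thm:alphaApprox} applied to the VC-$d$ range space established in Lemma~\ref{lem:vc linear feasibility}.
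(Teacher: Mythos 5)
Your proposal is correct and follows exactly the route the paper intends: identify $depth_S(p)$ with the multi-set intersection $|S \cap r_p|$ (the observation the paper makes in the sentence immediately preceding the corollary), invoke Lemma~\ref{lem:vc linear feasibility} for $\VC = d$, and apply the $\alpha$-approximation guarantee of Theorem~\ref{thm:alphaApprox} (via its multi-set footnote). The paper treats this as an ``immediate corollary'' without writing out the argument, and your write-up is a faithful unpacking of precisely that reasoning.
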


In the following, we show that the same approximation guarantee holds also with respect to the $cdepth$ function.

\begin{corollary}~\label{corollary:vc for cdepth}
    Let $S\subseteq (\mathbb{R}^d)^n$ and 
    let $S'\subseteq S$ be a random subset of $S$ with cardinality $m=|S'|\geq O\left(\frac{d\cdot\log(\frac{d}{\alpha})+\log\frac{1}{\beta}}{\alpha^2}\right)$. 
    Then, with probability at least $1-\beta$, for all $p\in \mathbb{R}^d$, if $cdepth_{S'}(p)=\gamma' m$ and $cdepth_{S}(p)=\gamma n$ then $|\gamma-\gamma'|\leq\alpha$.
\end{corollary}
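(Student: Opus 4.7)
The plan is to reduce this corollary to Corollary~\ref{cor:VC theory-linear feasibility} (the approximation guarantee for the ordinary $depth$ function) and then transfer the guarantee through the convexification operation. First, I would apply Corollary~\ref{cor:VC theory-linear feasibility}: by the assumption on $\size{S'}$, with probability at least $1-\beta$, the random subset $S'$ satisfies $\size{depth_S(z)/n - depth_{S'}(z)/m} \leq \alpha$ simultaneously for every point $z \in \bbR^d$. In the rest of the proof I would condition on this event.

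Next, I would translate this uniform $depth$-approximation into a containment between the super-level sets used in the definition of convexification. Recall that $\cD_S(y) = \set{z \in \bbR^d \colon depth_S(z) \geq y}$ and $cdepth_S(x) = \max\set{y \colon x \in \text{ConvexHull}(\cD_S(y))}$. Working with normalized thresholds, the uniform closeness of $depth$ implies the two containments
\begin{align*}
    \cD_S(\gamma n) \subseteq \cD_{S'}((\gamma - \alpha) m)
    \quad\text{and}\quad
    \cD_{S'}(\gamma' m) \subseteq \cD_S((\gamma' - \alpha) n),
\end{align*}
because a point with $depth_S / n \geq \gamma$ must have $depth_{S'}/m \geq \gamma - \alpha$, and vice versa. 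Taking convex hulls preserves inclusion, so $\text{ConvexHull}(\cD_S(\gamma n)) \subseteq \text{ConvexHull}(\cD_{S'}((\gamma - \alpha) m))$, and analogously in the other direction.

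Now fix any $p \in \bbR^d$ and let $\gamma = cdepth_S(p)/n$ and $\gamma' = cdepth_{S'}(p)/m$. By the definition of $cdepth$, $p \in \text{ConvexHull}(\cD_S(\gamma n))$, and therefore $p \in \text{ConvexHull}(\cD_{S'}((\gamma - \alpha) m))$, which by the maximality in the definition of $cdepth_{S'}$ yields $\gamma' \geq \gamma - \alpha$. The symmetric inclusion gives $\gamma \geq \gamma' - \alpha$, and together $\size{\gamma - \gamma'} \leq \alpha$ as claimed. I do not anticipate a substantive obstacle: the only subtlety is being careful that the super-level sets on the two sides are compared at thresholds shifted by exactly $\alpha$, which is immediate from the normalized bound, and that convex hulls respect inclusions, which is standard.
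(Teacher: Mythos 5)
Your proposal is correct and takes essentially the same approach as the paper: both proofs invoke Corollary~\ref{cor:VC theory-linear feasibility} to get the uniform $depth$-approximation, and then push that through the convexification, the paper by picking witness points $p_1,\dots,p_k$ in the super-level set and you by arguing the super-level-set containment and noting that convex hulls respect inclusion; the two are the same argument phrased slightly differently.
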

\begin{proof}
    Assume $p$ is a point with $cdepth_S(p)=\gamma n$. By the definition of $cdepth$, there exists points $p_1,\dots,p_k$, such that $depth_{S}(p_i) \geq \gamma n$ for all $i\in[k]$ and $p\in ConvexHull(p_1,\dots,p_k)$. By Corollary~\ref{cor:VC theory-linear feasibility}, with probability $1-\beta$, $depth_{S'}(p_i)\geq (\gamma-\alpha) m$ for all $i\in[k]$. Therefore, $cdepth_{S'}(p)\geq (\gamma-\alpha) m$. The proof for the other direction is similar. 
\end{proof}

\subsection{Privately Solving Linear Feasibility Problems}~\label{sec:alg-linear-feasibility}
We use $\IPConcaveHighDim$ (Algorithm~\ref{alg:IPConcaveHighDim}) to solve linear feasibility problems privately. Define the function 
\begin{align*}
    Q_{LF}(S,x)=\frac{cdepth_S(x)}{|S|}.
\end{align*}
Here we verify that $Q_{LF}(S,x)$ satisfies the accuracy requirements in Theorem~\ref{thm:IPConcaveHighDim}.

\begin{enumerate}
    \item \textbf{Approximation by a random subset:} By Corollary~\ref{corollary:vc for cdepth}, a random subset $S'\subseteq S$ of size $$m(\alpha,\beta) \in O\left(\frac{d\cdot\log(d/\alpha)+\log(1/\beta)}{\alpha^2}\right)$$ is an $\alpha$-approximation of $S$ with respect to $Q_{LF}$ with probability $1-\beta$.

    \item \textbf{Concavity:} It is guaranteed by Fact~\ref{fact:convex of cdepth}.

    \item \textbf{Proper finite domains:} The domain extension $\tcX_1,\ldots,\tcX_d$ in Definition~\ref{def:domain for linear feasibility} and Theorem~\ref{thm:domain for linear feasibility} provide proper finite domains for $Q_{LF}$ (Definition~\ref{def:proper-finite-domains}) with maximal domain size $X$ where $\log^* X \in O(\log^* (\size{\cX} + d))$.
\end{enumerate}

Thus, the following theorem is an immediate corollary of Theorem~\ref{thm:IPConcaveHighDim}.

\begin{theorem}
    Let $\eps > 0$, $\delta, \alpha,\beta \in (0,1)$, $n,d,t \in \bbN$ with $n \geq t$ and let $\cX \subseteq \bbR$ be a finite domain. Let $\Ac \colon (\cX^d)^n \rightarrow \bbR^d$ be the algorithm that on input $S$, computes $\IPConcaveHighDim_{\alpha,\beta,\eps,\delta,t}(S, Q_{LF})$ with the finite domains $\tcX_1, \tcX_2, \ldots, \tcX_d$ from Definition~\ref{def:domain for linear feasibility} and outputs its output, and let $X = \max_{i \in [d]}\set{|\tcX_i|}$. Then
    \begin{enumerate}
        \item \textbf{Privacy:} $\Ac$ is $(\eps \cdot\sqrt{2d\ln(1/\delta')},d\delta+\delta')$-differentially private for any choice of $\delta'>0$.

        \item \textbf{Accuracy:}  Assuming that $t = n_{IP}(X,\beta,\eps,\delta)$ (the sample complexity of $\PrivateIP$ in Theorem~\ref{thm:1-d interior point}) and that
        \begin{align*}
            n \geq m(\alpha,\beta)\cdot n_{IP}(X,\beta,\eps,\delta)
            &\in O\paren{\frac{d \log(d/\alpha) + \log(1/\beta)}{\alpha^2} \cdot \frac{\log^* (\size{\cX} + d) \cdot \log^2\paren{\frac{\log^* (\size{\cX} + d)}{\beta \delta}}}{\eps}}.\\
            &= \tilde{O}\paren{\frac{\log^* \size{\cX}\cdot (d + \log(1/\beta)) \cdot \log^2(1/\beta\delta)}{\eps \alpha^2}},
        \end{align*}
        Then with probability $1 - (t + d) \beta$, $\Ac(S)$ outputs a point $\hx \in \bbR^d$ with $cdepth_S(\hx) \geq (1 - 2d\alpha) n$.
    \end{enumerate}
\end{theorem}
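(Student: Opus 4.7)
The plan is to derive this theorem as a direct instantiation of Theorem~\ref{thm:IPConcaveHighDim} applied to the target function $Q_{LF}(S,x) = cdepth_S(x)/|S|$ together with the domain extension of Definition~\ref{def:domain for linear feasibility}. The three hypotheses demanded by Theorem~\ref{thm:IPConcaveHighDim}, namely high-dimensional quasi-concavity, approximability by a random subset, and properness of the finite domains, are exactly the three items checked in the enumeration preceding the theorem statement, so the main content of the argument has essentially been prepared; what remains is to assemble these pieces and track the parameters.

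For \textbf{privacy}, I would simply invoke the privacy clause of Theorem~\ref{thm:IPConcaveHighDim}, which yields the stated $(\eps\sqrt{2d\ln(1/\delta')}, d\delta+\delta')$ guarantee for every $\delta'>0$. This is immediate since the privacy analysis in Theorem~\ref{thm:IPConcaveHighDim} is independent of the particular $Q$ being optimized, being inherited from $\PrivateIP$ through advanced composition over the $d$ coordinate iterations.

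For \textbf{accuracy}, I would invoke Theorem~\ref{thm:IPConcaveHighDim} with $\beta' := \beta$, using Fact~\ref{fact:convex of cdepth} for quasi-concavity, Corollary~\ref{corollary:vc for cdepth} for approximation with parameter $m(\alpha,\beta) \in O\bigl((d\log(d/\alpha)+\log(1/\beta))/\alpha^2\bigr)$, and Theorem~\ref{thm:domain for linear feasibility} for properness of $\tcX_1,\ldots,\tcX_d$. With $t = n_{IP}(X,\beta,\eps,\delta)$ and $n \geq t\cdot m(\alpha,\beta)$, so that $\lfloor n/t\rfloor \geq m(\alpha,\beta)$, Theorem~\ref{thm:IPConcaveHighDim} gives $|Q_{LF}(S,\hat{x}) - \max_{x\in\bbR^d} Q_{LF}(S,x)| \leq 2\alpha d$ with failure probability at most $(t+d)\beta$. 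Since the collection $S$ is feasible by assumption, there is a point of depth $n$, whence $\max_x Q_{LF}(S,x) = 1$; rescaling by $|S|=n$ yields $cdepth_S(\hat{x}) \geq (1-2\alpha d)\cdot n$, as claimed.

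Finally, the sample-size bound on $n$ is obtained by substituting the explicit expressions for $m(\alpha,\beta)$ and $n_{IP}$ from Theorem~\ref{thm:1-d interior point}, using that the extension sizes from Definition~\ref{def:domain for linear feasibility} satisfy $\log^* X \in O(\log^*(|\cX|+d))$, which collapses to $O(\log^*|\cX|)$ in the regime $|\cX|\geq d$. No step is expected to present a real obstacle, since this theorem is primarily a bookkeeping application of the general machinery already developed in Theorem~\ref{thm:IPConcaveHighDim}; the only minor subtlety is choosing $\beta'=\beta$ consistently so that the failure probability cleanly collapses to $(t+d)\beta$ as stated.
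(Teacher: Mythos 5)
Your proposal is correct and takes essentially the same route as the paper: the paper states this theorem as an immediate corollary of Theorem~\ref{thm:IPConcaveHighDim} after checking the three hypotheses (quasi-concavity via Fact~\ref{fact:convex of cdepth}, approximability via Corollary~\ref{corollary:vc for cdepth}, and properness of the finite domains via Theorem~\ref{thm:domain for linear feasibility}), which is exactly what you do. Your remark that feasibility of $S$ gives $\max_x Q_{LF}(S,x)=1$ (so that the additive-error bound of $2\alpha d$ translates into $cdepth_S(\hat x)\geq(1-2\alpha d)n$) is a small gap-filling observation the paper leaves implicit, and your choice $\beta'=\beta$ is the one needed to match the stated $(t+d)\beta$ failure probability.
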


Combining the corollary with Fact~\ref{fact:cdepth to depth}, we have the following result.
\begin{corollary}
    With probability $1-(t+d)\beta$, the point $\hx$ satisfies $depth_{S}(\hx)\geq(1-2d\alpha-2d^2\alpha)\cdot |S|$.
\end{corollary}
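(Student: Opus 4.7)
The plan is straightforward: this corollary is a direct arithmetic consequence of the previous theorem combined with Fact~\ref{fact:cdepth to depth}. The previous theorem guarantees that with probability at least $1-(t+d)\beta$, the output point $\hx$ satisfies $cdepth_S(\hx) \geq (1-2d\alpha)|S|$. I would condition on this high-probability event throughout.

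Next, I would apply Fact~\ref{fact:cdepth to depth}, which states $depth_S(x) \geq (d+1)\cdot cdepth_S(x) - d|S|$ for every $x \in \bbR^d$. Plugging in the lower bound for $cdepth_S(\hx)$ gives
\begin{align*}
    depth_S(\hx) &\geq (d+1)(1-2d\alpha)|S| - d|S|\\
    &= \bigl((d+1) - 2d(d+1)\alpha - d\bigr)|S|\\
    &= \bigl(1 - 2d\alpha - 2d^2\alpha\bigr)|S|,
\end{align*}
which is exactly the claimed bound. The failure probability inherited from the previous theorem is $(t+d)\beta$, completing the argument.

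There is essentially no obstacle here — the result is a mechanical consequence of the two cited statements, and the only thing to verify is the arithmetic $(d+1)(1-2d\alpha) - d = 1 - 2d(d+1)\alpha$. The substantive content sits entirely in the previous theorem (which in turn rests on the quasi-concavity of $cdepth$, the VC-based approximation in Corollary~\ref{corollary:vc for cdepth}, and the properness of the domain extension in Theorem~\ref{thm:domain for linear feasibility}) and in Fact~\ref{fact:cdepth to depth}, both of which are already established.
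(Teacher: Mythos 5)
Your proof is correct and follows exactly the route the paper intends: the corollary is stated immediately after the phrase ``Combining the corollary with Fact~\ref{fact:cdepth to depth},'' so the argument is precisely the substitution you perform, and your arithmetic check $(d+1)(1-2d\alpha)-d = 1-2d\alpha-2d^2\alpha$ is right.
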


Thus, by substituting $\varepsilon$ by $\frac{\varepsilon}{\sqrt{2d\ln(2/\delta)}}$, $\delta$ by $\frac{\delta}{2d}$, $\delta'$ by $\delta/2$, $\alpha$ by $\frac{\alpha}{4d^2}$, and $\beta$ by $\frac{\beta}{t + d}$, we obtain our main theorem for solving linear feasibility problem.

\begin{theorem}\label{thm:privateLinearFeasibility}
    Let $\eps > 0$, $\delta, \alpha,\beta \in (0,1)$, $n,d \in \bbN$ and let $\cX \subseteq \bbR$ be a finite domain. There exists an $(\eps,\delta)$-differentially private algorithm $\Ac \colon (\cX^d)^n \rightarrow \bbR^d$ and a value 
    \begin{align*}
        n_{min} \in 
        \tilde{O}\paren{\frac{d^{4.5}\cdot \log^* \size{\cX} \cdot (d + \log(1/\beta))\cdot \log^{2}(1/\beta\delta)\cdot\log^{0.5}(1/\delta)}{\eps \alpha^2}}
        =\tilde{O}_{\alpha,\beta,\varepsilon,\delta}\left(d^{5.5}\cdot\log^*|\calX|\right),
    \end{align*}
    such that if $n \geq n_{min}$, then for any $S \in (\cX^d)^n$ it holds that
    \begin{align*}
        \ppr{\hx \sim \Ac(S)}{depth_S(\hx) \geq (1 - \alpha)\cdot n} \geq 1 - \beta.
    \end{align*}
\end{theorem}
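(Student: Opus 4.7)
My plan is to derive Theorem~\ref{thm:privateLinearFeasibility} as a direct instantiation of the preceding theorem (the one applying $\IPConcaveHighDim$ to $Q_{LF}$) together with the conversion corollary from $cdepth$ to $depth$. First, I invoke $\IPConcaveHighDim_{\alpha',\beta',\varepsilon',\delta_0,t}(\cdot, Q_{LF})$ with the domains from Definition~\ref{def:domain for linear feasibility}, which satisfy the three prerequisites of Theorem~\ref{thm:IPConcaveHighDim}: approximability by a random subset (Corollary~\ref{corollary:vc for cdepth}), quasi-concavity (Fact~\ref{fact:convex of cdepth}), and properness (Theorem~\ref{thm:domain for linear feasibility}). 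Since the input is a feasible collection, the maximum value of $cdepth_S$ is $n$, so the accuracy guarantee of the preceding theorem yields $cdepth_S(\hat{x}) \geq (1 - 2d\alpha')n$ with probability $1 - (t+d)\beta'$.

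The next step is the $cdepth \to depth$ conversion using Fact~\ref{fact:cdepth to depth}: $depth_S(\hat{x}) \geq (d+1)\cdot cdepth_S(\hat{x}) - dn \geq (1 - 2d(d+1)\alpha')n$. To hit the target additive error $\alpha$ in $depth$, I set $\alpha' = \alpha/(4d^2)$, which gives $depth_S(\hat{x}) \geq (1 - \alpha/(2d) - \alpha/2)n \geq (1 - \alpha)n$. This explains the $d^2$ blow-up that will propagate into the sample complexity as a factor $1/\alpha'^2 = O(d^4/\alpha^2)$.

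For privacy, the preceding theorem guarantees $(\varepsilon'\sqrt{2d\ln(1/\delta'')}, d\delta_0 + \delta'')$-differential privacy via advanced composition over the $d$ iterations. To obtain $(\varepsilon,\delta)$ overall, I substitute $\varepsilon' = \varepsilon/\sqrt{2d\ln(2/\delta)}$, $\delta_0 = \delta/(2d)$, and $\delta'' = \delta/2$, which produces a factor $1/\varepsilon' = O(\sqrt{d\log(1/\delta)}/\varepsilon)$ in the sample-complexity expression. For the confidence, I take $\beta' = \beta/(t+d)$ so that the union bound over the $t$ subsets and $d$ coordinate iterations produces total failure probability at most $\beta$; since $t = n_{IP}(X,\beta',\varepsilon',\delta_0) = \tilde{O}(\log^*|\cX|)$, the additional $\log(t/\beta)$ factors only contribute poly-logarithmically.

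Plugging these substitutions into the sample-complexity bound from the preceding theorem, namely $\tilde{O}(\log^*|\cX|\cdot (d + \log(1/\beta'))\cdot \log^2(1/(\beta'\delta_0))/(\varepsilon'\alpha'^2))$, multiplies together the $d^4/\alpha^2$ from rescaling $\alpha'$, the $\sqrt{d\log(1/\delta)}/\varepsilon$ from rescaling $\varepsilon'$, and the original $\log^*|\cX|\cdot (d + \log(1/\beta))$ factor, yielding the stated $\tilde{O}(d^{4.5}\cdot\log^*|\cX|\cdot(d+\log(1/\beta))\cdot\log^2(1/(\beta\delta))\cdot\log^{0.5}(1/\delta)/(\varepsilon\alpha^2)) = \tilde{O}_{\alpha,\beta,\varepsilon,\delta}(d^{5.5}\log^*|\cX|)$ bound. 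The proof is therefore essentially a bookkeeping argument; the only genuine obstacle is tracking the interactions between the $d$-fold advanced composition, the multiplicative $d^2$ loss in the $cdepth \to depth$ step, and the $n_{IP}$ factor hidden in $t$, but no new structural ideas are required beyond those already established in Theorem~\ref{thm:IPConcaveHighDim} and Fact~\ref{fact:cdepth to depth}.
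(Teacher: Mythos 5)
Your proof follows the paper's argument essentially verbatim: instantiate the preceding (unnumbered) theorem for $\IPConcaveHighDim$ applied to $Q_{LF}$ with the three prerequisites verified, apply Fact~\ref{fact:cdepth to depth} (as in the paper's intermediate corollary) to pass from $cdepth$ to $depth$, and rescale $\alpha\mapsto\alpha/(4d^2)$, $\eps\mapsto\eps/\sqrt{2d\ln(2/\delta)}$, $\delta\mapsto\delta/(2d)$, $\delta'\mapsto\delta/2$, $\beta\mapsto\beta/(t+d)$, which are exactly the substitutions the paper uses. The bookkeeping checks out; you even correctly note the feasibility assumption on $S$ that the paper's theorem statement leaves implicit (needed so that $\max_x cdepth_S(x)=n$).
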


\subsection{Privately Learning Halfspaces}
Combining Theorems~\ref{thm:privateLinearFeasibility} and~\ref{thm:learn vc}, we get the following result.
\begin{theorem}[Restatement of Theorem~\ref{thm:intro:halfspaces}]
    Let $\cX \subset \bbR$ be a finite domain. 
    There exists an $(\eps,\delta)$-differentially private $(\alpha,\beta)$-PAC learner for halfspaces over examples from $\cX^d$ with sample complexity $n = \tilde{O}_{\alpha,\beta,\varepsilon,\delta}\left(d^{5.5}\cdot\log^*|\cX|\right)$.
\end{theorem}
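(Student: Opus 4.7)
The plan is to follow the standard reduction from PAC learning halfspaces to the linear feasibility problem and invoke Theorem~\ref{thm:privateLinearFeasibility} as the main engine. Given $n$ labeled samples $((x_1^{(i)}, \ldots, x_d^{(i)}), y_i) \in \cX^d \times \{\pm 1\}$ drawn i.i.d. from the target distribution $\cD$, the learner first maps each labeled point to the linear constraint $h_{y_i \cdot (x_1^{(i)}, \ldots, x_d^{(i)}, -1),\, 0}$, producing a collection $S$ of $n$ linear constraints over $d+1$ variables with integer coefficients in $[[\|\cX\|_\infty]]$. The learner then runs the DP linear feasibility solver $\Ac$ of Theorem~\ref{thm:privateLinearFeasibility} on $S$ to obtain $\hx \in \bbR^{d+1}$, and outputs the halfspace hypothesis $h_{\hx}$ defined in the natural way from the coordinates of $\hx$.

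First I would verify privacy: since the mapping from a labeled example to a linear constraint is a per-record deterministic transformation, neighboring datasets of labeled samples map to neighboring collections of constraints, so $(\varepsilon,\delta)$-DP of $\Ac$ transfers to the overall learner by post-processing. Next I would analyze utility. Set $\alpha' = \alpha/10$ and $\beta' = \beta/2$ in Theorem~\ref{thm:privateLinearFeasibility}. Under the realizable assumption, the true target hypothesis satisfies all $n$ constraints, so the collection is feasible; hence with probability $1-\beta'$ the output $\hx$ satisfies at least $(1-\alpha')n$ constraints, i.e., $\mathrm{error}_S(c, h_{\hx}) \leq \alpha/10$, where $c$ is the target concept. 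Then I would invoke Theorem~\ref{thm:learn vc} with $\mathrm{VC}(\mathrm{HALFSPACE}_d(\cX)) = d$: provided $n \geq \frac{48}{\alpha}\paren{10 d \log(48e/\alpha) + \log(10/\beta)}$, with probability $1-\beta/2$ no pair of halfspaces has small empirical error and large generalization error, so $\mathrm{error}_\cD(c, h_{\hx}) \leq \alpha$.

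Combining the two requirements on $n$, the bottleneck is the feasibility solver's sample complexity
\begin{align*}
n_{\min} \in \tilde{O}\paren{\frac{d^{4.5} \cdot \log^* |\cX| \cdot (d+\log(1/\beta)) \cdot \log^2(1/\beta\delta)\cdot \log^{0.5}(1/\delta)}{\varepsilon \alpha^2}} = \tilde{O}_{\alpha,\beta,\varepsilon,\delta}\paren{d^{5.5} \cdot \log^* |\cX|},
\end{align*}
which dominates the uniform-convergence bound $O((d + \log(1/\beta))/\alpha)$. A union bound over the two failure events of probability at most $\beta/2$ each yields total failure probability $\beta$.

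The only subtle point, and the one I would expect to be the mild obstacle, is that the reduction from halfspace learning to linear feasibility (as phrased in \cite{kaplan2020private}) is cleanest under a general-position assumption on the examples, since otherwise a satisfying point might lie on the decision boundary of some constraint. This is resolved by the standard symbolic-perturbation argument used in Section~5.1 of \cite{kaplan2020private}, which does not inflate the coefficient magnitudes beyond $\poly(X, d)$, so $\log^* |\cX|$ is unchanged; I would simply invoke that handling. Everything else is bookkeeping of constants hidden in the $\tilde{O}(\cdot)$.
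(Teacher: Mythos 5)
Your proposal is correct and follows exactly the paper's argument: the paper proves this theorem by combining Theorem~\ref{thm:privateLinearFeasibility} with Theorem~\ref{thm:learn vc}, using the reduction from PAC learning halfspaces to linear feasibility described in Section~\ref{sec:pre-linear-feasbility} (mapping each labeled example $((x_1,\dots,x_d),y)$ to the constraint $h_{y\cdot(x_1,\dots,x_d,-1),0}$), and deferring the general-position issue to Section~5.1 of \cite{kaplan2020private} in a footnote, just as you do. You have simply spelled out the bookkeeping (privacy via post-processing, parameter settings $\alpha/10$ and $\beta/2$, union bound) that the paper leaves implicit.
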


\bibliographystyle{plainnat}

\appendix

\section{Approximated Functions Have Low Sensitivity}\label{sec:approx-is-low-sen}

In this section, we show that any approximated function has low sensitivity.

\begin{theorem}
    If $(S,Q)$ can be $(\alpha,1/3,m)$-approximated for any $S$ (Definition~\ref{def:approximation-wrt-Q}), then for any neighboring datasets $S_1,S_2$ with size at least $4m$, we have $|Q(S_1,x)-Q(S_2,x)|\leq 2\alpha$.
\end{theorem}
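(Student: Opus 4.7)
The plan is a coupling argument that produces a single subset $S'$ which simultaneously $\alpha$-approximates both $S_1$ and $S_2$; the low-sensitivity bound then follows from a triangle inequality.

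Write the neighboring datasets as $S_1 = T \cup \set{a}$ and $S_2 = T \cup \set{b}$, where $T$ is the common multiset of size at least $4m-1$, and assume $a \neq b$ (otherwise $S_1 = S_2$ and the statement is trivial). The key observation is that a uniformly random $m$-size sub-multiset of $T$ has the same distribution as: (i) a uniformly random $m$-size sub-multiset of $S_1$ conditioned on not containing the differing entry $a$, and equally as (ii) a uniformly random $m$-size sub-multiset of $S_2$ conditioned on not containing $b$.

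For $i \in \set{1,2}$, let $E_i$ denote the event that a random $m$-subset of $S_i$ is an $\alpha$-approximation of $S_i$ with respect to $Q$, and let $A_i$ be the event that this subset avoids the differing entry. By hypothesis $\Pr[E_i] \geq 2/3$, and since $|S_i| \geq 4m$ we have $\Pr[A_i] = (|S_i|-m)/|S_i| \geq 3/4$. Therefore
\begin{align*}
\Pr[E_i \mid A_i] \;\geq\; 1 - \frac{\Pr[\neg E_i]}{\Pr[A_i]} \;\geq\; 1 - \frac{1/3}{3/4} \;=\; \frac{5}{9}.
\end{align*}
By the coupling described above, both of these conditional distributions equal the uniform distribution over $m$-subsets of $T$. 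Hence, interpreting them as probabilities over a single random $m$-subset $S' \subseteq T$ and applying a union bound, $S'$ is an $\alpha$-approximation of both $S_1$ and $S_2$ with probability at least $5/9 + 5/9 - 1 = 1/9 > 0$.

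In particular, such an $S'$ exists. For any $x \in \tcX$, Definition~\ref{def:approximation-wrt-Q} combined with the triangle inequality then yields
\begin{align*}
|Q(S_1,x) - Q(S_2,x)| \;\leq\; |Q(S_1,x) - Q(S',x)| + |Q(S',x) - Q(S_2,x)| \;\leq\; 2\alpha,
\end{align*}
which is the desired conclusion. The only subtle step is the coupling: one must verify that conditioning on avoiding the differing entry in $S_1$ yields exactly the same distribution as the analogous conditioning in $S_2$, so that $E_1$ and $E_2$ refer to the same random variable $S'$ and the union bound is legitimate. The arithmetic constraint $|S_i| \geq 4m$ is precisely what ensures $\Pr[A_i]$ is large enough (namely $\geq 3/4$) for this union bound to leave positive probability after subtracting the failure probability $1/3$ from each side.
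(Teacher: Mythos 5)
Your proof is correct and follows essentially the same strategy as the paper's: both arguments exhibit a single $m$-subset of the common portion $T$ that simultaneously $\alpha$-approximates $S_1$ and $S_2$, then conclude by the triangle inequality. The paper phrases this as an inclusion--exclusion count over $m$-subsets (landing on $(\tfrac{1}{3}-\tfrac{m}{n})\binom{n}{m}>0$ when $n\geq 4m$), whereas you phrase it probabilistically via conditioning on avoiding the differing entry and a union bound, but the underlying coupling idea and the role of the $n\geq 4m$ slack are identical.
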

\begin{proof}
    Assume $|S_1|=|S_2|=n$ and $x_1$ and $x_2$ are the different entries in the two datasets.
    For each of $S_1$ and $S_2$, the number of $m$-size subsets 
    is $\binom{n}{m}$. 
    So for each of $S_1$ and $S_2$, the number of $\alpha$-approximation is at least $\frac{2\binom{n}{m}}{3}$. Since the number of $m$-size subsets containing $x_1$ or $x_2$ are at most $\binom{n-1}{m-1}$, thus the number of $\alpha$-approximation that does not 
    containing $x_1$ or $x_2$ is at least $\left(\frac{2\binom{n}{m}}{3}+\frac{2\binom{n}{m}}{3}\right)-\binom{n}{m}-\binom{n-1}{m-1}=\left(\frac{1}{3}-\frac{m}{n}\right)\cdot\binom{n}{m}>0$. Let $S'$ be one of these $\alpha$ approximations. Then $|Q(S_1,x)-Q(S_2,x)|\leq |Q(S_1,x)-Q(S',x)|+|Q(S',x)-Q(S_2,x)|\leq2\alpha$
\end{proof}
\remove{
We conclude it  Figure~\ref{fig:Relationship}:
\begin{figure}[h]
\includegraphics[scale=.6]{Relationship.png}
\caption{Sample complexity of private optimization for quasi-concave functions~\label{fig:Relationship}}
\end{figure}
}

\section{VC Dimension of Linear Feasibility Problem}\label{sec:vc linear feasibility}
In this section, we give a proof of Lemma~\ref{lem:vc linear feasibility}, restated below.

\begin{lemma}[Restatement of Lemma~\ref{lem:vc linear feasibility}]
    \VCLinearFeasLemma
\end{lemma}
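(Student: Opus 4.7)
The plan is to establish $VC(X_{Halfspace}, R_{Points}) = d$ by a matching lower and upper bound: exhibiting $d$ shattered halfspaces, and using a linear-dependence argument to rule out shattering of $d+1$ halfspaces.

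\textbf{Lower bound ($\geq d$).} Take $H_i := H_{e_i, 0} = \set{x \in \bbR^d : x_i \geq 0}$ for $i \in [d]$, where $e_i$ is the $i$-th standard basis vector. For any $S \subseteq [d]$, the point $p^S \in \bbR^d$ defined by $p^S_i = 1$ if $i \in S$ and $p^S_i = -1$ otherwise satisfies $p^S \in H_i$ iff $i \in S$. Hence every subset of $\set{H_1, \ldots, H_d}$ is realized as $r_{p^S} \cap \set{H_1, \ldots, H_d}$, so these $d$ halfspaces are shattered.

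\textbf{Upper bound ($\leq d$).} I show no $d+1$ halfspaces can be shattered. Fix $H_{a_1, w_1}, \ldots, H_{a_{d+1}, w_{d+1}}$ and embed each into $\bbR^{d+1}$ via $\tilde{v}_i := (a_i, -w_i)$, so that $p \in H_{a_i, w_i}$ iff $\iprod{\tilde{v}_i, (p, 1)} \geq 0$. Since $\tilde{v}_1, \ldots, \tilde{v}_{d+1}, e_{d+1}$ are $d+2$ vectors in $\bbR^{d+1}$, they are linearly dependent: there exist scalars $\lambda_1, \ldots, \lambda_{d+2}$, not all zero, with $\sum_{i=1}^{d+1} \lambda_i \tilde{v}_i + \lambda_{d+2}\, e_{d+1} = 0$. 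Taking inner product with $(p,1)$ yields the identity
\begin{align*}
    \sum_{i=1}^{d+1} \lambda_i f_i(p) \;=\; -\lambda_{d+2} \;=:\; c \quad \text{for all } p \in \bbR^d,
\end{align*}
where $f_i(p) := \iprod{a_i, p} - w_i$. One easily checks that at least one $\lambda_i$ with $i \in [d+1]$ is nonzero (else the dependence forces $\lambda_{d+2}=0$ too). By negating all $\lambda_j$ if necessary, we may assume either $c > 0$, or $c = 0$ and $P := \set{i \in [d+1] : \lambda_i > 0}$ is nonempty. Consider the dichotomy $\sigma$ given by $\sigma_i = 0$ (i.e.\ $p \notin H_{a_i, w_i}$) for $i$ with $\lambda_i > 0$, $\sigma_i = 1$ for $i$ with $\lambda_i < 0$, and $\sigma_i$ arbitrary for $\lambda_i = 0$. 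If some $p$ realized $\sigma$, then $\lambda_i f_i(p) < 0$ for every $i \in P$ and $\lambda_i f_i(p) \leq 0$ for every $i$ with $\lambda_i < 0$, so $\sum_i \lambda_i f_i(p) \leq 0$, strictly whenever $P \neq \emptyset$. This contradicts the identity $\sum_i \lambda_i f_i(p) = c$ in both the $c > 0$ case and the $c = 0, P \neq \emptyset$ case, so $\sigma$ is unrealizable.

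The main subtlety is the closed-halfspace ($\geq$) convention, which makes a direct hyperplane-arrangement cell-count argument awkward in degenerate configurations (e.g., coinciding boundary hyperplanes, where boundary points can contribute $\geq$-patterns beyond those of the open cells). The linear-dependence argument above sidesteps this by producing an invariant identity $\sum_i \lambda_i f_i(\cdot) \equiv c$ that the chosen dichotomy must violate regardless of how degenerate the arrangement is; the only mild case analysis needed is the WLOG negation step above.
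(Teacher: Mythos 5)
Your proof is correct, and the lower bound is the same construction the paper uses ($H_{e_1,0},\dots,H_{e_d,0}$ shattered by the $\pm 1$ points). Your upper bound, however, takes a genuinely different route. The paper argues geometrically: it forms the hyperplane arrangement of the $d+1$ boundary hyperplanes, maps each cell to a dichotomy, and invokes the standard bound (Matousek, Proposition 6.1.1) that the arrangement has at most $\sum_{i=0}^{d}\binom{d+1}{i} < 2^{d+1}$ cells, so not all $2^{d+1}$ dichotomies can appear. You instead lift each halfspace to $\tilde v_i=(a_i,-w_i)\in\bbR^{d+1}$, use the linear dependence of $\tilde v_1,\dots,\tilde v_{d+1},e_{d+1}$ to obtain an identity $\sum_i \lambda_i f_i(p)\equiv c$, and exhibit one explicit sign pattern that this identity forbids. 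The paper's argument is shorter because it cites a known cell-count result; yours is self-contained and, as you note, it is also more robust to the closed-halfspace convention: the paper's cell-to-dichotomy map only accounts for points in open cells, and with the $\geq$ convention boundary points can realize patterns not realized by any open cell (e.g., two coinciding, oppositely oriented boundary hyperplanes give the all-ones pattern only on the boundary), so the paper's counting as literally stated needs a small patch in degenerate configurations, whereas your invariant-identity argument applies verbatim. Both establish the same bound of $d$.
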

The proof has two parts. Recall that for $a\in\mathbb{R}^d$ and $w\in\mathbb{R}$ we define the predicate $h_{a,w}(x) = \left(\langle a,x\rangle \geq w\right)$. Let $H_{a,w}$ be the halfspace $\{x\in\mathbb{R}^d \mid h_{a,w}(x) = 1\}$. 

\paragraph{1. There exists $d$ halfspaces that can be shattered.} Consider $d$ halfspaces $S=\{H_{e_1,0},\dots H_{e_d,0}\}$, where $e_i=0^{i-1}\times1\times0^{d-i}$. Note that for every $x \in \set{-1,1}^d$ and every $i \in [d]$: $H_{e_i,0} \in r_x \iff x_i = 1$, where recall that $r_x = \set{H_{a,w} \mid x \in H_{a,w}}$. Therefore, $S$ is shattered by $\set{r_x}_{x \in \set{-1,1}^d}$.

\paragraph{2. Every $d+1$ halfspaces cannot be shattered.} For any halfspace $H_{a,w}$, and any point $p$ on one side of the hyperplane $\{x \mid \ip{a,x}+w=0\}$, the function $r_p$ give the same label to $H_{a,w}$. Given $d+1$ halfspaces, consider the hyperplane arrangement of their corresponding hyperplanes. 
Each cell of the hyperplane arrangement can be uniquely projected to one dichotomy of $d+1$ halfspaces (see an example in Figure~\ref{fig:vc linear feasibility}). By Theorem~\ref{thm:number of cells in hyperplane arrangement}, the number of all possible dichotomies is at most $\sum_{i=0}^d\binom{d+1}{d}<2^{d+1}$. Thus $d+1$ halfspaces cannot be shattered.
\begin{figure}
\begin{center}
\includegraphics[scale=.5]{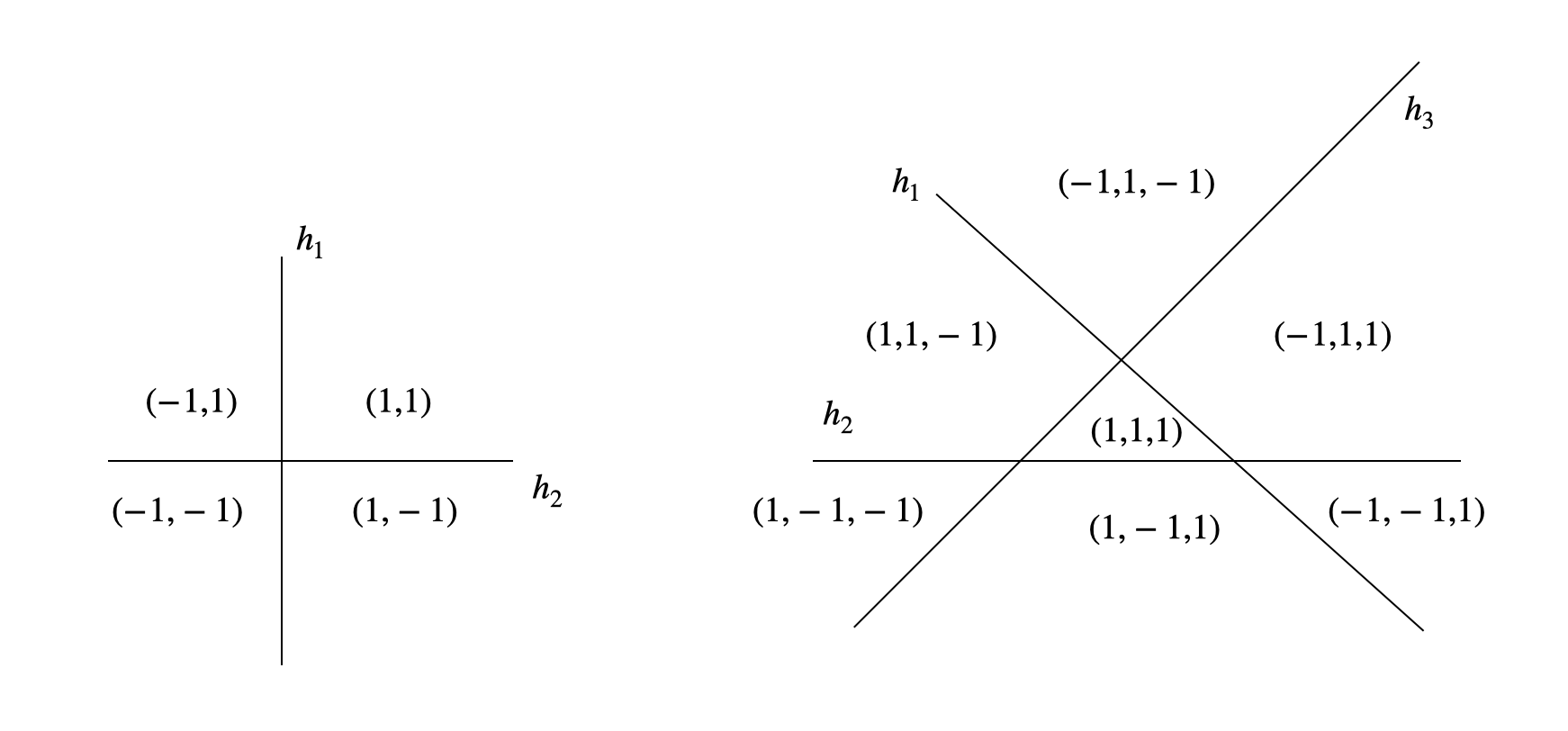}
\end{center}
\caption{2 halfspaces can be shattered (left) and 3 halfspaces cannot be shattered (right) in the 2-D plane \label{fig:vc linear feasibility}}
\end{figure}

\begin{theorem}[\cite{matousek2013lectures} Page 127, Proposition 6.1.1]\label{thm:number of cells in hyperplane arrangement}
    The maximum number of cells for $n$ hyperplanes with $n>d$ in $\mathbb{R}^d$ is $\sum_{i=0}^d\binom{n}{d}$.
\end{theorem}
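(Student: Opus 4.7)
The plan is to prove the claim by double induction on $n$ and $d$, using the standard ``deletion/sweep'' argument for hyperplane arrangements. Let $f(n,d)$ denote the maximum number of cells (full-dimensional connected components of the complement) in an arrangement of $n$ hyperplanes in $\mathbb{R}^d$. I will show both an upper bound $f(n,d)\leq \sum_{i=0}^{d}\binom{n}{i}$ and, for $n$ hyperplanes in general position, a matching lower bound.

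For the inductive step I would fix an arrangement $\mathcal{A}$ of $n$ hyperplanes in $\mathbb{R}^d$, single out one hyperplane $H \in \mathcal{A}$, and compare the arrangements $\mathcal{A}$ and $\mathcal{A}' = \mathcal{A}\setminus\{H\}$. Every cell of $\mathcal{A}$ is obtained from a cell $C$ of $\mathcal{A}'$ either (i) as $C$ itself if $C$ is disjoint from $H$, or (ii) by splitting $C$ into two pieces if $H$ meets the interior of $C$. Hence the number of extra cells created by re-inserting $H$ equals the number of cells of $\mathcal{A}'$ that are crossed by $H$, and this in turn equals the number of full-dimensional regions carved out on $H$ by the traces $H \cap H_j$ of the remaining hyperplanes. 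These traces form an arrangement of $n-1$ hyperplanes inside $H\cong \mathbb{R}^{d-1}$, so the number of extra cells is at most $f(n-1,d-1)$. This yields the recurrence
\[
f(n,d)\ \leq\ f(n-1,d)\;+\;f(n-1,d-1).
\]

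Next I would verify the closed form. Setting $g(n,d)=\sum_{i=0}^{d}\binom{n}{i}$, Pascal's identity gives
\[
g(n-1,d)+g(n-1,d-1)
=\sum_{i=0}^{d}\binom{n-1}{i}+\sum_{i=0}^{d}\binom{n-1}{i-1}
=\sum_{i=0}^{d}\binom{n}{i}=g(n,d),
\]
so $g$ satisfies the same recurrence. The base cases $f(0,d)=1=g(0,d)$ (one cell, namely all of $\mathbb{R}^d$) and $f(n,0)=1=g(n,0)$ (the ``arrangement'' in a point) both hold, so the upper bound follows by induction. For the matching lower bound I would take hyperplanes in \emph{general position} (every $d$ intersect in a single point, every $d+1$ have empty common intersection), observe that inserting the $k$-th hyperplane meets the previous $k-1$ in general position inside that hyperplane, and hence by induction contributes exactly $g(k-1,d-1)$ new cells; summing $1+\sum_{k=1}^{n} g(k-1,d-1)$ telescopes to $g(n,d)$.

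I do not expect any serious obstacle: the only care needed is the geometric justification that ``number of cells of $\mathcal{A}'$ crossed by $H$ equals number of full-dimensional regions of the induced $(d-1)$-dimensional arrangement on $H$'', which follows from the observation that the connected components of $H\setminus\bigcup_{j\neq H}(H\cap H_j)$ biject with the cells of $\mathcal{A}'$ meeting $H$ (each such component lies in exactly one cell of $\mathcal{A}'$). Since the paper only needs the upper bound (to conclude that $d+1$ halfspaces produce fewer than $2^{d+1}$ dichotomies), the inductive argument giving $f(n,d)\leq g(n,d)$ is the essential content, and the general-position lower bound can be stated briefly for completeness or cited from Matou\v{s}ek.
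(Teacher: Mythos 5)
Your proof is correct: the paper gives no argument of its own for this statement (it simply cites Matou\v{s}ek, Proposition 6.1.1), and your insertion/deletion induction with the recurrence $f(n,d)\leq f(n-1,d)+f(n-1,d-1)$, verified against $g(n,d)=\sum_{i=0}^{d}\binom{n}{i}$ via Pascal's identity, together with the general-position lower bound, is exactly the standard proof found in that reference. Note also that you have implicitly fixed the paper's typo: the summand in the stated bound should be $\binom{n}{i}$, not $\binom{n}{d}$, which is the version you prove and the one actually used (since $\sum_{i=0}^{d}\binom{d+1}{i}<2^{d+1}$).
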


\section{Huang et al. \cite{ACML-huang25b}'s Algorithm is Not Differentially Private}\label{sec:ACML}

Huang et al.~\cite{ACML-huang25b} present an algorithm that they claim is differentially private and learns halfspaces over examples from $\cX^d$ with sample complexity $\tilde{O}(d \cdot \log^* \size{\cX})$. In this section, we explain why their algorithm is inherently not differentially private. For simplicity, we even focus on their $2$-dimensional version (described in their Section 3.1).

Huang et al.~\cite{ACML-huang25b}'s 2-dimensional learning algorithm is focused on learning halfspaces that go through the origin. Each such halfspace can be uniquely represented by an angle $\phi \in [0,2\pi)$ (denote this halfspace by $h_{\phi}$), so their goal is to describe an algorithm $\Ac \colon (\cX^2 \times \set{-1,1})^* \rightarrow [0,2\pi)$ that is: (1) $(\eps,\delta)$-differentially private, and (2) Given a dataset $S \in (\cX^2 \times \set{-1,1})^n$ with $n \geq \tilde{\Theta}_{\alpha,\beta,\eps,\delta}(\log^* \size{\cX})$ that is realizable by some $h_{\phi}$ (unknown $\phi \in [0,2\pi)$), with probability $1-\beta$, $\Ac(S)$ outputs $\phi^*$ such that $\size{\set{(x,y) \in S \colon h_{\phi^*}(x) = y}} \geq (1-\alpha)n$ (i.e., $A$ is $(\alpha,\beta)$-empirical learner).

Huang et al.~\cite{ACML-huang25b} first define a finite discretization $\cH_{\gamma}$ of $[0,2\pi)$ with size $O(\size{\cX}^2)$ such that if $S \in \cX^d$ is realizable over $[0,2\pi)$, then it is also realizable over the grid $\cH_{\gamma}$.

\begin{algorithm}\label{alg:ACML:MakeData}
    \caption{MakeData}

    \textbf{Inputs:} $\eps > 0$, $\cH_{\gamma} \subseteq [0,2\pi)$, $S \in (\cX^2 \times \set{-1,1})^*$.

    \textbf{Operation:}~

    \begin{enumerate}
        \item $S_{\cH} \la \emptyset$.

        \item for $\phi \in \cH_{\gamma}$ do:

        \begin{enumerate}
            \item $n_{\phi} = \size{\set{(x,y) \in S \colon \size{\phi(x) - \phi} < \gamma \text{ and }h_{\phi}(x) = y}}$.

            \item Add $\max\set{\ceil{n_{\phi} + \Lap(1/\eps)},1}$ copies of $\phi$ to $S_{\cH}$.
        \end{enumerate}
        
        \item Return $S_{\cH}$.

    \end{enumerate}    
\end{algorithm}

\begin{algorithm}\label{alg:ACML:MakeThrData}
    \caption{MakeThrData}

    \textbf{Inputs:} $S_{\cH} \in ([0,2\pi])^*$, $S \in (\cX^2 \times \set{-1,1})^*$, $C \in \bbN$.

    \textbf{Operation:}~

    \begin{enumerate}
        \item Calculate $q(S,\phi) = \size{\set{(x,y) \in S \colon h_{\phi}(x) = y}}$ for every $\phi \in S_{\cH}$.
        \item Let $\max_C(S_{\cH})$ be the $C$ largest elements in $S_{\cH}$ according to the lexicographic order of $(q(S,\phi),\phi)$;
        \item Randomly select $\phi' \in S_{\cH}\setminus \max_C(S_{\cH})$ and rotate the coordinate so that $\phi' = 0$;
        \item Let $\phi^* \eqdef \argmax_{\phi \in \max_C(S_{\cH})}\set{q(S,\phi)}$.
        \item $S_{Thr} \la \emptyset$.
        \item For $\phi \in \max_C(S_{\cH})$ do
        \begin{enumerate}
            \item $y \la 1$ if $\phi \leq \phi^*$; otherwise, $y \la -1$.
            \item Add $(\phi,y)$ to $S_{Thr}$.
        \end{enumerate}
        return $S_{Thr}$.
    \end{enumerate}
    
\end{algorithm}

\begin{algorithm}\label{alg:A_{SimpleH}}
    \caption{$A_{SimpleH}$}
    \textbf{Inputs:} $\eps, \delta, \alpha, \beta > 0$, $S \in (\cX^2 \times \set{-1,1})^*$, $\gamma \in [0,2\pi)$, and $(\eps,\delta)$-differentially private $(\alpha,\beta)$-empirical learner $A_{Thr}$ that privately learn thresholds over $\cX_{Thr}$ with sample complexity $n_{Thr} = n_{Thr}(\cX_{Thr},\eps,\delta,\alpha,\beta)$.

    \textbf{Operation:}~

    \begin{enumerate}
        \item $\cH_{\gamma} \la Discretize(\gamma)$;
        \item $S_{\cH} \la MakeData(\eps,\cH_{\gamma},S)$;
        \item $S_{Thr} \la MakeThrData(S_{\cH}, S, n_{Thr})$;
        \item Apply $A_{Thr}$ with input $S_{Thr}$, parameters $\eps,\delta,\alpha,\beta$, and get $\phi^*$.
        \item Output $\phi^*$.
    \end{enumerate}
\end{algorithm}

\begin{theorem}[Theorem 14 in \cite{ACML-huang25b}]\label{thm:ACMLpaper}
    For any $\eps,\delta,\alpha,\beta \in (0,1)$, if there is an $(\eps,\delta)$-differentially private $(\alpha,\beta)$-empirical learner $A_{Thr}$ that learns thresholds on a finite domain $\cX_{Thr}$ with $n_{Thr}(\cX_{Thr},\eps,\delta,\alpha,\beta)$ samples, then with sample complexity 
    \begin{align*}
        n = O\paren{n_{Thr}(\cX_{Thr}, \eps/2, \delta,\alpha,\beta)},
    \end{align*}
    $A_{SimpleH}$ (Algorithm~\ref{alg:A_{SimpleH}}) is an $(\eps,\delta)$-differentially private $(\alpha,\beta)$-empirical learner for $2$-dimensional halfspaces.
\end{theorem}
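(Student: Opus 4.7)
The plan is to establish privacy by splitting the budget as $\varepsilon/2$ for $MakeData$ and $\varepsilon/2$ for $A_{Thr}$, and to argue that the intermediate step $MakeThrData$ can be absorbed as post-processing of quantities already released in a differentially private fashion. For utility, I would rely on the stated realizability assumption together with the empirical learning guarantee of $A_{Thr}$ applied to a well-structured threshold instance.

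The first step is to show that $MakeData(\varepsilon/2,\cH_{\gamma},S)$ satisfies $(\varepsilon/2,0)$-differential privacy. Neighboring datasets $S, S'$ differing in a single labeled point $(x,y)$ can only change $n_\phi$ for angles $\phi\in\cH_{\gamma}$ whose $2\gamma$-window contains $\phi(x)$, so the affected coordinates form a contiguous block of the grid. Since every affected count changes by at most one and each count is independently perturbed by $\Lap(2/\varepsilon)$ after the budget split, the plan is to apply either parallel composition, when the affected block is disjoint across coordinates touched by the differing point, or group privacy on the small affected block. The released multiset $S_{\cH}$ has size proportional to the noisy counts and is therefore post-processing of the noisy-count vector, preserving $(\varepsilon/2,0)$-DP.

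The second step combines $MakeThrData$ with the call to $A_{Thr}$. I would try to argue that, conditioned on the output $S_{\cH}$ of $MakeData$, the intermediate quantities in $MakeThrData$---the lexicographic top-$C$ selection, the pivot $\phi^*$, the random rotation index $\phi'$, and the derived labels of $S_{Thr}$---are consumed only as inputs to the $(\varepsilon/2,\delta)$-DP learner $A_{Thr}$. If one can show that $MakeThrData$'s internal use of $S$ affects the final output only through $A_{Thr}(S_{Thr})$, and that $A_{Thr}$ is DP in its input dataset under the neighboring relation lifted to $S_{Thr}$, then basic composition with budgets $\varepsilon/2+\varepsilon/2$ yields overall $(\varepsilon,\delta)$-DP. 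Utility would then follow because realizability guarantees that the true halfspace angle is duplicated many times in $S_{\cH}$, so $\max_C(S_{\cH})$ concentrates on a small arc and the induced threshold problem on $S_{Thr}$ is itself realizable; applying the $(\alpha,\beta)$-empirical guarantee of $A_{Thr}$ would then deliver an angle $\phi^*$ that misclassifies at most an $\alpha$-fraction of $S$.

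The main obstacle, and the step I expect to be the hardest to justify, is precisely the second one: the function $q(S,\phi)=|\{(x,y)\in S \colon h_\phi(x)=y\}|$ is evaluated on the raw dataset $S$ without any noise, so the lexicographic top-$C$ set, the maximizing pivot $\phi^*$, and hence the labels of every element of $S_{Thr}$ depend on $S$ with non-trivial sensitivity. This means $MakeThrData$ is not post-processing of $S_{\cH}$, and the induced map $S \mapsto S_{Thr}$ does not preserve the neighboring relation in a way that would let the privacy guarantee of $A_{Thr}$ transfer back to the raw $S$---a single change in $S$ may reshuffle the top-$C$ set and flip the labels of many entries simultaneously. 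To close the argument, one would have to either replace $q$ by a noised surrogate and charge additional privacy budget, or exploit the realizability assumption to prove a high-probability stability of $\max_C(S_{\cH})$ and of the pivot under one-point changes, and then invoke an appropriate stability-to-DP transformation. Without such a modification, the composition strategy sketched above does not close, which is exactly the delicate point that the surrounding discussion flags.
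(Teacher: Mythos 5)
Your instinct not to certify this proof is exactly right, but the situation is stronger than a gap you could hope to patch: the statement is false, and the paper includes it only in order to refute it. The obstacle you isolated in your final paragraph --- that $MakeThrData$ evaluates $q(S,\phi)$ on the raw dataset without noise, so the top-$C$ selection, the pivot, and all the labels of $S_{Thr}$ depend on $S$ with unbounded effective sensitivity, and hence the step is neither post-processing of $S_{\cH}$ nor a neighboring-preserving map into the input of $A_{Thr}$ --- is precisely the flaw the paper identifies in Huang et al.'s privacy analysis. Where the paper goes further is that it closes the door on your suggested rescues (noised surrogates aside, since those change the algorithm): it exhibits an explicit counterexample. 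Take $S$ consisting of $n/2+1$ copies of $((1,0),-1)$ and $n/2-1$ copies of $((-1,0),-1)$, and $S'$ obtained by moving one point from the first group to the second. Then for every grid angle $\phi\in(0,\pi)$ one has $q(S,\phi)=n/2+1$ and $q(S',\phi)=n/2-1$, with the situation reversed on $(\pi,2\pi)$, so $S_{Thr}$ consists entirely of angles in $(0,\pi)$ under $S$ and entirely of angles in $(\pi,2\pi)$ under $S'$; the final outputs lie in disjoint arcs and the two executions are perfectly distinguishable, so no $(\eps,\delta)$ with $\delta<1$ can hold.

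This also rules out your proposed fallback of proving ``high-probability stability of $\max_C(S_{\cH})$ and of the pivot'' under realizability: differential privacy is a worst-case guarantee over all neighboring inputs, so non-realizable inputs already kill the claim, and the paper notes in a footnote that realizable counterexamples exist as well. So the correct conclusion is not ``the composition strategy does not close'' but ``Algorithm~\ref{alg:A_{SimpleH}} is not differentially private, hence Theorem~\ref{thm:ACMLpaper} is false.'' Your utility sketch is moot for the same reason. If you want to turn your write-up into something matching the paper, replace the attempted proof by a concrete pair of neighboring datasets on which the outputs are disjoint, which is a much shorter and more decisive argument than a stability analysis.
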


In the following, we prove that Theorem~\ref{thm:ACMLpaper} is wrong by showing that Algorithm~\ref{alg:A_{SimpleH}} is blatantly not differentially private.

\paragraph{Counterexample to Theorem~\ref{thm:ACMLpaper}.}

Let $S$ be a dataset that consists of $n/2+1$ copies of $((1,0),-1)$ (i.e., the point $(1,0)$ with a label $-1$), and $n/2-1$ copies of $((-1,0),-1)$, and let $S'$ be the neighboring dataset that is obtained by replacing one of the $((1,0),-1)$ in $S$ with $((-1,0),-1)$. We remark that although $S$ and $S'$ are not realizable (i.e., no halfspace agree on the labeled points), differential privacy is a \emph{worst-case} guarantee and so $A_{SimpleH}(S)$ (with privacy parameters $\eps,\delta$) should be $(\eps,\delta)$-indistinguishable from $A_{SimpleH}(S')$ by Theorem~\ref{thm:ACMLpaper}.\footnote{We remark that it is also not hard to determine realizable datasets $S,S'$ that break the privacy guarantee of $A_{SimpleH}$, but we chose the unrealizable ones as they make the arguments cleaner and simpler.}

Let $S_{\cH}$, $S_{Thr}$, and $\phi^*$ be the values computed in the execution $A_{SimpleH}(S)$, and let $S_{\cH}'$, $S_{Thr}'$, and $\phi'^*$ be corresponding values in the execution of $A_{SimpleH}(S')$. Note that both $S_{\cH}$ and $S_{\cH}'$ contain at least $1$ copy of every angle in the grid $\cH_{\gamma}$. Furthermore, note that in algorithm $MakeThrData$, for any grid angle $\phi \in (0,\pi)$ we have $q(S,\phi) = n/2 + 1$ and $q(S',\phi) = n/2 - 1$, and for any grid angle $\phi \in (\pi,2\pi)$ we have $q(S,\phi) = n/2 - 1$ and $q(S',\phi) = n/2 + 1$. Therefore, the output $S_{Thr}$ of $MakeThrData(S_{\cH},S)$ is a dataset of angles in $(0,\pi)$, and the output $S_{Thr}'$ of $MakeThrData(S_{\cH}',S')$ is a dataset of angles in $(\pi,2\pi)$. Thus, the output $\phi^*$ of $A_{Thr}(S_{\cH})$ is in $(0,\pi)$, and the output $\phi'^{*}$ of $A_{Thr}(S_{\cH}')$ is in $(\pi,2\pi)$, and thus we conclude that $A_{SimpleH}(S)$ and $A_{SimpleH}(S')$ are clearly distinguishable.

\paragraph{What is wrong in \cite{ACML-huang25b}'s Privacy Analysis?}

The main issue in \cite{ACML-huang25b}'s privacy analysis is the (wrong) argument that the composition of $MakeThrData$ on top of $MakeData$ is ``differentially private'' just because $MakeData$ is differentially private, which is wrong since $MakeThrData$ also uses the input dataset. As we demonstrated in our counter example, this composition is clearly not differentially private, and in fact it can result in completely disjoint outputs on neighboring datasets.

\end{document}